\documentclass[twocolumn,floatfix,aps,pra,superscriptaddress]{revtex4-2}
\usepackage[dvipsnames]{xcolor}
\usepackage{graphicx}
\usepackage{float}
\usepackage{dcolumn}
\usepackage{bm}
\usepackage{amsfonts,amssymb,amscd,amsmath,amsthm}
\usepackage{mathrsfs}
\usepackage[ruled,linesnumbered]{algorithm2e}
\usepackage{algpseudocode}
\usepackage{enumerate}
\usepackage{epsfig}
\usepackage{physics}
\usepackage{makecell}
\usepackage{bbm}
\usepackage{svg}
\usepackage{tikz}
\usepackage[colorlinks = true]{hyperref}
\hypersetup{colorlinks=true, linkcolor=blue, citecolor=blue, urlcolor=black}
\usepackage{amsmath, amsthm, amsfonts, amssymb, amsbsy, mathtools, xcolor, bm, bbm, graphicx, changepage, cleveref}

\usepackage{physics}
\usepackage{epstopdf}
\usepackage{framed}
\usepackage{multirow}
\usepackage{color}
\usepackage{longtable}
\usepackage{comment}
\usepackage{qcircuit}
\usepackage{faktor}
\usepackage[normalem]{ulem}
\usepackage[caption=false, justification=centerlast]{subfig}

\makeatletter
\newtheorem*{rep@theorem}{\rep@title}
\newcommand{\newreptheorem}[2]{%
\newenvironment{rep#1}[1]{%
 \def\rep@title{#2 \ref{##1}}%
 \begin{rep@theorem}}%
 {\end{rep@theorem}}}
\makeatother

\newtheorem{remark}{Remark}
\newtheorem{theorem}{Theorem}
\newtheorem{proposition}{Proposition}
\newtheorem{observation}{Observation}
\newtheorem{lemma}{Lemma}

\newtheorem*{assumption*}{Assumption}
\newtheorem{corollary}{Corollary}
\newtheorem{definition}{Definition}

\usepackage{lineno}

\renewcommand{\d}{\mathrm{d}}

\newcommand{\ii}{\textup{i}}

\begin{document}

\title{Observable-Driven Speed-ups in Quantum Simulations}
\author{Wenjun Yu}
\author{Jue Xu}
\author{Qi Zhao}
\email{zhaoqi@cs.hku.hk}
\affiliation{QICI Quantum Information and Computation Initiative, Department of Computer Science, The University of Hong Kong, Pokfulam Road, Hong Kong}
\begin{abstract}
    As quantum technology advances, quantum simulation becomes increasingly promising, with significant implications for quantum many-body physics and quantum chemistry. 
    Despite being one of the most accessible simulation methods, the product formula encounters challenges due to the pessimistic gate count estimation.
    In this work, we elucidate how observable knowledge can accelerate quantum simulations.
    By focusing on specific families of observables, we reduce product-formula simulation errors and gate counts in both short-time and arbitrary-time scenarios. 
    For short-time simulations, we deliberately design and tailor product formulas to achieve size-independent errors for local and certain global observables. 
    In arbitrary-time simulations, 
    we reveal that Pauli-summation structured observables generally
    reduce average errors.
    Specifically, we obtain quadratic error reductions proportional to the number of summands for observables with evenly distributed Pauli coefficients. 
    Our advanced error analyses, supported by numerical studies, indicate improved gate count estimation. 
    We anticipate that the explored speed-ups can pave the way for efficiently realizing quantum simulations and demonstrating advantages on near-term quantum devices.
\end{abstract}

\maketitle

Quantum simulation, aimed at mimicking the temporal evolution of quantum systems, stands as one of the most promising applications of quantum computers~\cite{feynman1982simulating}. 
It serves as a reliable tool for investigating quantum many-body physics~\cite{noh2016quantum,schreiber2015observation,randall2021many,su2023observation,tran2019locality,heylQuantumLocalizationBounds2019} and quantum field theory~\cite{jordanQuantumAlgorithmsQuantum2012, jordanBQPcompletenessScatteringScalar2018}.
Moreover, it holds profound implications in various fields that seek to uncover the dynamics of systems, including materials science and quantum chemistry~\cite{babbushLowDepthQuantumSimulation2018, mcardleQuantumComputationalChemistry2020, leeEvaluatingEvidenceExponential2023}.

Since Lloyd's pioneering proposal for digital quantum simulation~\cite{lloyd1996universal}, significant efforts have been devoted to developing more efficient simulation algorithms, including families of product formulas~\cite{aharonov2003adiabatic,childs2004quantum,berry2007efficient,somma2016trotter}, linear combination of unitaries (LCU)~\cite{childsHamiltonianSimulationUsing2012,berry2014exponential,Berry2015,berry2015hamiltonian,haah2021quantum}, and quantum signal processing (QSP)~\cite{Low2017optimal,low2019hamiltonian}. 
Both LCU and QSP approaches achieve linear gate counts in terms of simulation time, scaling optimally as inferred in~\cite{berry2007efficient}.
However, these methods demand numerous ancillary qubits for controlling or block encoding, requiring high-quality multi-qubit gates, which are challenging in near-term quantum computing.
In contrast, product-formula simulation involves only short-range quantum gates and empirically requires fewer resources~\cite{childs2018toward}, making it an accessible choice for realizing simulations in experiments~\cite{Brown2006,lanyon2011universal,lv2018quantum,jafferisTraversableWormholeDynamics2022,kimEvidenceUtilityQuantum2023} and achieving quantum advantages in the future.

While existing product-formula methods are nearly optimal for certain Hamiltonians~\cite{childs2019nearly,childs2021theory}, the associated error bounds typically emphasize the worst-case scenarios for states and measurements.
This pessimism renders the resource requirements far surpassing current experimental capabilities~\cite{childs2018toward}.
However, practical settings, especially measurements, 
are often chosen deliberately, enabling further improvements in error analyses and resource estimations as shown in Fig.~\ref{fig:idea}(a).
Specific measurement structures are prevalently adopted in simulation due to their simplicity or direct relevance to the underlying physics, such as local observables in studying system scramblings~\cite{li2017measuring,garttner2017measuring,landsmanVerifiedQuantumInformation2019,green2022experimental}, and summation observables in magnetization and correlation functions~\cite{simon2011quantum,martinezRealtimeDynamicsLattice2016, monroeProgrammableQuantumSimulations2021, jafferisTraversableWormholeDynamics2022, kimEvidenceUtilityQuantum2023}.
Recent work~\cite{childs2021theory,heylQuantumLocalizationBounds2019} has depicted these observable-driven advantages in product-formula simulation for some special cases.
Nevertheless, there is a missing of a thorough exploration of quantum simulation speed-ups by utilizing specific knowledge of observables.

\begin{figure*}[tb]
    \centering
    \includegraphics[width=1.95\columnwidth]{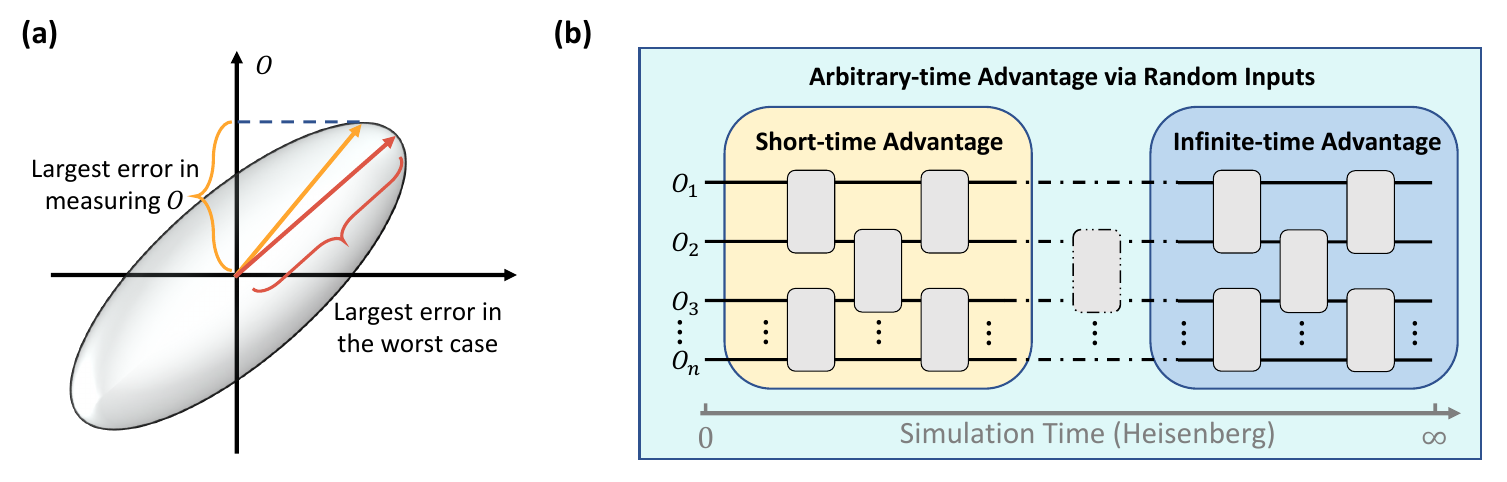}
    \caption{Illustrative diagrams of observable-driven simulation speed-ups.
    \textbf{(a)} Schematic illustrating the benefits of utilizing observable knowledge to analyze simulation errors.
   The operator-norm distance between unitaries captures the largest error from the worst-case scenario among states and observables (red bracket). Using observable knowledge can significantly reduce simulation errors (orange bracket).
    \textbf{(b)} Schematic of advantages at different simulation times within Heisenberg's picture.
    In this work, short-time advantages, previously analyzed in~\cite{childs2021theory} for local observables, are extended to more general observables.
    Infinite-time advantages of local observables were observed in~\cite{heylQuantumLocalizationBounds2019}.
    Beyond these two special cases, we also explore the observable-driven advantages in the general arbitrary-time simulation by examining average performances with random input states.
    }
    \label{fig:idea}
\end{figure*}

In this work, we comprehensively analyze product-formula simulations by leveraging observable knowledge, uncovering its advantages for both short- and arbitrary-time simulation.
For short-time simulations, we design specific product formulas for different observables.
Notably, our methods and analyses extend the advantage beyond local observables in~\cite{childs2021theory} to certain global-observable simulations. 
For specific Hamiltonians, our analyses suggest size-independent simulation errors for both local and global observables.
For arbitrary simulation times, we evaluate the state-independent observable-driven speed-ups by averaging over random input states.
We show that observables consisting of a sum of Pauli operators generally reduce average errors compared to previous analyses without observable knowledge~\cite{zhao2022hamiltonian,chen2024average}, resulting in the typical quadratic reductions with the number of summands.
In both scenarios, improved error analyses imply better gate counts than previously expected, which we anticipate will facilitate the practical advantages of quantum simulations.
We validate these theoretical results and demonstrate speed-ups in practical application through our numerical studies.

Our short-time simulation analyses reveal that the locality of observables generally provides a speed-up.
For an arbitrary local observable, we delineate its support expansion in a generic product-formula circuit within Heisenberg's picture.
Based on this, we propose an optimally decomposed and permuted product-formula circuit to limit the expanding support at the slowest rate.
Removing all irrelevant gates outside the support results in size-independent simulation errors and gate counts for geometrically local and power-law Hamiltonians.
This method achieves lower errors than the worst-case analysis in~\cite{childs2021theory} for sublinear simulation times related to the system size. 
Locality from global observables consisting of a sum of local observables also leads to short-time advantages.
To view this, we design a product formula that simultaneously suppresses the expanding supports of all local summand observables, ensuring size-independent errors through a similar support analysis.
These improved error analyses suggest more efficient simulation, benefiting practical applications such as observing dynamical quantum phase transitions~\cite{heylDynamicalQuantumPhase2013, heylDynamicalQuantumPhase2018, denicolaEntanglementViewDynamical2021, halimehLocalMeasuresDynamical2021} and preparing entangled states \cite{PhysRevX.8.021012,zhou2022scheme}.

Considering our short-time analyses and the previous infinite-time results~\cite{heylQuantumLocalizationBounds2019}, a natural question arises as in Fig.~\ref{fig:idea}(b): do observables offer advantages for arbitrary-time simulations? 
Although a general analysis remains challenging, we shift our focus to random-input simulations to isolate state effects and identify speed-ups from observable knowledge.
Using observable knowledge, our new random-input error bound is linear with the observable's normalized Schatten 2-norm, significantly improved upon previous analyses that relied on the operator norm~\cite{zhao2022hamiltonian}.
Our error analysis generally guarantees error reductions for observables consisting of a sum of multiple Pauli operators (Pauli-summation observables). 
Specifically, we prove quadratic reductions in average errors proportional to the number of summands for observables with evenly distributed Pauli coefficients. 
Moreover, the efficiency of computing 2-norms makes our analysis accessible even for complicated observables and Hamiltonians.
Therefore, for more complicated dynamics and measurements, such as tasks in molecular systems, our analysis generally leads to more significant speed-ups.

\section*{Results}\label{sec:re}

\subsection{Observable Advantages for Short-Time Simulations}
In this section, we investigate the advantages of short-time product-formula simulation on specific observables.
We denote the support of an arbitrary operator $A$ as $S(A)$.
Throughout the remainder of the paper, ``local operator" refers to the operators acting nontrivially on only a constant number of qubits, \emph{i.e.}, $|S(A)|=\order{1}$.
For both a local observable and a global observable composed of a sum of local operators, we introduce optimal product formulas that suppress the expansion of observable supports, known as the \emph{light cone}, resulting in size-independent simulation errors. 
In this sense, our findings indicate that the locality of observables generally endows better errors for short-time simulation.

To illustrate, we present the general form of a product formula for simulating a generic $H$,
\begin{gather}\label{eq:product}
    \mathscr{S}(t)=\prod_{\upsilon=1}^\Upsilon\prod_{\gamma=1}^{\Gamma}\mathrm{e}^{\ii ta_{(\upsilon,\gamma)}H_{\pi_\upsilon(\gamma)}},
\end{gather}
with decomposition $H=\sum_{\gamma=1}^\Gamma H_\gamma$, and permutation $\pi_\upsilon$, collectively denoted by \emph{configuration}.
Here, $\Upsilon$ denotes the stage number, and $\{a_{(\upsilon,\gamma)}\}$ are some real coefficients.
For a long time $t$, the conventional approach for simulations is to divide the time into $r$ small steps and use $\mathscr{S}^r(t/r)$ to approximate the ideal evolution. 

\paragraph*{Local Observables.---}To explore the advantage of a local observable $O$, we need a proper configuration to decide the product formula.
We introduce the corresponding \emph{interactive decomposition} and \emph{edge sets} regarding its support $S$ (shorthand for $S(O)$ due to its frequent use).
Given the Pauli decomposition $H=\sum\nolimits_{\alpha\in{\sf P}^n}s_\alpha P_\alpha$, we define
\begin{align}\label{eq:edge_set}
\begin{split}
    H_0^S\coloneqq\sum_{\alpha:S(P_\alpha)\subseteq S}s_\alpha P_\alpha,&\ \ \ E_0^S\coloneqq S,\\
    H^{S}_k\coloneqq \sum_{\substack{\alpha:P_\alpha\notin H^{S}_{k-1}\\S(P_\alpha)\cap E^{S}_{k-1}\neq\emptyset}}s_\alpha P_\alpha,&\ 
    E^{S}_k\coloneqq S(H^{S}_k)-E^{S}_{k-1},
\end{split}
\end{align}
where $P_\alpha\notin H^{S}_{k-1}$ indicates that $P_\alpha$ is not included in $H^{S}_{k-1}$.
Figure~\ref{fig:edge}(a) exhibits a schematic of $\{H_k^S\}$ and $\{E_k^S\}$.
\begin{figure*}[tb]
    \centering
    \includegraphics[width=1.96\columnwidth]{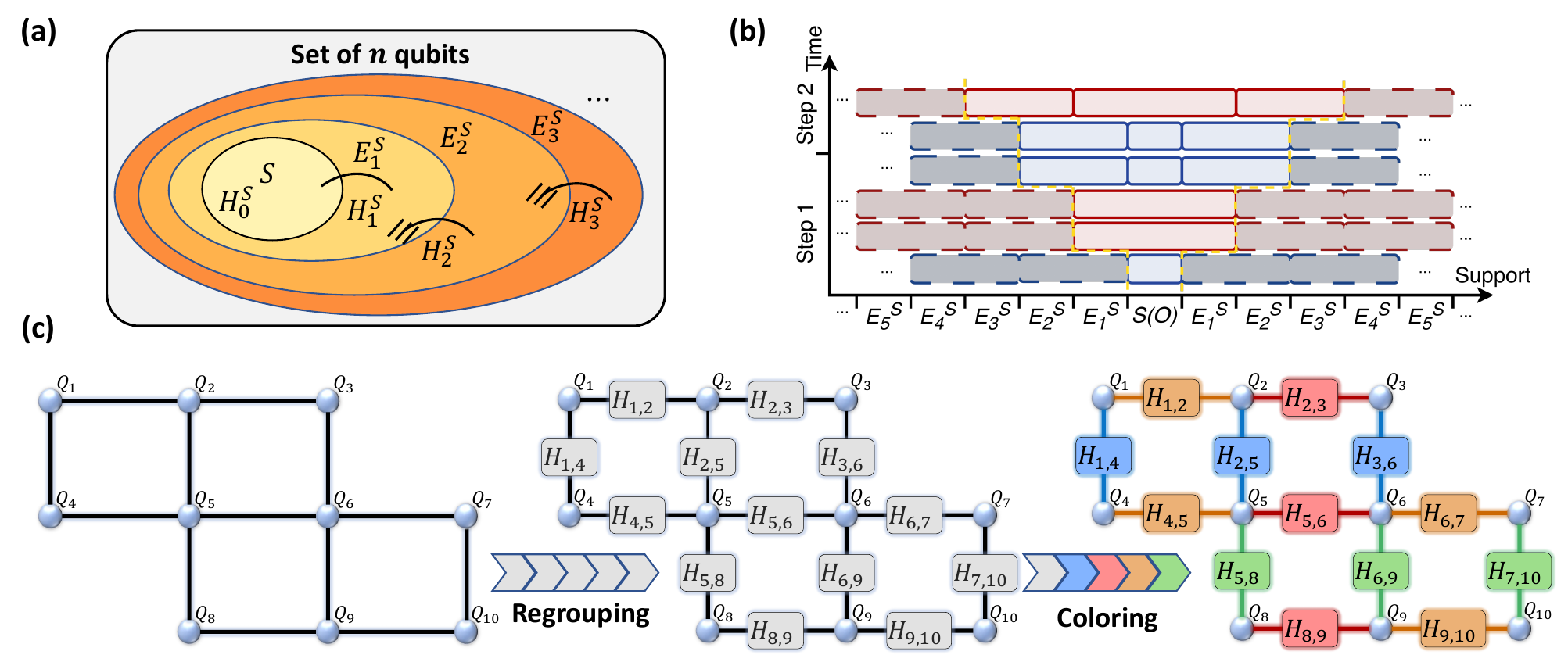}
    \caption{
    Illustrative diagrams of designing product formulas in different scenarios for short-time simulations.
        \textbf{(a)} A general depiction of the edge-set partition and the corresponding interactive decomposition regarding $S$.
        Layers signify edge sets $\{E^{S}_{k}\}_k$, with the line between $E^{S}_{k-1}$ and $E^{S}_{k}$ representing sub-Hamiltonian interactions $H^{S}_{k}$ for integer $k\geq1$.
        \textbf{(b)} Visualizing the support expansion of the operator $O(t)$ in the second-order Trotter formula circuit.
        Here, we adopt the interactive decomposition and even-odd permutation. 
    Each colored block represents the matrix exponential, with 
    blue denoting sub-Hamiltonians with even subscripts and red denoting odd ones.
    Shaded blocks indicate unitaries outside the expanding support of $O(t)$.
    We only need to implement the effective (bright) blocks, as outlined in Alg.~\ref{alg:rpf}.
    \textbf{(c)} Illustration of the regrouping and coloring procedure for a two-dimensional nearest-neighbor lattice Hamiltonian.
    The regrouping preserves nearest-neighbor interactions, with graph edges corresponding to lattice edges.
    Edges are colored based on their parities along the vertical and horizontal axes. 
    }
    \label{fig:edge}
\end{figure*}

We find that edge sets determine lower bounds for the expanding support of $O(t)$ evolved under product formulas. 
This lower bound can be attained by employing the formula with the interactive decomposition $\{H_k^S\}_k$ and an even-odd permutation defined as:
\begin{align}\label{eq:eo}
    \pi^{\text{eo}}_\upsilon(0,1,2,3,\cdots)=
    \begin{cases}
        0,2,\cdots, 1,3,\cdots & \text{$\upsilon$ is odd}\\
         1,3,\cdots,0,2,\cdots & \text{$\upsilon$ is even}
    \end{cases}.
\end{align}
To elucidate this claim, we define an operation $\uplus$ between a unitary $U$ and the support $S$ of $O$ as follows:
\begin{gather}\label{eq:oplus}
    U\uplus S\coloneqq \begin{cases}
    S\cup S(U)& S\cap S(U)\neq\emptyset\\
    S& \text{otherwise}
    \end{cases}.
\end{gather}
This operation estimates the largest possible support of $UOU^\dag$.
We then introduce a lemma regarding the expanding support of an evolved operator, with the proof sketched in Methods and formally presented in Appendix~\ref{sec:append-short}.
\begin{lemma}[Optimal Configuration]\label{lm:optimal}
Consider an operator $O$ with support $S$, an $n$-qubit Hamiltonian operator $H$, and a $\Upsilon$-stage product formula as in Eq.~\eqref{eq:product} with an arbitrary decomposition and permutation.
The expansion of support estimation is lower bounded as 
\begin{gather*}\label{eq:worst-case-prod}
\left(\biguplus_{\upsilon=1}^{\Upsilon}\biguplus_{\gamma=1}^{\Gamma}\mathrm{e}^{\ii ta_{(\upsilon,\gamma)}H_{\pi_\upsilon(\gamma)}}\uplus S\right)\supseteq \bigcup_{k=0}^\Upsilon E_k^S,
\end{gather*}
where equality holds with the decomposition in Eq.~\eqref{eq:edge_set} and permutation in Eq.~\eqref{eq:eo}.
\end{lemma}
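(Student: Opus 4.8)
The plan is to prove the two inclusions separately: a universal lower bound $\supseteq\bigcup_{k=0}^{\Upsilon}E_k^S$ valid for every configuration, and a matching $\subseteq$ attained by the interactive decomposition of \eqref{eq:edge_set} together with the even--odd permutation \eqref{eq:eo}. I read the iterated operation as Heisenberg-picture bookkeeping: starting from $S$, each factor $U=\mathrm{e}^{\ii ta_{(\upsilon,\gamma)}H_{\pi_\upsilon(\gamma)}}$ replaces the running support $S'$ by $U\uplus S'$ per \eqref{eq:oplus}, so the support never shrinks and each stage $\upsilon$ is one full sweep over all $\Gamma$ pieces in the order prescribed by $\pi_\upsilon$. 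I will use one structural fact about the edge sets: assigning each Pauli $P_\alpha$ to the \emph{first} layer $k$ for which $S(P_\alpha)\cap E_{k-1}^S\neq\emptyset$ turns \eqref{eq:edge_set} into a genuine partition in which every term of $H_k^S$ lives on two consecutive shells, $S(H_k^S)\subseteq E_{k-1}^S\cup E_k^S$. This holds because a term reaching an earlier shell would already have been collected at a smaller index, while its genuinely new qubits belong to $E_k^S$ by definition; moreover the shells $\{E_k^S\}_k$ are pairwise disjoint.

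For the lower bound I induct on the stage count $m$, claiming that after $m$ stages the running support of any configuration contains $T_m:=\bigcup_{k=0}^{m}E_k^S$. The base case is $S=E_0^S=T_0$. For the step, fix $q\in E_{m+1}^S$; by \eqref{eq:edge_set} some Pauli $P_\alpha$ of $H_{m+1}^S$ satisfies $q\in S(P_\alpha)$ and $S(P_\alpha)\cap E_m^S\neq\emptyset$. In an arbitrary decomposition $P_\alpha$ belongs to some group $H_\gamma$, so $S(H_\gamma)\supseteq S(P_\alpha)$ meets $E_m^S\subseteq T_m$. Since the factor $\mathrm{e}^{\ii ta H_\gamma}$ is applied once in the sweep and the running support contains $T_m$ throughout, that application triggers the first branch of \eqref{eq:oplus} and absorbs all of $S(H_\gamma)$, in particular $q$. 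As $q\in E_{m+1}^S$ was arbitrary, after the $(m+1)$-st stage the support contains $T_{m+1}$, which closes the induction and yields the stated lower bound at $m=\Upsilon$.

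For equality I show that the interactive decomposition with $\pi^{\mathrm{eo}}$ keeps the support equal to $T_m$ after $m$ stages; with the lower bound this forces $\bigcup_{k=0}^{\Upsilon}E_k^S$. Inducting again on $m$, suppose the support equals $T_m$ entering stage $j=m+1$. By the two-shell containment and shell disjointness, a block $H_k^S$ can enlarge $T_m$ only for $k=m+1$: blocks with $k\le m$ already lie inside $T_m$, and blocks with $k\ge m+2$ are supported on shells disjoint from $T_m$. The rule \eqref{eq:eo} runs the parity opposite to $m+1$ first and the parity of $m+1$ second, so the unique frontier-advancing block $H_{m+1}^S$ sits in the second half. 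During the first half the support is still $T_m$, so every opposite-parity block acts trivially; crucially this includes $H_{m+2}^S$, the only term that could chain through the yet-to-appear shell $E_{m+1}^S$, which is thereby retired before the frontier moves. Entering the second half at support $T_m$ and processing it in increasing index order, the same-parity blocks with $k\le m-1$ stay inside $T_m$, $H_{m+1}^S$ enlarges the support by exactly $E_{m+1}^S$, and every later same-parity block $H_{m+3}^S,H_{m+5}^S,\dots$ is supported on shells disjoint from $T_{m+1}$ and does nothing. Hence the support is exactly $T_{m+1}$, and iterating to $m=\Upsilon$ completes the equality.

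The main obstacle is this no-over-spreading step, and in particular isolating which term threatens it: within a single sweep the only way to leak past layer $m+1$ is to chain $H_{m+1}^S$ into $H_{m+2}^S$, and the content of the even--odd construction is precisely that $H_{m+2}^S$ carries the opposite parity and is therefore executed before $H_{m+1}^S$ advances the frontier. Both ingredients are indispensable here: the two-shell containment $S(H_k^S)\subseteq E_{k-1}^S\cup E_k^S$ guarantees that only adjacent shells can interact in one step, and the parity alternation guarantees the chaining partner is always retired first. By comparison, the lower bound and the containment fact are routine consequences of the breadth-first definition of the edge sets. Finally, the argument shows that in each stage the only block acting nontrivially is $H_{m+1}^S$, so the ordering within each half-sweep is immaterial and the conclusion is robust to the residual freedom in $\pi^{\mathrm{eo}}$.
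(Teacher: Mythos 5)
Your proof is correct and follows essentially the same route as the paper's: the lower bound via induction on stages, using that every Pauli term of $H$ must appear in some $H_\gamma$ of an arbitrary decomposition together with monotonicity of $\uplus$, and equality by tracking the support stage-by-stage under the interactive decomposition with even--odd ordering, showing each stage absorbs exactly one new shell. The only difference is presentational: you explicitly establish and use the two-shell containment $S(H_k^S)\subseteq E_{k-1}^S\cup E_k^S$ and pairwise disjointness of the shells (and pinpoint $H_{m+2}^S$ as the unique chaining threat neutralized by the parity ordering), facts the paper asserts without proof in its chain of equalities.
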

Coincidentally, this optimal configuration satisfies the back-and-forth permutation constraint of the celebrated Suzuki-Trotter formula.
Adopting $\{a_{(\upsilon,\gamma)}\}$ and $\Upsilon$ from~\cite{suzuki1991general}, we construct a standard $p$th-order Suzuki-Trotter formula from this configuration with $p$ being a positive even integer.
This formula includes some unitaries disjoint with the light cone of $O(t)$, which are irrelevant to the simulation since they commute with the evolved observable.
Removing these irrelevant unitaries preserves the simulation but simplifies it.
Figure~\ref{fig:edge} (b) depicts the light cone and the corresponding reduced product formula, implying a smaller gate count.
We summarize this reduced circuit in Alg.~\ref{alg:rpf}.
In addition to the gate reduction, erroneous terms outside the light cone are irrelevant to the simulation. 
Consequently, the new error bound is linear with the ``width" of the light cone and independent of the system size $n$, as stated in Thm.~\ref{thm:single}.

\begin{theorem}[Local-Observable Error]\label{thm:single}
Consider a local observable $O$ with support $S$.
Suppose the $n$-qubit $H$ is $\ell$-local with a constant $\ell$ and 
has bounded interaction per qubit. 
With light-cone width $\textstyle w_r\coloneqq \sum\nolimits_{k=0}^{r\Upsilon+1}\|H_k^S\|_1$, the simulation error of $O$ by an $r$-step $p$th-order $U$ from Alg.~\ref{alg:rpf} is bounded by $$\|\mathrm{e}^{\ii Ht}O\mathrm{e}^{-\ii Ht}-U OU^\dagger\|=\order{\frac{\|O\|w_rt^{p+1}}{r^{p}}}.$$ 
\end{theorem}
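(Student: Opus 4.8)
The plan is to work entirely in the Heisenberg picture and to convert the usual size-dependent Trotter error into a size-independent one by exploiting the strict light cone guaranteed by Lemma~\ref{lm:optimal}. First I would record the key consequence of that lemma: after $j$ steps of the product formula, the Heisenberg-evolved observable is supported inside $\bigcup_{k=0}^{j\Upsilon}E_k^S$. This immediately implies that deleting the gates outside the light cone leaves the conjugation of $O$ unchanged, so that $UOU^\dagger=U_H O U_H^\dagger$, where $U_H$ is the full (un-reduced) $r$-step $p$th-order product formula. Hence it suffices to bound $\|\mathrm{e}^{\ii Ht}O\mathrm{e}^{-\ii Ht}-U_H O U_H^\dagger\|$, and the passage to the reduced circuit $U$ from Alg.~\ref{alg:rpf} is purely cosmetic from here on.

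Next I would telescope the exact and approximate Heisenberg channels step by step. Writing $\tau=t/r$, both $\mathrm{e}^{\ii Ht}(\cdot)\mathrm{e}^{-\ii Ht}$ and $U_H(\cdot)U_H^\dagger$ factor into $r$ identical single-step channels, so their difference applied to $O$ decomposes into a sum of $r$ terms, each a single-step error sandwiched between conjugations by unitaries acting on $O$. The crucial point is that all the surrounding conjugations preserve the operator norm; this is exactly what averts the usual exponential-in-$r$ blow-up and turns the global error into a plain sum of $r$ local single-step errors. Moreover, the inner conjugation is by the product formula itself, so by the light-cone bound the operator fed into the $j$th single-step error is supported in $\bigcup_{k\le j\Upsilon}E_k^S$ and has norm $\|O\|$.

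The heart of the argument is the single-step estimate. For a $p$th-order formula the single-step Heisenberg error on an operator $A$ is $\order{\tau^{p+1}}$ times a sum of $(p+1)$-fold nested commutators of the Hamiltonian summands with $A$, via the standard variation-of-parameters expansion. I would bound this nested-commutator sum by a locality counting argument: the innermost commutator $[H_\gamma,A]$ vanishes unless $H_\gamma$ overlaps $S(A)$, so the outermost summation index is restricted to terms of total $1$-norm weight at most $\sum_{k\le j\Upsilon+1}\|H_k^S\|_1\le w_r$ (the extra layer $j\Upsilon+1$, hence the $+1$ in the definition of $w_r$, accounts for the single site a commutator can reach beyond the current support); the remaining $p$ indices are each pinned by $\ell$-locality and the bounded-interaction-per-qubit hypothesis into an $\order{1}$-size neighbourhood, contributing only $\order{1}$ factors. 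This yields a single-step error $\order{\tau^{p+1}w_r\|O\|}$, uniformly in $j$. Summing the $r$ contributions then gives $r\cdot\order{\tau^{p+1}w_r\|O\|}=\order{\|O\|w_r t^{p+1}/r^{p}}$, which is the claimed bound and is manifestly independent of $n$ since $w_r$ depends only on the light-cone region.

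The main obstacle I anticipate is making this single-step nested-commutator estimate rigorous while extracting a dependence that is \emph{linear} in $w_r$ rather than in $w_r^{p+1}$: one must verify that only the single Hamiltonian factor adjacent to $A$ carries the extensive weight $w_r$, whereas all further nested factors are geometrically pinned near $S(A)$ and sum to constants. Carefully tracking the edge-set indexing to confirm the range $k\le j\Upsilon+1$ (and thus the stated $w_r$), together with controlling the $p$-dependent constants inherited from the Suzuki--Trotter coefficients of~\cite{suzuki1991general}, are the remaining technical points.
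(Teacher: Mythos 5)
Your outer skeleton---telescoping the exact and approximate Heisenberg channels, feeding the light-cone--supported operator $A_j$ into the $j$th single-step error, and summing $r$ per-step bounds each linear in $w_r$---is exactly the paper's strategy, and your first step (the reduced circuit and the un-reduced formula act identically on $O$, by Lemma~\ref{lm:optimal}) is correct. The genuine gap is in what you call the heart of the argument, the single-step estimate, and it is twofold. First, the locality counting you describe does not close. You put $A$ in the \emph{innermost} slot, restrict the adjacent Hamiltonian index to terms overlapping $S(A)$ (weight at most $w_r$, fine), and then claim the remaining $p$ indices are ``pinned into $\order{1}$-size neighbourhoods.'' But $[H_{\gamma_1},A]$ is generically supported on essentially all of $S(A)$---the entire light cone---not on an $\order{1}$ region (take $A$ a long Pauli string: commuting with a single-site Pauli leaves the support intact). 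Hence every subsequent index is pinned only to the light cone, each layer contributes weight $\order{w_r}$, and the sum gives $\order{w_r^{p+1}}$, not $\order{w_r}$. Your own ``main obstacle'' paragraph proposes the resolution that the outer factors are ``geometrically pinned near $S(A)$ and sum to constants,'' which is self-contradictory: being pinned near $S(A)$ means weight up to $w_r$. The counting only works when the observable sits \emph{outermost}, in terms of the form $[C,A]$ with $C$ a Hamiltonian-only nested commutator: then one expands the innermost Hamiltonian factor of $C$ into Pauli terms, the outer Hamiltonian factors are pinned to $\order{1}$ neighbourhoods of that Pauli term, and the overlap requirement with $S(A)$ restricts the innermost Pauli term to the $\order{w_r}$-weight collection near the light cone. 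Second, even that corrected expansion is not established by ``standard variation of parameters'': the exact formula for the single-step Heisenberg error produces commutators in which either the error generator or the observable is conjugated by the \emph{exact} evolution $\mathrm{e}^{\ii Hs}$ inside the step, and exact evolution has no strict light cone, so the ``commutator vanishes for disjoint supports'' step is not exact without Lieb--Robinson tail bounds. A symptom of the same problem: one can show the per-step Heisenberg error equals $\|[\mathscr{M},A]\|$ with $\mathscr{M}$ the multiplicative error of the full formula, but expanding this commutator leaves remainder terms proportional to $\|\mathscr{M}\|$ itself, which for the \emph{full} formula is extensive in $n$.

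The paper closes precisely this gap with a device absent from your proposal: a step-dependent \emph{virtual} product formula. At step $j$ it compares the exact evolution not with the full formula but with a genuine $p$th-order Suzuki--Trotter formula for $H$ whose decomposition keeps the light-cone pieces $H_0^S,\dots,H_{j\Upsilon}^S$ separate and exponentiates the entire tail $\sum_{k>j\Upsilon}H_k^S$ exactly as a single summand. Then only the standard \emph{unitary} nested-commutator bound of~\cite{childs2021theory} is needed (no commutators with $O$, no exact-evolution conjugations, per-step cost $2\|O\|\epsilon_j$), and the tail enters the innermost commutator only through its boundary piece $H_{j\Upsilon+1}^S$, yielding $\epsilon_j=\order{\tau^{p+1}\sum_{k\le j\Upsilon+1}\|H_k^S\|_1}$; the observable's support is used only to show that the virtual formula and the reduced circuit $U$ act identically on $O$. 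If you repair your single-step estimate by inserting such a virtual formula between the two channels, you recover the paper's proof; as written, your estimate does not hold.
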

\noindent We sketch the proof in Methods and defer the detailed proof to Appendix~\ref{sec:append-short}.
Interaction per qubit denotes the strength of all interactions overlapping with a fixed qubit.
The 1-norm $\|\cdot\|_1$ sums the absolute values of all Pauli coefficients of an operator.
These constraints are formally defined in Appendix~\ref{sec:append-short}.
The width $w_r$ must be smaller than the norm of the entire Hamiltonian $\|H\|_1$ to ensure the advantage of this error bound over the worst-case analysis $\order{\|O\|\|H\|_1t^{p+1}r^{-p}}$ in~\cite{childs2021theory}. 
Therefore, a short simulation time is necessary.
For example, $t=o(n)$ suffices in one-dimensional lattice models.
Nevertheless, this short time can still be polynomial in $n$, extending well beyond the capabilities of classical simulation~\cite{wildClassicalSimulationShortTime2023,bravyiClassicalSimulationPeaked2023,bravyiClassicalAlgorithmsQuantum2021}.

\begin{algorithm}[t]
\caption{Reduced Product Formula}\label{alg:rpf}
    \SetKwInOut{Input}{input}\SetKwInOut{Output}{output}
    \Input{Observable $O$ with support $S$; Hamiltonian $H$; Evolving time $t$; Step number $r$; Order $p$}
    \Output{Unitary $U$}
    Decompose $H$ into $H_0^S,\cdots,H_{\Gamma_0}^S$ according to Eq.~\eqref{eq:edge_set}\;
    Adopt Suzuki-Trotter coefficients $a_{(\upsilon,\gamma)}$ from~\cite{suzuki1991general}\;
    $\Upsilon\leftarrow 2\cdot5^{p/2-1}$,
    $\tau\leftarrow t/r$,
    $U\leftarrow I$\;
    \For{$j=1,\cdots,r$}{
        \For{$\upsilon=1,\cdots,\Upsilon$}{
            $U\leftarrow\prod_{\gamma=0}^{ \upsilon+(j-1)\Upsilon}\mathrm{e}^{\ii a_{(\upsilon,\gamma)}\tau H_{\pi^{\text{eo}}_{\upsilon}(\gamma)}^S}\cdot U$\;
            
        }
    }
\end{algorithm}
For clarity, we utilized matrix exponentials of sub-Hamiltonians $\{H_k^S\}$ in Alg.~\ref{alg:rpf}. 
However, in practice, we can replace these gates with exponentials of Pauli components of $\{H_k^S\}$ to simplify realization while keeping the error scaling the same as Thm.~\ref{thm:single} since these Pauli decompositions have already been used in our proof.

\paragraph*{Global Observables.---}
In practical applications of Hamiltonian simulation, measuring global observables is essential for uncovering the overall properties of quantum systems~\cite{zhang2017observation, jurcevicDirectObservationDynamical2017, hugginsUnbiasingFermionicQuantum2022, wangRealizationFractionalQuantum2024}. 
However, directly applying the previous support analysis is inapplicable since the global supports can rapidly saturate the whole $n$ qubits. 
Here, we circumvent this challenge by showing that the support analysis can still be useful when the observable is a sum of mutually commutative local observables, $\sum_{m=1}^MO_m$. 

Our approach involves analyzing the light cone of each summand operator separately.
Unlike the previous optimal configuration, which is specifically tailored for a fixed observable, we desire a configuration that consistently suppresses the support expansions of all local observables.
As defined below, we derive this configuration from the \emph{interaction hypergraph} of the Hamiltonian $H$.
\begin{definition}\label{def:graph}
Regarding the Pauli decomposition $H=\sum_{\alpha\in{\sf P}^n}s_\alpha P_\alpha$, the sets $\{S_i\}_{i=1}^L$ are defined such that every nonzero Pauli term belongs to some $S_i$, with no $S_i$ belonging to any other $S_j$.
Additionally, at least one Pauli term has support equal to every $S_i$.
The interaction hypergraph $G$ consists of the set of $n$ qubits and $\{S_i\}_{i=1}^L$ as its vertex and hyperedge sets, respectively. 
\end{definition}
We regroup $H$ into $\sum_{i=1}^LH_{S_i}$ so that every $H_{S_i}$ comprises Pauli terms supported within $S_i$.
This regrouping forms a decomposition of $H$.
The permutation arises from the edge-coloring of $G$ with an assignment $\varphi:\{S_i\}_{i=1}^L\rightarrow[\chi]$ and a total of $\chi$ colors, ensuring disjoint hyperedges of the same color.
Sub-Hamiltonians with supports (hyperedges) of the same color commute with each other.
Consequently, we can permute sub-Hamiltonians of the same color consecutively and adopt a back-and-forth manner among different stages.
This decomposition and permutation completes our chromatic configuration.
Since the permutation satisfies the symmetric constraint from~\cite{suzuki1991general}, we next employ $\{a_{(\upsilon,\gamma)}\}$ and $\Upsilon$ to construct a standard $p$th-order Suzuki-Trotter formula with our chromatic configuration, as summarized in Alg.~\ref{alg:mpf}.
A concrete example of regrouping and coloring of a nearest-neighbor model is illustrated in Fig.~\ref{fig:edge}(c).

\begin{algorithm}[t]
    \caption{Chromatic Product Formula}\label{alg:mpf}
    \SetKwInOut{Input}{input}\SetKwInOut{Output}{output}
    \Input{Hamiltonian $H$; Evolving time $t$; Step number $r$; Order $p$; Coloring $\varphi$ and number of colors $\chi$}
    \Output{Unitary $U$ organized from product formula}
    Regroup $H$ into $H_{S_1},\cdots,H_{S_L}$\;
   Adopt Suzuki-Trotter coefficients $a_{(\upsilon,\gamma)}$ from~\cite{suzuki1991general}\;
    $\Upsilon\leftarrow2\cdot5^{p/2-1}$, $\tau\leftarrow t/r$, $U\leftarrow I$\;
    \For{$j=1,\cdots,r$}{
        \For{$\upsilon=1,\cdots,\Upsilon$}{
            \eIf{$\upsilon$ is odd}{
                $U\leftarrow\prod_{c=1}^{\chi}\prod_{\gamma=1\to L:\varphi(S_\gamma)=c}\mathrm{e}^{\ii a_{(\upsilon,\gamma)}\tau H_{S_\gamma}}\cdot U$
            }{
                $U\leftarrow\prod_{c=\chi}^{1}\prod_{\gamma=L\to 1:\varphi(S_\gamma)=c}\mathrm{e}^{\ii a_{(\upsilon,\gamma)}\tau H_{S_\gamma}}\cdot U$
            }
        }
    }
\end{algorithm}

This chromatic configuration offers advantages by simultaneously suppressing the expansion of each individual summand observable.
Therefore, we achieve a mildly expanding light cone for each summand, as detailed in the following lemma.
We sketched its proof in Methods and deferred the formal one in Appendix~\ref{sec:Hamil}.
\begin{lemma}[Chromatic Optimality]\label{lm:global_support}
For an arbitrary set of observables $\{O_m\}$ and a Hamiltonian $H$, an $r$-step product formula $U$ from Alg.~\ref{alg:mpf} simultaneously 
enlarges the support of every $UO_mU^\dagger$ to at most $\bigcup_{k=0}^{(\chi-1) r\Upsilon+1}E^{S(O_m)}_k$. Given that $\chi$ is a constant, the support expansion of all observables is optimally slow. 
\end{lemma}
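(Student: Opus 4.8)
The plan is to reduce Lemma~\ref{lm:global_support} to repeated application of the single-observable support lemma (Lemma~\ref{lm:optimal}), tracking how far the light cone of each summand $O_m$ can grow per stage of the chromatic product formula. First I would fix an arbitrary summand $O_m$ with support $S_m \coloneqq S(O_m)$ and build its edge-set partition $\{E_k^{S_m}\}$ and interactive decomposition $\{H_k^{S_m}\}$ exactly as in Eq.~\eqref{eq:edge_set}. The key observation is that a single stage $\upsilon$ of Alg.~\ref{alg:mpf} applies the sub-Hamiltonian exponentials in $\chi$ color-groups, and within each color-group the hyperedges are disjoint, so the exponentials there mutually commute and their combined action on a given support can grow it by at most one edge-set layer. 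I would make this precise by showing that conjugating $O_m(t)$ through one color-group expands its support from $\bigcup_{k=0}^{j} E_k^{S_m}$ to at most $\bigcup_{k=0}^{j+1} E_k^{S_m}$, using the $\uplus$ operation of Eq.~\eqref{eq:oplus} and the fact that only exponentials whose hyperedge touches the current support can enlarge it.

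The core combinatorial step is then to count color-groups. A single stage contains $\chi$ color-groups, so it advances the light cone by at most $\chi$ edge-set layers; however, by the back-and-forth structure of the even-odd permutation (odd stages sweep colors $1 \to \chi$, even stages sweep $\chi \to 1$), the last color-group of one stage and the first of the next are adjacent in ordering, and one must be careful not to double-count. I would argue that across a full stage the net growth is $(\chi-1)$ layers rather than $\chi$, because the boundary color-group is shared or because the back-and-forth symmetry causes partial cancellation of the support-expanding direction — this is the standard mechanism by which Suzuki--Trotter's palindromic ordering halves the effective front speed. Accumulating over $r$ steps of $\Upsilon$ stages each, the support of $U O_m U^\dagger$ is contained in $\bigcup_{k=0}^{(\chi-1)r\Upsilon + 1} E_k^{S_m}$, with the extra $+1$ absorbing the final partial color-group, matching the lemma's claim.

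Optimality follows by invoking Lemma~\ref{lm:optimal}: since the edge-set union is a lower bound on the expanding support for \emph{any} decomposition and permutation, and our chromatic configuration attains growth proportional to the number of color-groups applied, no alternative ordering of the same regrouped sub-Hamiltonians can do asymptotically better. When $\chi = \order{1}$ the front speed is $\order{r\Upsilon}$ layers, i.e.\ a constant number of layers per stage, which is the optimal linear-in-depth Lieb--Robinson rate and hence ``optimally slow.'' The point that makes this work simultaneously for all $O_m$ is that the chromatic configuration is built from the interaction hypergraph $G$ alone (Definition~\ref{def:graph}), independent of any particular $S_m$, so the same coloring bounds every summand's cone at once.

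The main obstacle I anticipate is the careful bookkeeping of exactly how many edge-set layers a single stage advances — in particular rigorously justifying the $(\chi-1)$ rather than $\chi$ factor and the precise handling of the boundary color-groups under the back-and-forth permutation. This requires verifying that the disjointness of same-color hyperedges genuinely caps the per-color-group growth at one layer of $E_k^{S_m}$, and that the palindromic stitching between consecutive stages does not secretly inject an additional layer. I would handle this by an induction on the color-group index within Alg.~\ref{alg:mpf}, carrying the invariant ``after applying the first $j$ color-groups the support lies in $\bigcup_{k=0}^{j} E_k^{S_m}$,'' and then translating the total color-group count into the stated bound; the edge-coloring property (same color $\Rightarrow$ disjoint supports) is exactly what closes the inductive step.
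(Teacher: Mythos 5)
Your proposal is correct and takes essentially the same route as the paper's proof: the key step that each color-group enlarges the support by at most one edge-set layer (closed by the disjointness of same-color hyperedges together with the fact that every hyperedge carries a Pauli of full support, so a hyperedge reaching two layers out cannot touch the current support), the induction over color-groups, stages, and steps, the merging of the repeated boundary color between consecutive back-and-forth stages to get $(\chi-1)r\Upsilon+1$ rather than $\chi r\Upsilon$ layers, and optimality by comparison with the lower bound of Lemma~\ref{lm:optimal}. One small correction to your hedge: the $(\chi-1)$ factor comes solely from the shared boundary color-group (two adjacent same-color groups jointly add one layer), not from any ``cancellation of the support-expanding direction'' — worst-case support growth under $\uplus$ is monotone, so no such cancellation can occur.
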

Unlike the single-observable case, the light cones cannot facilitate circuit reductions as each exponential gets involved in different light cones.
Despite needing to implement all exponentials as in Alg.~\ref{alg:mpf}, we can 
 reduce the error bound by only considering each summand's light cone separately.
Similarly, the error bound in Thm.~\ref{thm:multiple} is linear with the sum of light-cone widths,  independent of the system size.
This also suggests an advantage compared to the worst-case error analysis in~\cite{childs2021theory}, especially when the simulation time is short (but still polynomial in most cases).
\begin{theorem}[Global-Observable Error]\label{thm:multiple}
Consider the observable $O=\sum_{m=1}^MO_m$ where every $O_m$ is a local observable with support $S(O_m)$.
Suppose the $n$-qubit $H$ is $\ell$-local with a constant $\ell$ and has bounded interaction per qubit.
    With light-cone widths $w_{r,m}\coloneqq \sum\nolimits_{k=0}^{(\chi-1) r\Upsilon+3}\|H_k^{S(O_m)}\|_1$, the simulation error by an $r$-step $p$th-order $U$ from Alg.~\ref{alg:mpf} is bounded by $$\|\mathrm{e}^{\ii Ht}O\mathrm{e}^{-\ii Ht}-U OU^\dagger\|=\order{\frac{\sum_{m=1}^M\|O_m\|w_{r,m}t^{p+1}}{r^{p}}}.$$
\end{theorem}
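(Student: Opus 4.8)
The plan is to bound the global-observable error by reducing it to a sum of single-observable errors, each controlled by Theorem~\ref{thm:single}, and then to show that the chromatic product formula from Alg.~\ref{alg:mpf} keeps every summand's light cone slow enough that the per-summand widths $w_{r,m}$ govern the total. First I would use the triangle inequality to split the total error as
\begin{gather*}
\|\mathrm{e}^{\ii Ht}O\mathrm{e}^{-\ii Ht}-UOU^\dagger\|\leq\sum_{m=1}^M\|\mathrm{e}^{\ii Ht}O_m\mathrm{e}^{-\ii Ht}-UO_mU^\dagger\|,
\end{gather*}
so that it suffices to bound each summand's error independently. The key point is that although Alg.~\ref{alg:mpf} cannot be pruned of gates (each exponential lies in some summand's light cone), for a \emph{fixed} $O_m$ the erroneous terms outside its own light cone $\bigcup_{k=0}^{(\chi-1)r\Upsilon+1}E_k^{S(O_m)}$ commute with $O_m(t)$ and are thus irrelevant. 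This is exactly the mechanism exploited in Theorem~\ref{thm:single}, with the one change that the effective light cone now grows at rate $(\chi-1)r\Upsilon$ per Lemma~\ref{lm:global_support} rather than $r\Upsilon$.

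The second step is to invoke Lemma~\ref{lm:global_support} to certify that the chromatic configuration indeed confines $UO_mU^\dagger$ to the asserted support. With $\chi$ constant, this light cone has width bounded by $w_{r,m}=\sum_{k=0}^{(\chi-1)r\Upsilon+3}\|H_k^{S(O_m)}\|_1$, where the extra additive slack in the upper summation index ($+3$ versus $+1$) absorbs the boundary terms generated when comparing the reduced evolution to the ideal one in the error-expansion argument—the same buffer that appears implicitly in the single-observable proof. I would then apply the single-observable bound of Theorem~\ref{thm:single}, reading off that the $m$th error is $\order{\|O_m\|w_{r,m}t^{p+1}/r^p}$: the $\ell$-locality and bounded-interaction-per-qubit hypotheses transfer verbatim from the theorem statement, and the only Hamiltonian-dependent quantity entering the bound is the light-cone width, now computed against $S(O_m)$ and the chromatic decomposition.

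The final step is to sum the $M$ per-summand bounds. Since the big-$O$ hides only constants independent of $m$ (the locality constant $\ell$, the stage count $\Upsilon$, and the color number $\chi$ are all fixed), the sum telescopes directly into $\order{\sum_{m=1}^M\|O_m\|w_{r,m}t^{p+1}/r^p}$, matching the claim. The hard part will be verifying that the per-summand error analysis from Theorem~\ref{thm:single} genuinely goes through unchanged when the gates lying outside a given $O_m$'s light cone are \emph{not} physically removed but merely argued to commute with the evolved observable; one must check that these retained-but-irrelevant unitaries do not contaminate the error expansion—i.e., that the Taylor/commutator expansion of the Trotter error, when projected onto $O_m(t)$, only picks up terms supported inside the light cone. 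I would handle this by tracking support containment through each stage of the back-and-forth chromatic ordering, using the $\uplus$ operation and Lemma~\ref{lm:global_support} to guarantee that at every step the accumulated support of the partial evolution acting on $O_m$ stays within $\bigcup_{k}E_k^{S(O_m)}$, so that all error contributions outside this cone vanish identically rather than merely being small. The mutual commutativity of the $O_m$ is not needed for this error bound itself (it matters for measurement, not for the norm estimate), so the argument reduces cleanly to $M$ independent instances of the short-time local analysis.
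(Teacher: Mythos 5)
Your proposal follows essentially the same route as the paper's proof: a triangle-inequality split over the summands, Lemma~\ref{lm:global_support} to confine each $O_m$'s light cone under the chromatic formula, a per-summand rerun of the single-observable error analysis in which gates outside the cone are neutralized by commutation with the evolved $O_m$ rather than physically removed, and a final summation over $m$. The verification you flag as ``the hard part'' is exactly what the paper supplies via its per-summand virtual product formulas $\mathscr{S}_{\text{V},m}$ — each step's decomposition keeps the colored groups intersecting the light cone and lumps the rest into a tail whose contribution to the nested commutator is absorbed by the $+3$ buffer — so your plan is correct and matches the paper's argument in substance.
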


Similarly, we can decompose the regrouped sub-Hamiltonians used in Alg.~\ref{alg:mpf} into Pauli operators, as we have included the Pauli-level analysis in the proof in Appendix~\ref{sec:append-multi}.

\subsection{Observable Advantages for Arbitrary-Time Simulations}

In our preceding analyses, we primarily focused on short-time simulations with specific knowledge of observables.
The question arises naturally to explore the observable-driven advantages in arbitrary-time simulations.
To address this, we turn our attention to the average error of random-input simulations, isolating the effects of states. 
In this section, we analyze the observable-driven reductions in average errors, revealing that the advantage generally exists for observables composed of a sum of Pauli operators.

To quantify the distance between arbitrary pairs of unitaries $U$ and $U_0$ for a given observable $O$ and an input ensemble $\mu=\{p_i,\psi_i\}$, we employ the average distance between measurements as a metric:
\begin{equation}
    D(U_0,U)_{O,\mu}\coloneqq \mathbb{E}_{\psi\sim\mu} |\bra{\psi}U_0O U_0^\dag\ket{\psi}-\bra{\psi}UO U^\dag\ket{\psi}|. 
\end{equation}
This metric is experimentally motivated since quantum circuits always end with measurements.

In the simulation context, algorithms often consist of a sequence of $r$ short steps of unitaries.
Therefore, our focus turns to the distance between the $r$ steps of the ideal evolution $\mathscr{U}_0=\mathrm{e}^{\ii Ht/r}$ and its approximation $\mathscr{U}=\mathscr{S}(t/r)$.
To investigate the average simulation error across the entire state space, we examine the input ensemble of Haar-random states, which can be intuitively understood as uniform randomness in the state space~\cite{mele2024introduction}. 
While Haar-random sampling is technically demanding, we can use $t$-design ensembles as a feasible alternative.
Meanwhile, these ensembles are indistinguishable from the Haar measure given $t$ copies of states. 
In the following, we show that sampling states from a 2-design ensemble $\mu_2$ suffices to exhibit error reductions with observable knowledge.

\begin{theorem}[Average Distance and Variance]\label{thm:random}
    For a 2-design ensemble $\mu_2$ of quantum states and an observable $O$, we can bound the average distance of simulation:
    \begin{gather*}
        D(\mathscr{U}_0^r,\mathscr{U}^r)_{O,\mu_2}\leq\sqrt{2}r\frac{\|O\|_2\cdot\|\mathscr{M}\|_2}{\sqrt{d(d+1)}},
    \end{gather*}
    where $\mathscr{M}\coloneqq\mathscr{U}_0^\dag\mathscr{U}-I$ represents the multiplicative error for each small step, $d=2^n$ is the dimension of the $n$-qubit Hilbert space, and $\|A\|_p=[\Tr(|A|^p)]^{1/p}$ denotes the Schatten $p$-norm.
    The variance of errors can also be bounded by
\begin{gather*}
     \text{Var}(\mathscr{U}_0^r,\mathscr{U}^r)_{O,\mu_2
     }\le\frac{2r^2\|O\|^2_2\cdot\|\mathscr{M}\|^2_2}{d(d+1)}.
 \end{gather*}
\end{theorem}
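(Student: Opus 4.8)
The plan is to route everything through the single Hermitian operator $\Delta\coloneqq \mathscr{U}_0^r O\mathscr{U}_0^{-r}-\mathscr{U}^r O\mathscr{U}^{-r}$, since the integrand $\bra{\psi}\mathscr{U}_0^r O\mathscr{U}_0^{-r}\ket{\psi}-\bra{\psi}\mathscr{U}^r O\mathscr{U}^{-r}\ket{\psi}=\bra{\psi}\Delta\ket{\psi}$ is real. First I would apply Jensen's inequality, $D=\mathbb{E}_\psi|\bra{\psi}\Delta\ket{\psi}|\le\sqrt{\mathbb{E}_\psi\bra{\psi}\Delta\ket{\psi}^2}$, and observe that $\mathrm{Var}\le\mathbb{E}_\psi\bra{\psi}\Delta\ket{\psi}^2$, so both claimed bounds reduce to the same second moment. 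Evaluating it over a 2-design via the standard identity $\mathbb{E}_{\psi\sim\mu_2}(\ketbra{\psi}{\psi})^{\otimes 2}=(I+\mathrm{SWAP})/[d(d+1)]$ gives $\mathbb{E}_\psi\bra{\psi}\Delta\ket{\psi}^2=[(\Tr(\Delta))^2+\Tr(\Delta^2)]/[d(d+1)]$. The key simplification is that $\Tr(\Delta)=0$ by cyclicity and unitarity, so this collapses to $\|\Delta\|_2^2/[d(d+1)]$. Both target inequalities then follow immediately once I establish $\|\Delta\|_2\le\sqrt{2}\,r\,\|O\|_2\|\mathscr{M}\|_2$.

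For that Schatten-2 bound I would telescope across the $r$ steps. Setting $B_j\coloneqq\mathscr{U}_0^{r-j}\mathscr{U}^j$, so that $B_0=\mathscr{U}_0^r$ and $B_r=\mathscr{U}^r$, I get $\Delta=\sum_{j=1}^r\bigl(B_{j-1}OB_{j-1}^\dag-B_jOB_j^\dag\bigr)$. Factoring the common unitary $\mathscr{U}_0^{r-j}$ on the left and $\mathscr{U}^{j-1}$ on the right and using unitary invariance of $\|\cdot\|_2$, each summand has norm $\|\tilde O-V\tilde O V^\dag\|_2$, where $\tilde O\coloneqq\mathscr{U}^{j-1}O\mathscr{U}^{-(j-1)}$ is unitarily equivalent to $O$ and $V\coloneqq\mathscr{U}_0^\dag\mathscr{U}=I+\mathscr{M}$. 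Because $V\tilde O V^\dag-\tilde O=[V,\tilde O]V^\dag=[\mathscr{M},\tilde O]V^\dag$, where the identity part of $V$ drops out of the commutator, this norm equals exactly $\|[\mathscr{M},\tilde O]\|_2$, and the triangle inequality over the $r$ terms gives $\|\Delta\|_2\le r\max_j\|[\mathscr{M},\tilde O_j]\|_2$.

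The heart of the argument, and the step I expect to be most delicate, is the per-step commutator bound $\|[\mathscr{M},\tilde O]\|_2\le\sqrt{2}\,\|O\|_2\|\mathscr{M}\|_2$, since it must produce the Schatten-2 norm of $\mathscr{M}$ rather than its operator norm, together with the constant $\sqrt{2}$. I would diagonalize the Hermitian $\tilde O=\sum_a\lambda_a\ketbra{a}{a}$; in this eigenbasis $[\mathscr{M},\tilde O]_{ca}=\mathscr{M}_{ca}(\lambda_a-\lambda_c)$, so that $\|[\mathscr{M},\tilde O]\|_2^2=\sum_{a,c}|\mathscr{M}_{ca}|^2(\lambda_a-\lambda_c)^2\le(\lambda_{\max}-\lambda_{\min})^2\|\mathscr{M}\|_2^2$. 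The spectral diameter is then controlled by the observable's 2-norm through $(\lambda_{\max}-\lambda_{\min})^2\le 2(\lambda_{\max}^2+\lambda_{\min}^2)\le 2\sum_a\lambda_a^2=2\|\tilde O\|_2^2=2\|O\|_2^2$, which is precisely where the factor $\sqrt{2}$ originates; the spectrum (hence $\lambda_{\max}-\lambda_{\min}$ and $\|\tilde O\|_2=\|O\|_2$) is invariant under the conjugation by $\mathscr{U}^{j-1}$, making the per-step bound uniform in $j$. Chaining this with the telescoping estimate and the 2-design second moment yields both the average-distance and variance bounds. The main obstacle is conceptual rather than computational: recognizing that a commutator with a Hermitian operator is governed by its spectral diameter $\lambda_{\max}-\lambda_{\min}$ instead of $\|O\|_\infty$, which is exactly what lets the observable enter through its small, efficiently computable normalized 2-norm.
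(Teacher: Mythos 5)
Your proposal is correct, and it takes a genuinely different route from the paper's own proof. The paper telescopes \emph{inside} the expectation, applies H\"older per step to split the evolved-state difference from the conjugated observable via $|\Tr(AB)|\le\|A\|_2\|B\|_2$, and only then uses the 2-design property on the pure-state overlap $\int_\psi|\bra{\psi}\mathscr{U}\mathscr{U}_0^\dag\ket{\psi}|^2\,d\mu_2$, finishing with the unitarity identity $\mathscr{M}+\mathscr{M}^\dag=-\mathscr{M}\mathscr{M}^\dag$. You instead keep state and observable together: you compute the exact 2-design second moment of $\bra{\psi}\Delta\ket{\psi}$ for the traceless Hermitian $\Delta$, reducing everything to $\|\Delta\|_2/\sqrt{d(d+1)}$, and then bound $\|\Delta\|_2$ operator-theoretically by telescoping, unitary invariance, and the spectral-diameter commutator estimate $\|[\mathscr{M},\tilde O]\|_2\le(\lambda_{\max}-\lambda_{\min})\|\mathscr{M}\|_2\le\sqrt{2}\,\|O\|_2\|\mathscr{M}\|_2$. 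This ordering is not just a matter of taste. The paper's displayed equality
\begin{gather*}
2d^2-2\Tr(I+\mathscr{M})\Tr(I+\mathscr{M}^\dag)=2\Tr(\mathscr{M}\mathscr{M}^\dag)-2\Tr(\mathscr{M})\Tr(\mathscr{M}^\dag)
\end{gather*}
is off by a factor of $d$: the correct right-hand side is $2d\Tr(\mathscr{M}\mathscr{M}^\dag)-2|\Tr\mathscr{M}|^2$, as one can verify with $d=2$, $\mathscr{U}_0=I$, $\mathscr{U}=\mathrm{diag}(e^{\ii\theta},e^{-\ii\theta})$, where the left side is $8\sin^2\theta$ but the paper's right side is $8(1-\cos\theta)\cos\theta$. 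Executed with the corrected algebra, the paper's chain of inequalities yields only $\sqrt{2}\,r\|O\|_2\|\mathscr{M}\|_2/\sqrt{d+1}$, weaker than the stated theorem by $\sqrt{d}$ --- essentially because H\"older is lossy when pairing a rank-two state difference against a full-rank observable. Your route, which averages first and bounds $\|\Delta\|_2$ afterwards, avoids that loss and delivers exactly the claimed constants, $D\le\|\Delta\|_2/\sqrt{d(d+1)}\le\sqrt{2}\,r\|O\|_2\|\mathscr{M}\|_2/\sqrt{d(d+1)}$ and $\mathrm{Var}\le\|\Delta\|_2^2/[d(d+1)]$; in that sense your argument is not merely an alternative but a repair of the paper's proof.
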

\noindent The proof is outlined in Methods, with a formal version provided in Appendix~\ref{Sec:Append-random}.
Based on the variance bound, we can show the concentration of our error analysis through Chebyshev's inequality.
We obtain the following corollary with the nested-commutator analysis by further narrowing the simulation approximation to the standard Suzuki formula methods.
 \begin{corollary}[Product-Formula Average Error]\label{co:product}
    Adopting $\mathscr{U}^r=\mathscr{S}_p(t/r)^r$ a standard $r$-step $p$th-order Suzuki product formula with the decomposition $H=\sum_{\gamma=1}^\Gamma H_\gamma$, the $D(\mathscr{U}_0^r,\mathscr{U}^r)_{O,\mu_2}$ from Thm.~\ref{thm:random} is bounded by $\order{T_2\|O\|_2t^{p+1}d^{-1/2}r^{-p})}$  with
\begin{gather*}
    T_2\coloneqq\sum_{\gamma_1,\dots,\gamma_{p+1}=1}^\Gamma\frac{1}{\sqrt{d}}\left\|[H_{\gamma_{p+1}},[H_{\gamma_p},\dots,[H_{\gamma_2},H_{\gamma_{1}}]]] \right\|_2.
\end{gather*}
Specifically, we have a triangle-bound for $p=2$ case,
\begin{align*}
    \frac{\sqrt{2}\|O\|_2t^3}{12dr^2}&\Bigg(\sum_{\gamma_1=1}^{\Gamma-1}
        \norm{\qty[\sum_{\gamma_3=\gamma_1+1}^\Gamma H_{\gamma_3},\qty[\sum_{\gamma_2=\gamma_1+1}^\Gamma H_{\gamma_2},H_{\gamma_1}]]}_2\notag\\
        &\ +
        \frac{1}{2} \sum_{\gamma_1=1}^{\Gamma-1}
        \norm{\qty[H_{\gamma_1},\qty[H_{\gamma_1}, \sum_{\gamma_2=\gamma_1+1}^\Gamma H_{\gamma_2}]]}_2\Bigg).
\end{align*}
\end{corollary}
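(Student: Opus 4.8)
The plan is to deduce the corollary directly from Theorem~\ref{thm:random} by controlling the single-step multiplicative error $\mathscr{M}=\mathscr{U}_0^\dag\mathscr{U}-I$ in Schatten $2$-norm, specialized to $\mathscr{U}=\mathscr{S}_p(\tau)$ a standard $p$th-order Suzuki formula with $\tau=t/r$. Since $\mathscr{U}_0=\mathrm{e}^{\ii H\tau}$ is unitary and the Schatten norms are unitarily invariant, I would first rewrite $\|\mathscr{M}\|_2=\|\mathrm{e}^{-\ii H\tau}\mathscr{S}_p(\tau)-I\|_2=\|\mathscr{S}_p(\tau)-\mathrm{e}^{\ii H\tau}\|_2$, reducing the task to bounding the \emph{additive} Trotter error in $2$-norm. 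The corollary then follows by substituting this bound into the inequality of Theorem~\ref{thm:random} and simplifying the $d$-dependence.

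For the additive-error bound I would reuse the nested-commutator integral representation from the theory of Trotter error~\cite{childs2021theory}, which expresses $\mathscr{S}_p(\tau)-\mathrm{e}^{\ii H\tau}$, to leading order $\tau^{p+1}$, as a finite sum of terms of the form $W_a\,[H_{\gamma_{p+1}},[\dots,[H_{\gamma_2},H_{\gamma_1}]]]\,W_b$ integrated against bounded scalar kernels, where each $W_a,W_b$ is a partial product of the exponentials appearing in $\mathscr{S}_p$ and hence unitary. The key point---and the reason the argument transfers from the operator norm to the $2$-norm---is that the Schatten $2$-norm is a unitarily invariant (ideal) norm, so $\|W_a C W_b\|_2=\|C\|_2$ for every such term. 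Applying the triangle inequality termwise over the integral and over the $(p+1)$-tuples $(\gamma_1,\dots,\gamma_{p+1})$ yields $\|\mathscr{S}_p(\tau)-\mathrm{e}^{\ii H\tau}\|_2=\order{\tau^{p+1}\sum_{\gamma_1,\dots,\gamma_{p+1}}\|[H_{\gamma_{p+1}},[\dots,[H_{\gamma_2},H_{\gamma_1}]]]\|_2}$. Writing $\tau=t/r$, inserting the factor $\tfrac{1}{\sqrt d}$ to recognize $T_2$, and using $\sqrt{d(d+1)}\ge d$ in the prefactor of Theorem~\ref{thm:random}, the identities $r\cdot\tau^{p+1}=t^{p+1}/r^{p}$ and $\tfrac{\sqrt d}{d}=d^{-1/2}$ combine to give exactly $D(\mathscr{U}_0^r,\mathscr{U}^r)_{O,\mu_2}=\order{T_2\|O\|_2 t^{p+1}d^{-1/2}r^{-p}}$.

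For the sharpened $p=2$ statement I would not use the crude sum over all triples but instead invoke the explicit second-order Trotter expansion, whose leading term splits into exactly two commutator families: the ``forward'' triple commutators $[\sum_{\gamma_3>\gamma_1}H_{\gamma_3},[\sum_{\gamma_2>\gamma_1}H_{\gamma_2},H_{\gamma_1}]]$ with coefficient $\tfrac{1}{12}$ and the ``self'' commutators $[H_{\gamma_1},[H_{\gamma_1},\sum_{\gamma_2>\gamma_1}H_{\gamma_2}]]$ with coefficient $\tfrac{1}{24}$. Taking the $2$-norm of each family under the same unitary-invariance argument, factoring out the common $\tfrac{1}{12}$ (so the self-commutator family carries the residual $\tfrac12$), and carrying the $\sqrt2$ and $d^{-1}$ from Theorem~\ref{thm:random} reproduces the displayed triangle bound with prefactor $\tfrac{\sqrt2\|O\|_2t^3}{12\,d\,r^2}$.

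I expect the only genuine obstacle to be verifying that the error representation is compatible with the $2$-norm termwise, i.e.\ that every nested commutator in the expansion is flanked only by unitaries (so that unitary invariance applies exactly) and that no uncontrolled non-unitary factor or spectral-norm step sneaks into the kernel estimates. Once the representation is arranged so that each summand is manifestly of the form $(\text{unitary})\cdot(\text{commutator})\cdot(\text{unitary})$, everything else is a routine transcription of the standard operator-norm Trotter analysis, together with the elementary bookkeeping of the $d$-powers needed to match the normalized quantity $T_2$.
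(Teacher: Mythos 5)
Your proposal is correct and follows essentially the same route as the paper: both reduce the corollary to bounding the single-step error $\|\mathscr{M}\|_2=\|\mathscr{S}_p(\tau)-\mathrm{e}^{\ii H\tau}\|_2$ via the nested-commutator representation of \cite{childs2021theory}, exploit that each commutator is flanked only by unitaries so the operator-norm analysis transfers to the unitarily invariant Schatten $2$-norm, and then substitute into Thm.~\ref{thm:random} with the same $d$-bookkeeping (the paper outsources the general-$p$ step to Theorem~8 of \cite{zhao2022hamiltonian} and obtains the $p=2$ two-family structure by telescoping over $\gamma_1$ down to the two-term case, which is precisely the derivation you invoke). No gaps; your identification of unitary invariance as the crux is exactly what makes the paper's proof work.
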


In comparison, the random-input analysis without observable knowledge only ensures a simulation error based on the operator norm (Schatten $\infty$-norm) of observables~\cite{zhao2022hamiltonian}:
\begin{equation}
 \begin{aligned}
D(\mathscr{U}_0^r,\mathscr{U}^r)_{\mu_1}\coloneqq\max_ {O}D(\mathscr{U}_0^r,\mathscr{U}^r)_{O,\mu_1}\le \frac{2r\|\mathscr{M}\|_2\|O\|}{\sqrt{d}}. 
 \end{aligned}
 \end{equation}
Given that $\|O\|_2/\sqrt{d}\leq\|O\|$, the error reductions generally exist for Pauli-summation observables.
For a typical Pauli-summation observable $O=\sum_{m=1}^MO_m$ with evenly distributed coefficients $\|O_m\|=\Theta(1)$, we obtain a significant advantage since $\|O\|_2/\sqrt{d}=\order{\sqrt{M}}$, while the operator norm can only be bounded as $\order{M}$.

\begin{table*}[t!]
\centering
\resizebox{2.0\columnwidth}{!}{
\begin{tabular}{|c|c|c|c|c|c|}
\hline
                     & Short-Time Local &  Short-Time Global& Random with Ob.  & Random  &Worst-case \\ \hline
NN Lattice & $\order{\frac{t^{p+1}}{r^{p-D}}}$Thm.\ref{thm:single} & $\order{\frac{t^{p+1}}{r^{p-D}}}$Thm.\ref{thm:multiple} & $\order{\frac{\sqrt{n}t^{p+1}}{\sqrt{M}r^p}}$Thm.\ref{thm:random}& $\order{\frac{\sqrt{n}t^{p+1}}{r^{p}}}$\cite{zhao2022hamiltonian} &$\order{\frac{nt^{p+1}}{r^{p}}}$\cite{childs2019nearly} \\ \hline
\makecell{Power-law \\($\alpha>2D$)}&  $\order{\frac{t^{\frac{p(\alpha-2D)}{\alpha-D}+1}}{r^{\frac{p(\alpha-2D)}{\alpha-D}-D}}}$\makecell{\cite{childs2021theory} \&\\Thm.\ref{thm:single}}  &  $\tilde{\mathcal{O}}\left(\frac{t^{\frac{p(\alpha-2D)+D^2}{\alpha-D}+1}}{r^{\frac{(p-D)(\alpha-2D)}{\alpha-D}}}\right)$Thm.\ref{thm:multiple} &$\order{\frac{nt^{p+1}}{\sqrt{M}r^p}}$Thm.\ref{thm:random}   &$\order{\frac{nt^{p+1}}{r^{p}}}$\cite{zhao2022hamiltonian}&   $\order{\frac{nt^{p+1}}{r^{p}}}$\cite{childs2021theory}          \\ \hline
\end{tabular}}
\caption{\label{table:1}
Summary of $p$th-order product-formula simulation errors for nearest-neighbor (NN) and power-law lattice Hamiltonians with different analyses. 
To simplify the comparison, we present analyses for normalized observables across various cases, namely, $\|O\|=1$.
For the short-time global case, we focus on a summation observable $O=\sum_{m=1}^MO_m$ with local summands that satisfy $\sum_{m=1}^M\|O_m\|\sim\|O\|=1$.
For the random-input analysis with observable knowledge (Random with Ob.), we choose the Pauli-summation observable $O=\sum_{m=1}^MO_m$ with evenly distributed coefficients as $\|O_m\|=\order{1/M}$ for all $m$. }
\end{table*}

Even when sampling states from a 2-design ensemble is challenging, we can still establish an error reduction by leveraging the observable knowledge, albeit with a slightly relaxed error bound based on the Schatten 4-norms of the multiplicative error and the observable.
We elaborate on this result in Appendix~\ref{Sec:Append-random}.

\subsection{Applications}\label{sec:case}

In this section, we investigate simulations in several common models with observable knowledge.
We exhibit the observable-driven advantages in both short-time and arbitrary-time simulations compared to analyses without observable knowledge.
We provide a summary of the improved error scalings in Table.~\ref{table:1}.
The detailed derivations of these results are presented in the Appendix~\ref{sec:append-appli}.

\paragraph*{Nearest-Neighbor Hamiltonians.---}
For an $n$-qubit $D$-dimensional lattice $\Lambda$ with nearest-neighbor (NN) interactions, the Hamiltonian is given by $H=\sum_{(i,j)\in\Lambda}H_{i,j}$.
We normalize $H$ by assuming $\|H_{i,j}\|_1\leq1$.
This model is relevant to various intriguing condensed matter systems, including the Fermi- and Bose-Hubbard models \cite{hubbardElectronCorrelationsNarrow1963, hensgensQuantumSimulationFermi2017, shaoAntiferromagneticPhaseTransition2024}. 

In short-time simulations, we first consider a local observable $O$ with its constant-size support $S$.
According to Eq.~\eqref{eq:edge_set}, the corresponding edge set $E_k^S$ comprises $\order{k^{D-1}}$ qubits in the lattice, implying that the norm of the sub-Hamiltonian is $\|H_k^S\|_1=\order{k^{D-1}}$.
Consequently, the simulation error of the $r$-step $p$th-order product formula constructed in Alg.~\ref{alg:rpf} is bounded by 
\begin{gather}
    \|\mathrm{e}^{\ii Ht}O\mathrm{e}^{-\ii Ht}-U OU^\dagger\|=\order{\frac{\|O\|t^{p+1}}{r^{p-D}}}.
\end{gather}
To exceed $\order{\|O\|nt^{p+1}r^{-p}}$ from~\cite{childs2019nearly}, the width of the light cone must be smaller than the overall $\|H\|_1$, namely, $r^D=o(n)$.
Therefore, the gate count of simulation is $\order{r^{D+1}}$, which is system-size independent.

For the global observable $O=\sum_{m=1}^MO_m$, we first consider the advantage of short-time simulation for each local summand separately.
For an arbitrary summand $O_m$, both the edge set $E_k^{S(O_m)}$ and the 1-norm of sub-Hamiltonian $\|H_k^{S(O_m)}\|_1$ are bounded by $\order{k^{D-1}}$.
The regrouping in Def.~\ref{def:graph} preserves the interactions $H=\sum_{(i,j)\in\Lambda}H_{i,j}$.
By partitioning interactions into different axes, we can color all edges (supports) using $\chi=2D$ colors by labeling the parities of edges along each axis, as depicted in Fig.~\ref{fig:edge}(c). 
Therefore, the simulation error of an $r$-step $p$th-order formula from Alg.~\ref{alg:mpf} is
\begin{gather}
   \|\mathrm{e}^{\ii Ht}O\mathrm{e}^{-\ii Ht}-U OU^\dagger\|=\order{\frac{\sum_{m=1}^M\|O_m\|t^{p+1}}{r^{p-D}}}.
\end{gather}
Similarly, we need the time to be as short as $r^D=o(n)$ to outperform the worst-case error.
The overall gate count is $\order{rn}$, which is much more efficient than trivially implementing Alg.~\ref{alg:rpf} for each summand solely, as shown in Appendix~\ref{sec:NNH}.

For more general cases with arbitrary simulation times, we turn to the average simulation error with random inputs. 
Theorem~\ref{thm:random} and its corollary imply the advantage in this NN model with the preceding chromatic decomposition.
Specifically, we show in Appendix~\ref{sec:NNH} that the nested commutator scales as $T_2=\order{\sqrt{n}}$.
Given a Pauli-summation observable $O=\sum_{m=1}^MO_m$ with all norms of Pauli operators $\|O_m\|$ scale the same, the normalized 2-norm of $O$ satisfies $\order{\max_m\|O_m\|\sqrt{M}}$.
The average simulation error from Cor.~\ref{co:product} by a standard $r$-step $p$th-order Suzuki-Trotter formula is
\begin{gather}
    D(\mathscr{U}_0^r,\mathscr{U}^r)_{O,\mu_2}=\order{ \frac{\max_m\|O_m\|\sqrt{Mn}t^{p+1}}{r^p}},
\end{gather}
while the analysis without observable knowledge only offers the bound $\order{\max_m\|O_m\|M\sqrt{n}t^{p+1}r^{-p}}$.
For example, the magnetization $\sum_{j=1}^{n}Z_j/n$ has a normalized Schatten 2-norm $\order{1/\sqrt{n}}$, generating a size-independent error of $\order{t^{p+1}r^{-p}}$.
Our analysis thus provides an additional $\order{\sqrt{n}}$ speed-up compared to the analysis without observable knowledge, which yields $\order{\sqrt{n} t^{p+1}r^{-p}}$ as in~\cite{zhao2022hamiltonian}.

\paragraph*{Power-Law Hamiltonians.---}\label{pl}
For an $n$-qubit $D$-dimensional lattice $\Lambda$, we consider the rapidly power-law decaying interactions $H_P=\sum_{i,j\in\Lambda}H_{i,j}$ with
\begin{gather}
    \|H_{i,j}\|\leq
    \begin{cases}
        \order{1}& i=j\\
        \order{\text{d}(i,j)^{-\alpha}}& i\neq j
    \end{cases}, 
\end{gather}
where $\text{d}(i,j)$ represents the distance between sites $i$ and $j$ in the lattice and a large decay factor $\alpha>2D$.

We first detect the short-time advantage for a single local observable $O$ with a constant-size support $S$.
To this end, we truncate $H_P$ so the light cone would not instantly saturate the system.
When executing an $r$-step $\Upsilon$-stage Alg.~\ref{alg:rpf}, the light cone of $O$ remains within the first $r\Upsilon+2$ edge sets.
Fixing a $d_0>0$, we divide $\Lambda$ into two parts: $\Lambda_{in}\coloneqq\{j\in\Lambda\,|\,\text{d}(j,S)\leq(r\Upsilon+1)d_0\}$ and $\Lambda_{out}=\Lambda\backslash\Lambda_{in}$.
We truncate $H_P$ to $H_{\text{lc}}$ by removing interactions involving $\Lambda_{in}$ with lengths longer than $d_0$.
The truncation error is bounded as calculated in~\cite{tran2019locality},
\begin{gather}\label{eq:lc}
    \|\mathrm{e}^{\ii H_Pt}-\mathrm{e}^{\ii H_{\text{lc}}t}\|\leq\|H_P-H_{\text{lc}}\|t\leq\order{\frac{r^Dt}{d_0^{\alpha-2D}}}.
\end{gather}
There are $r\Upsilon+3$ edge sets in $H_{\text{lc}}$.
For $k\leq r\Upsilon+2$, $E_k^{S}$ comprises $\order{k^{D-1}d_0^D}$ qubits, and $\|H_k^S\|_1=\order{k^{D-1}d_0^D}$.
Since $D$ and $\Upsilon$ are constants, Thm.~\ref{alg:rpf} bound the error of an $r$-step $p$th-order formula from Alg.~\ref{alg:rpf} 
\begin{gather}
    \|\mathrm{e}^{\ii Ht}O\mathrm{e}^{-\ii Ht}-U OU^\dagger\|=\order{\frac{\|O\|r^Dt}{d_0^{\alpha-2D}}+\frac{\|O\|d_0^Dt^{p+1}}{r^{p-D}}},
\end{gather} 
with minimizer $d_0=\order{(r/t)^{p/(\alpha-D)}}$.
This result aligns with the local observable analysis in~\cite{childs2021theory}.

As for the global observable consisting of a summation, $O=\sum_{m=1}^MO_m$, the support analysis and short-time advantages still hold for local summands.
We adopt a more general truncation in $H_{\text{trc}}=\sum_{i,j\in\Lambda,\text{d}(i,j)\leq d_0}H_{i,j}$.
Taking the observable $O$ into account, we get a nearly size-independent truncation error with $\tilde{\mathcal{O}}$ omitting the logarithmic terms, which is analyzed in Appendix~\ref{sec:powerlaw},
\begin{gather}\label{eq:trunc}
    \tilde{\mathcal O}\left(\frac{t^{D+1}}{d_0^{\alpha-2D}}\sum_{m=1}^M\|O_m\|\right).
\end{gather}
To regroup $H_{\text{trc}}$, we partition the lattice into $D$-dimensional cubes with a side length of $d_0$.
Each regrouping sub-Hamiltonian comprises nearest-neighbor cube-based interactions, allowing for a $3^D-1$ coloring.
The $k$th edge set of $O_m$, $E_k^{S(O_m)}$, consists of qubits in $\order{k^{D-1}}$ cubes.
The 1-norm $\|H_k^{S(O_m)}\|_1$ scales as $\order{k^{D-1}d_0^D}$.
According to Thm.~\ref{thm:multiple}, the simulation error of an $r$-step $p$th-order formula from Alg.~\ref{alg:mpf} is bounded by
\begin{gather}\label{eq:sumPow}
    \tilde{\mathcal{O}}\left(\sum\nolimits_{m=1}^M\|O_m\|\left(\frac{t^{D+1}}{d_0^{\alpha-2D}}+\frac{d_0^Dt^{p+1}}{r^{p-D}}\right)\right).
\end{gather}
The minimizer of Eq.~\eqref{eq:sumPow} is $d_0=\tilde{\mathcal{O}}((r/t)^{(p-D)/(\alpha-D)})$.

We also analyze the advantage in arbitrary-time simulations with random inputs.
Even though proving a better estimation of $T_2$ than $\order{n}$ for power-law models is generally challenging, our numerical results in Appendix~\ref{sec:append-additional} suggest that $T_2$ tends to scale as $\order{\sqrt{n}}$.
Additionally, our analysis still offers advantages by the normalized Schatten 2-norm of $O$.
For a Pauli-summation observable with evenly distributed coefficients, the average error using an $r$-step $p$th-order Suzuki-Trotter product formula is 
\begin{gather}
    D(\mathscr{U}_0^r,\mathscr{U}^r)_{O,\mu_2}=\order{ \frac{\max_m\|O_m\|\sqrt{M}nt^{p+1}}{r^p}}.
\end{gather}
As a comparison, average-error analysis without observable knowledge only keeps the worst-case result $\order{\|O\|nt^{p+1}r^{-p}}$ for this model~\cite{zhao2022hamiltonian}.

\begin{figure*}[t]
    \centering
    
    \includegraphics[width=0.95\linewidth]{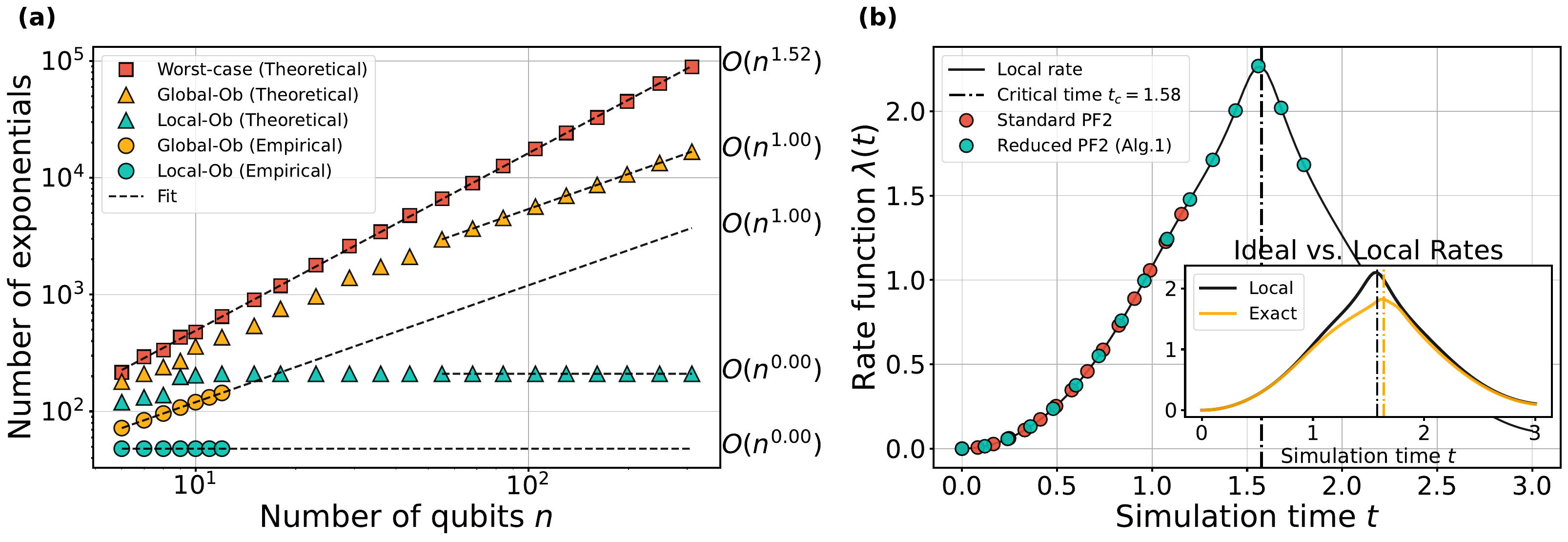}
    \caption{Numerical results for short-time second-order product formula simulations.
    \textbf{(a)} Number of exponentials needed to achieve simulation precision $\epsilon=10^{-3}$ at $t=0.1$ under an MFI Hamiltonian with $J=1$, $h=0.5$, $g=1.2$ as in Eq.~\eqref{eq:H_ising}. 
    The gate counts for local (green) and global (yellow) observables, $Z_1$ and $\frac{1}{n-1}\sum_{j=1}^{n-1}Z_jZ_{j+1}$ are obtained from different theoretical and empirical error estimations. 
    \textbf{(b)} Simulation of dynamical quantum phase transition with $k=3$ local observable as in Eq.~\eqref{eq:echo}.
    We use a 12-qubit TFI Hamiltonian with $J=0.2$ and $h=1$ as in Eq.~\eqref{eq:TFIH}.
    By fixing the precision $\epsilon=0.05$ and a gate budget of 500, the guaranteed simulation times are $t=1.80$ from Thm.~\ref{thm:single} and $t'=1.15$ from worst-case analysis in~\cite{childs2021theory}, with step lengths $\delta t=0.12$ and $\delta t'=0.08$.
    In the inset, we validate the local approximation of the rate function $\lambda_n(t)$ by $\lambda_k(t)$.
    }
    \label{fig:TFIsing}
\end{figure*}

\subsection{Numerical Results} 
In this section, we present numerical results that demonstrate the benefits of incorporating observable knowledge in quantum simulation.
Our results illustrate observable-driven advantages in both short-time and arbitrary-time simulations, utilizing our support and random-input analyses, respectively.
Detailed numerical settings are deferred to Methods and Appendix~\ref{sec:append-num}.

First, we examine the short-time simulation advantages as shown in Fig.~\ref{fig:TFIsing}.
In subplot (a), we focus on the gate counts for simulating an $n$-qubit one-dimensional mixed-field Ising (MFI) Hamiltonian,
\begin{gather}
    H=J\sum_{j=1}^{n-1} X_jX_{j+1} + h\sum_{j=1}^n X_j +g\sum_{j=1}^nY_j.
    \label{eq:H_ising}
\end{gather}
Specifically, we consider the number of exponentials as the gate count required to implement second-order product formulas for simulating the local observable $O=Z_1$ and the global observable $O=\frac{1}{n-1}\sum_{j=1}^{n-1}Z_jZ_{j+1}$ with a fixed precision $\|\mathrm{e}^{\ii Ht}O\mathrm{e}^{-\ii Ht}-UOU^\dag\|\leq\epsilon$.
The theoretical lines are estimated based on error analyses from our Thms.~\ref{thm:single} and~\ref{thm:multiple}, and the worst-case bound in~\cite{childs2021theory}.
These results highlight the observable-driven speed-ups of short-time simulation.
Empirical gate counts from both Algs.~\ref{alg:rpf} and~\ref{alg:mpf} closely align with our theoretical results, suggesting the tightness of our theoretical bounds.

In Fig.~\ref{fig:TFIsing}(b), we demonstrate short-time advantages in the practical task of exploring the local dynamical quantum phase transition (DQPT), which approximates the original DQPT using local observables as introduced in~\cite{halimehLocalMeasuresDynamical2021}.
Specifically, we rehearse the full procedures to estimate the local rate function $\lambda_k(t)\coloneqq-\log(\mathcal{L}_k(t))/k$ using different product formulas with
\begin{gather}\label{eq:echo}
    \mathcal{L}_k(t)\coloneqq\bra{\psi(0)}e^{\ii H t} \prod_{j=1}^k P_j e^{-\ii Ht}\ket{\psi(0)},
\end{gather}
where $k=3$, $\ket{\psi(0)}=\ket{0}^{\otimes n}$, and $P_j:=\op{0}_j$ is the projector on the $j$-th qubit.
We adopt a one-dimensional transverse-field Ising (TFI) Hamiltonian
\begin{gather}\label{eq:TFIH}
    H=J\sum_{j=1}^{n-1} Z_jZ_{j+1}+h\sum_{j=1}^n X_j,
\end{gather}
with $n=12$.
Fixing the precision $\epsilon$ and a gate budget, our bound from Thm.~\ref{thm:single} and the worst-case analysis from~\cite{childs2021theory} report the numbers of steps used and guaranteed simulation times, exceeding which the simulation error cannot be bounded by $\epsilon$.
Using these parameters, we implement the second-order Alg.~\ref{alg:rpf} and standard product formula to estimate the rate functions and get empirical results.
Our bound offers a longer guaranteed simulation time, increasing by 50\% compared to the worst-case bound.
This gap widens with increasing system size.
Clearly, the worst-case analysis fails to capture the critical point with the given constraints.

\begin{figure*}[t]
    \centering
    \includegraphics[width=0.96\linewidth]{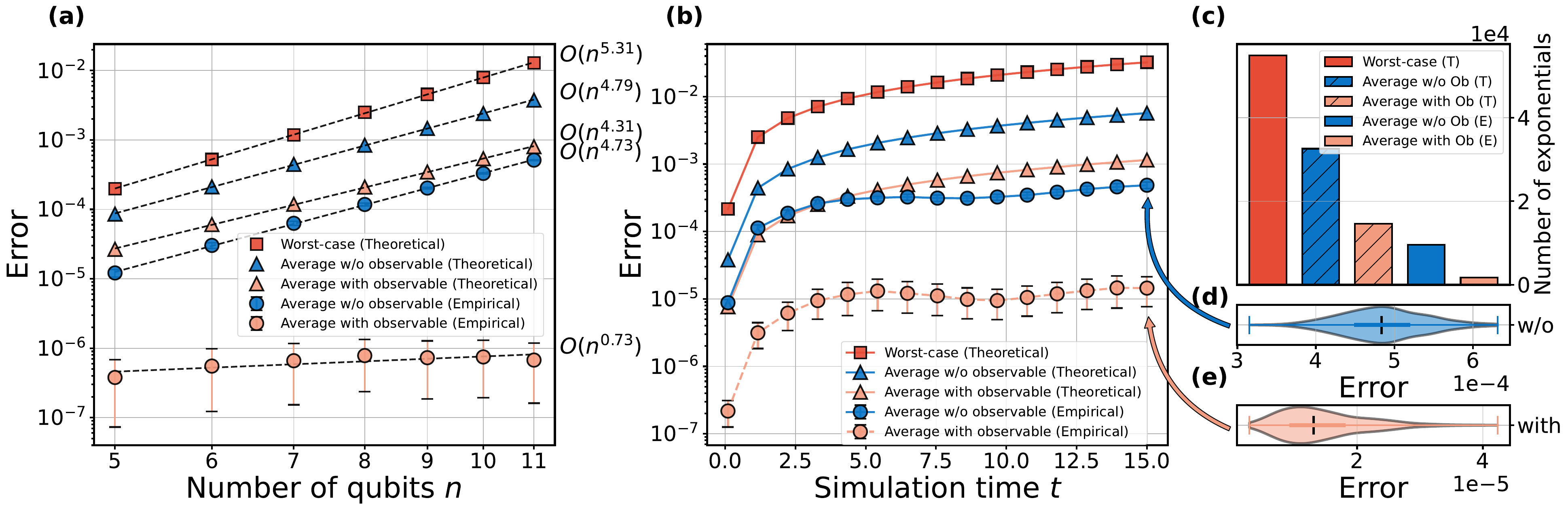}
    \caption{Numerical results for random-input second-order product formula simulations.
    \textbf{(a)} Errors for simulating $\sum_jZ_j$ under a power-law Hamiltonian with $\alpha=4$, $J=1$, and $h=0.5$ as in Eq.~\eqref{eq:power-law}. 
    We simulate dynamics with increasing sizes, setting $t=n$ with a fixed step number $r=10000$. 
    Error bars indicate standard deviations of simulation errors across 500 independent Haar-random states per empirical point.
    \textbf{(b)} Errors for simulating the dynamics of a 6-qubit Hydrogen chain $\text{H}_3$ with bond length $2\mathring{A}$ as in Eq.~\eqref{eq:molecule}, with observable being another $\text{H}_3$ Hamiltonian at bond length $1\mathring{A}$. 
    We fix the step length $t/r=0.1$.
    Error bars represent standard deviations estimated by 500 independent Haar-random states.
    \textbf{(c)} The numbers of exponentials required to ensure average errors smaller than $\epsilon=10^{-3}$ in different theoretical and empirical analyses at $t=15$
    We choose the same dynamics and observable as (b).
    Two empirical bars represent the mean value obtained from 50 rounds of sampling for 50 random states.
    The error bars, denoting standard deviations, are negligibly small.
    \textbf{(d)} and \textbf{(e)} Insets of (b) depicting empirical error distributions at $t=15$ from 500 random states for cases with and without observable knowledge, respectively. }
    \label{fig:random}
\end{figure*}

For the random-input result in Fig.~\ref{fig:random}, we first estimate the (average) simulation distances for different analyses of a one-dimensional power-law decaying Hamiltonian,
\begin{gather}\label{eq:power-law}
    H=\sum_{i=1}^n\sum_{j=i+1}^n\frac{J}{(j-i)^\alpha}(X_iX_j+Y_iY_j+Z_iZ_j)+h\sum_{j=1}^nX_j.
\end{gather}
We focus on the observable $O=\sum_{j=1}^{n}Z_j$.
The theoretical lines are based on the worst-case analysis~\cite{childs2021theory}, the previous random-input analysis without observable knowledge from~\cite{zhao2022hamiltonian}, and our Cor.~\ref{co:product}.
As shown in Fig.~\ref{fig:random}(a), our analysis is significantly tighter than previous random-input analysis, improved by $\order{\sqrt{n}}$.

We exhibit the random-input advantage in a more practical setting in subplots (b)-(e) by simulating the Hamiltonian of a three-atom Hydrogen chain $\text{H}_3$ 
\begin{equation}
    H = H_0 + \sum_{pq}h_{pq}a_p^\dagger a_q + \frac{1}{2} \sum_{pqrs}h_{pqrs} a_p^\dagger a_q^\dagger a_r a_s,
    \label{eq:molecule}
\end{equation}
targeting the observable of another $\text{H}_3$ Hamiltonian with a different bond length.
Although the molecular Hamiltonian is complicated after the Jordan-Wigner transformation, we can efficiently estimate the Schatten 2-norm to bound the error tightly using Cor.~\ref{co:product} even for a large $n$.
As shown in (b), incorporating observable knowledge significantly reduces both theoretical and empirical errors.
Subplot (c) shows the numbers of exponentials required to implement second-order product formulas within a precision of $\epsilon=10^{-3}$ in different cases.
Violins in subplots (d) and (e) illustrate the empirical distributions of simulation errors from Haar-random input states.

\section{Discussion}
Our study comprehensively analyzes the advantages of incorporating observable knowledge in short-time and arbitrary-time simulation scenarios.
In short-time simulations, utilizing
observables knowledge leads to size-independent errors for both local observables and global observables composed of summations over local terms.
In the arbitrary-time case, we discover that observable knowledge generally offers advantages for the average simulation errors from random inputs.

Detecting further speed-ups from observable knowledge for simulating general times presents an interesting direction.
While we focus on average simulation accuracy in this work, a more thorough analysis of observable-driven advantages in other input settings could generate valuable insights.
Additionally, our previous numerical results indicate that the empirical average simulation accuracy can vastly exceed theoretical predictions, suggesting room for future improvements, as shown in Fig.~\ref{fig:random}.

Experimental implementations of the proposed simulation algorithms also hold promise for better performance in practice. 
As shown in our study, theoretical bounds, including short-time and arbitrary-time simulation errors, can be easily calculated in advance. 
These results could lead to more accurate and resource-efficient gate count estimation in some complicated tasks related to quantum chemistry and high-energy physics. 
Such progress could narrow the gap between realizing quantum advantages and near-term experimental capabilities.

\section*{Methods}

\subsection{Sketch proofs for short-time local observables}
In the case of a single local observable within a short-time evolution, we adopt the interactive decomposition and even-odd permutation to implement the product formula.
By using this product formula, the support of the observable slowly expands along with time steps, which is optimal from Lemma~\ref{lm:optimal}.
Therefore, unitary gates outside the support never get involved in the evolution of the observable.
Simply removing them reduces the gate count of this product formula as stated in Alg.~\ref{alg:rpf}.
According to Thm~\ref{thm:single}, this reduced formula offers both a better error bound and a better gate count.

We first sketch the proof of Lemma~\ref{lm:optimal} starting from the first stage of the product formula.
By definition, each qubit in $E_1^S$ must be within the support of some Pauli term $P_\alpha$ in $H$ such that $S(P_\alpha)\cap S\neq \emptyset$.
Thus, the lower bound holds for $\Upsilon=1$,
\begin{gather}
    \left(\biguplus_{\upsilon=1}^{\Upsilon}\biguplus_{\gamma=1}^{\Gamma}\mathrm{e}^{\ii ta_{(\upsilon,\gamma)}H_{\pi_\upsilon(\gamma)}}\uplus S\right)\supseteq \bigcup_{k=0}^\Upsilon E_k^S.
\end{gather}
The proof extends inductively for larger $\Upsilon$ according to the definition of edge sets.
The optimality of the product formula with interactive decomposition and even-odd permutation is also proven inductively.
Starting with $\Upsilon=1$, we show that $S(H_j^S)\cap S(H_k^S)\neq\emptyset$ only for $|j-k|=1$.
Therefore, we have
$\biguplus_{\gamma=1}^{\Gamma}\mathrm{e}^{\ii ta_{(1,\gamma)}H^S_{\pi^{\text{eo}}_1(\gamma)}}\uplus S= E_1^S\cup S$.
This equality can be easily proved inductively for larger $\Upsilon$, which completes the proof.

Alg.~\ref{alg:rpf} is implemented by removing all exponentials disjoint with the support as suggested by Lemma~\ref{lm:optimal}.
Unlike the standard Suzuki-Trotter formula, we cannot directly employ the nested-commutator bound in~\cite{childs2021theory} for this algorithm.
To prove Thm.~\ref{thm:single}, we fill exponentials in Alg.~\ref{alg:rpf} make each step a standard formula of $H$.
We denote this filled method the virtual product formula
\begin{gather}
    \mathscr{S}_{\text{V}}(t)\coloneqq\mathscr{S}_1(\tau)\cdots\mathscr{S}_r(\tau).
\end{gather}
For example, by the end of step $j$, sub-Hamiltonians $\{H_1^S,\cdots,H_{j\Upsilon}^S\}$ would intersect with the light cone.
Hence, we fill step $j$'s formula to $\mathscr{S}_j(\tau)$ with even-odd permutation and decomposition
\begin{gather}
    H=H_1^S+\cdots+H_{j\Upsilon}^S+H_{\text{others},j}.
\end{gather}
Therefore, we can use the triangle inequality to bound the error between $\mathrm{e}^{\ii Ht}$ and $U$ by the following two parts: the error between $U$ and $\mathscr{S}_V(t)$, and the error from $\mathscr{S}_V(t)$ to the ideal $\mathrm{e}^{\ii Ht}$.
It is easy to check that the support of the evolved $O$ under the virtual formula still expands optimally.
In this sense, the exponential of $H_{\text{others},j}$ is outside the support in each step $j$ and does not affect the evolved $O$, which means the first error is zero.
The second error can be bounded by the nested-commutator bound, which is linear with $\sum_{k=0}^{r\Upsilon+1}\|H_k^S\|_1$.
Combining these two parts, we get the error bound linear with the ``width" of the light cone.
\subsection{Sketch proofs for short-time global observables}
In this section, we focus on simulating global observables consisting of a summation of local observables.
For simplicity, we assume all summand observables are commutative.
Otherwise, we can always divide the summation into different stabilizer groups and estimate them independently.
This analysis hinges on a clever design of the configuration such that all observables expand slowly.
To this end, we introduce the interaction hypergraph in Def.~\ref{def:graph} and the edge coloring thereof, which suggest the regrouping decomposition and coloring permutation, respectively.
The corresponding product formula is summarized in Alg.~\ref{alg:mpf}. 

Lemma~\ref{lm:global_support} asserts that a $\Upsilon$-stage Alg.~\ref{alg:mpf} with $\chi$ colors expands an arbitrary support $S(O)$ to its first $(\chi-1)\Upsilon+1$ edge sets.
Note that the sub-Hamiltonians within a single color are mutually disjoint, allowing us to consider the collection of exponentials of the same color as a unified entity in the implementation.
Starting from the first color, exponentials contribute to the expansion of $S(O)$ only if sub-Hamiltonians overlap with $S(O)$, so 
\begin{gather}
   \biguplus_{\gamma=1\to L:\varphi(S_\gamma)=1}\mathrm{e}^{\ii a_{(1,\gamma)}\tau H_{S_\gamma}}\uplus S(O)\subseteq S(O)\cup E_1^{S(O)}. 
\end{gather}
The same illustration applies to subsequent colors and stages, so each color and stage of the product formula enlarges the support by at most one layer.
Alg.~\ref{alg:mpf} employs the back-and-forth permutation, ensuring continuity of colors between stages. 
Therefore, there are altogether $(\chi-1)r\Upsilon+1$ effective colors, which completes the proof.

As for the error bound for simulating $O=\sum_{m=1}^MO_m$ in Thm.~\ref{thm:multiple}, we analyze the error for every single summand observable and combine them by the triangle inequality.
For each summand $O_m$, we introduce the corresponding virtual product formula according to its own edge sets, 
\begin{gather}
    \mathscr{S}_{\text{V},m}(t)\coloneqq\mathscr{S}_{1,m}(\tau)\cdots\mathscr{S}_{r,m}(\tau).
\end{gather}
For example, since by the end of step $j$, the support of $O_m$ at most reaches $\bigcup_{i=0}^{(\chi-1)j\Upsilon+1}E_i^{S(O_m)}$, the virtual formula keeps the coloring decomposition of sub-Hamiltonians intersecting with the light cone and absorb all other terms in the tail.
Considering the commutation relationships regarding the evolved $O_m$, all unitaries outside the light cone are ineffective in both $\mathscr{S}_{\text{V},m}(t)$ and Alg.~\ref{alg:mpf}.
Therefore, simulations for $O_m$ from $\mathscr{S}_{\text{V},m}(t)$ and $U$ are equivalent.
We then analyze simulation error between $\mathscr{S}_{\text{V},m}(t)$ and $\mathrm{e}^{\ii Ht}$.
The nested commutator of the $j$th virtual formula for $O_m$ scales as $\order{\sum_{i=0}^{j\chi\Upsilon+2}h_{m,i}}$.
Combining simulation errors for all $j$ and $O_m$ according to the triangle inequality completes the analysis of overall simulation errors.
\subsection{Sketch proofs for random-input simulation}
We consider the proof for Thm.~\ref{thm:random}. 
First, we divide the distance $D(\mathscr{U}_0^r,\mathscr{U}^r)_{O,\mu_2}$ using the triangle inequality so that each segment represents the single-step error in the following form
\begin{gather}
    \|(\mathscr{U}_0^\dag U_2^\dag\ket{\psi}\bra{\psi}U_2\mathscr{U}_0-\mathscr{U}^\dag U_2^\dag\ket{\psi}\bra{\psi}U_2\mathscr{U})\cdot U_1OU_1^\dag\|_1,
\end{gather}
where $U_1=\mathscr{U}_0^{i-1}$ and $U_2=\mathscr{U}^{r-i}$ for step $i$.
Using the H\"{o}lder inequality, we can decompose this distance into the product of two Schattern 2-norms.
Recalling that the state $\ket{\psi}\bra{\psi}$ is sampled from a 2-design ensemble, we can discard these two unitaries $U_1$ and $U_2$ due to the unitary invariance of the Schattern norms and the 2-design property, making these estimations in all segments the same as
\begin{gather}
   \int_\psi\|(I+\mathscr{M})^\dag\ket{\psi}\bra{\psi}(I+\mathscr{M})-\ket{\psi}\bra{\psi}\|_2d\mu_2.
\end{gather}
We then use  $\int_\psi\ket{\psi}\bra{\psi}^{\otimes2}d\mu_2=\frac{I+S}{d(d+1)}$ to bound the single-step error.
The variance proof follows a similar approach. 
For Cor.~\ref{co:product}, we recruit the similar result from~\cite{zhao2022hamiltonian} to bound $\|\mathscr{M}\|_2$ by the nested commutator and further generate the second-order error bound.
Refer to Appendix~\ref{Sec:Append-random} for full details.

\subsection{Numerical Settings}
In Fig.~\ref{fig:TFIsing}, we evaluate the gate complexities in short-time simulation for different tasks.
In subplot (a), we estimate the gate count to simulate observables under an MFI Hamiltonian within the $10^{-3}$ precision.
We estimate the worst-case theoretical error of the second-order Suzuki-Trotter formula using the explicit version of the nested-commutator bound, Prop.~10, in~\cite{childs2021theory}.
This proposition is also used to derive explicit versions of Thms.~\ref{thm:single} and~\ref{thm:multiple} as substitutes for the asymptotic version of the nested-commutator bounds.
To efficiently compute the error estimation for large $n$, we adapt operator norms in all bounds to the 1-norms (summation of Pauli coefficients) according to the triangle inequality.
Empirical errors are faithfully calculated from $\|\mathrm{e}^{\ii Ht}O\mathrm{e}^{-\ii Ht}-UOU^\dag\|$ with $U$ adopted from Alg.~\ref{alg:rpf} or~\ref{alg:mpf}.
With these error analyses, we perform a binary search for the desired number of steps.
In subplot (b), we explore the DQPT using the local observable.
With the fixed precision and gate budget, we estimated guaranteed simulation times from the explicit versions of Thm.~\ref{thm:single} and the worst-case bound.
Refer to Appendix~\ref{sec:append-num} for detailed bounds.

In Fig.~\ref{fig:random}, we focus on the random-input results of various models.
In (a), we exhibit average-case errors with or without observables and the worst-case errors along system sizes.
The average errors are calculated from Cor.~\ref{co:product} and Thm.~3 from \cite{zhao2022hamiltonian}, respectively.
The empirical average errors are estimated from 500 Haar-random states, with blue circles denoting the random errors without observable knowledge $\|\mathrm{e}^{-\ii Ht}\rho\mathrm{e}^{\ii Ht}-U^\dag\rho U\|_1\|O\|$, and orange circles denoting our observable-induced case $\Tr[(\mathrm{e}^{-\ii Ht}\rho\mathrm{e}^{\ii Ht}-U^\dag\rho U)O]$.
For subplot (b), we simulate Hydrogen chain Hamiltonian $\text{H}_3$ with a bond length $2\mathring{A}$ and measure on another $\text{H}_3$ at $1\mathring{A}$.
Hydrogen chain Hamiltonians from Eq.~\eqref{eq:molecule} with specific bond length and the sto-3g basis are generated by~\cite{mccleanOpenFermionElectronicStructure2020}.
Notably, since the observable consists of anti-commutative Pauli operators, they cannot be simultaneously measured by a projective measurement.
To simplify the product formula and realize measurements, we decompose both the Hamiltonian and observable (two $\text{H}_3$ Hamiltonians) into multiple mutually commutative subsets (stabilizer sets) by a simple greedy algorithm.
The figure depicts overall errors for all subsets with a fixed $t/r$.
In (c), we similarly use the binary search to find the same $r$ for all subsets to calculate the total numbers of exponentials.
Error analyses are the same as (b).

\section*{Acknowledgments}
We are grateful to Pei Zeng, You Zhou, Xiao Yuan, Giulio Chiribella, and Steven T. Flammia for the helpful discussions and comments.
W.Y. is supported by the HKU Presidential Scholarship.
J.X. is supported by the HKU Postgraduate Scholarship.
Q.Z. acknowledges funding from the HKU Seed Fund for Basic Research for New Staff via Project 2201100596, the Guangdong Natural Science Fund via Project 2023A1515012185, the National Natural Science Foundation of China (NSFC) via Project Nos. 12305030 and 12347104, Hong Kong Research Grant Council (RGC) via Project No. 27300823, N\_HKU718/23, and R6010-23, Guangdong Provincial Quantum Science Strategic Initiative GDZX2200001. 


\bibliography{reference}

\setcounter{theorem}{0}
\setcounter{lemma}{0}
\setcounter{proposition}{0}
\setcounter{definition}{0}
\setcounter{corollary}{0}
\setcounter{algocf}{0}
\clearpage
\onecolumngrid
\appendix

\section{Preliminaries}
\subsection{Properties of Hamiltonians}\label{sec:Hamil}
In the case of dynamic evolutions of closed $n$-qubit systems, Hamiltonian operators guided the evolution processes according to Schr\"{o}dinger's equation.
Due to the Hermiticity of Hamiltonian operators, we can always decompose them on $n$-fold Pauli operators with real coefficients.
Namely, for a Hamiltonian $H$
\begin{gather}
    H=\sum_{\alpha\in {\sf P}^n}s_\alpha P_\alpha,\ \forall\alpha \in {\sf P}^n\ s_\alpha\in\mathbb{R},
\end{gather}
where ${\sf P}^n$ represents the quotient Pauli group on phases.
We further introduce some constraints for the Hamiltonian.
The first is the conventional locality of Hamiltonian.
\begin{definition}\label{def:constriant-local}
    Consider an $n$-qubit Hamiltonian operator $H=\sum_{\alpha\in{\sf P}^n}s_\alpha P_\alpha$.
    We call it an $\ell$-local Hamiltonian when $s_\alpha\neq0$ only if the weight of the Pauli operator $P_\alpha$ is no larger than $\ell$.
\end{definition}
In order to use the locality constraint of the Hamiltonian to limit the effective interaction range, we also need the following bound of the interaction strength for each qubit.
\begin{definition}\label{def:constriant-qubitwise}
    We define the interaction per qubit of this Hamiltonian $H$ by $b$: 
    \begin{gather}
        b\coloneqq\max_{j\in[n]}\sum_{\alpha: j\in S(P_\alpha)}\abs{s_\alpha}.
    \end{gather}
\end{definition}
The norm we use to quantify scales of the Hamiltonian is the 1-norm based on the Pauli decomposition. 
\begin{definition}
    For a Hermitian operator $H=\sum_{\alpha\in{\sf P}^n}s_\alpha P_\alpha$, we define the 1-norm to be
    \begin{gather}
        \|H\|_1\coloneqq\sum_{\alpha\in{\sf P}^n}|s_\alpha|.
    \end{gather}
\end{definition}
The problem of digital quantum simulation is to construct a quantum circuit from digital quantum gates to approximate an evolution $\mathrm{e}^{-\ii Ht}$ with specified $H$ and time $t$.
Our analysis, particularly in the context of short-time analysis, centers on the evolution of specific families of observables.
Therefore, we can extend to approximate the evolved observable $O$ in Heisenberg's picture, $\mathrm{e}^{\ii Ht}O\mathrm{e}^{-\ii Ht}$.

\subsection{Product Formula}
The product formula is a commonly used technique for approximating matrix exponentials, particularly in the context of quantum simulation where it is favored for its simplicity.
This method involves decomposing the target matrix into sub-matrices and then implementing exponentials of these sub-matrices in sequence. 
In the case of Heisenberg's picture, a general product formula for the decomposition $H=\sum_{\gamma=1}^\Gamma H_\gamma$ can be outlined as
\begin{gather}\label{eq:ap-product}
    \mathscr{S}(t)\coloneqq\prod_{\upsilon=1}^{\Upsilon}\prod_{\gamma=1}^\Gamma\mathrm{e}^{\ii ta_{(\upsilon,\gamma)}H_{\pi_{\upsilon}(\gamma)}},
\end{gather}
where $\Upsilon$ is the number of stages involved, $a_{(\upsilon,\gamma)}$ denotes the coefficient of each exponential, and $\{\pi_\upsilon\}$ represents a family of permutations over different sub-Hamiltonians.
To be consistent with the convention, we use the notation $\prod$ to denote the product of several operators from right to left, \emph{i.e.},
\begin{gather*}
    \prod_{\gamma=1}^\Gamma U_\gamma=U_\Gamma\cdots U_2U_1.
\end{gather*}
Based on the general structure of the product formula, we would like to introduce a specific method that is widely used in Hamiltonian simulation, which is known as \emph{Suzuki-Trotter} formula~\cite{suzuki1991general}.
The Suzuki-Trotter formula is a family of recursively defined formulas, and we start the elaboration from the first-order case.
The first-order Suzuki-Trotter simply implements the exponentials of sub-Hamiltonians sequentially in a certain order $\pi$ with only one stage as
\begin{gather}
    \mathscr{S}_1(t)\coloneqq\prod_{\gamma=1}^\Gamma\mathrm{e}^{\ii H_{\pi(\gamma)} t}.
\end{gather}
The second-order Suzuki-Trotter formula is in a back-and-forth manner
\begin{gather}
    \mathscr{S}_2(t)\coloneqq\prod_{\gamma=\Gamma}^1\mathrm{e}^{\ii H_{\pi(\gamma)} t/2}\cdot\prod_{\gamma=1}^\Gamma\mathrm{e}^{\ii H_{\pi(\gamma)} t/2}.
\end{gather}
For an even integer $p$, the further definition of $p$th-order formulas comes from a combination of lower-order formulas,
\begin{gather}\label{eq:suzuki}
    \mathscr{S}_{p}(t)\coloneqq\mathscr{S}_{p-2}^2(u_pt)\mathscr{S}_{p-2}((1-4u_p)t)\mathscr{S}_{p-2}^2(u_pt),
\end{gather}
where $u_p=1/(4-4^{1/(p-1)})$.
It is clear to see that a $p$th-order formula requires $2\times 5^{p/2-1}$ stages and the coefficients are recursively calculated.
Based on Suzuki's analysis~\cite{suzuki1993general}, for short time $t\ll1$, the $p$th-order formula approximates the ideal Hamiltonian exponential $\mathrm{e}^{-\ii Ht}$ with error proportional to $t^{p+1}$.

As per the iterative definition of the Suzuki-Trotter formula, the coefficients $a_{(\upsilon,\gamma)}$ and stage number $\Upsilon$ in Eq.~\eqref{eq:ap-product} are entirely determined by $p$. 
The remaining parameters to be determined are the decomposition and the symmetric permutation, collectively referred to as the configuration. 
In subsequent analysis, our primary focus lies on identifying advantageous configurations to achieve improved scalings for short-time simulation.

\subsection{Largest Possible Support Propagation}
\begin{definition}\label{def:oplus}
Given a set $S$ of qubits and a unitary $U$ that acts non-trivially on the support set $S(U)$, we define the operation $\uplus$ to be
\begin{gather}
    U\uplus S\coloneqq \begin{cases}
    S\cup S(U)& S\cap S(U)\neq\emptyset\\
    S& \text{otherwise}
    \end{cases}.
\end{gather}
\end{definition}
\begin{remark}
    \rm The operation $\uplus$ estimates the upper bound of the possible support of $U OU^\dagger$, which depicts the expansion of the support $S$ of $O$ under unitary operations.
    Therefore, the worst-case is $S\cup S(U)$ where support is propagated to the largest set under the unitary operation.
    It is worth noting that with more details of the unitary, the upper bound might shrink.
    Namely, we have 
    \begin{gather*}
        U_2\uplus(U_1\uplus S)\neq (U_2\cdot U_1)\uplus S.
    \end{gather*}
    This inequality also hints that implementing unitaries in different sequences causes different support analyses, which we will make use of later.
\end{remark}
\begin{observation}\label{ob:monoton}
    Consider an arbitrary unitary $U$ and two sets $S_1\subseteq S_2$. The worst-case supports satisfy
    \begin{gather}
        U\uplus S_1\subseteq U\uplus S_2.
    \end{gather}
    Similarly, consider two unitary $U_1,U_2$ that $S(U_1)\subseteq S(U_2)$ and an arbitrary set $S$. The worst-case supports satisfy
    \begin{gather}
        U_1\uplus S\subseteq U_2\uplus S.
    \end{gather}
\end{observation}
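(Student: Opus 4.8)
The statement to prove is Observation~\ref{ob:monoton}, which asserts two monotonicity properties of the worst-case support operation $\uplus$: monotonicity in the support argument $S$, and monotonicity in the unitary argument $U$. My plan is to prove each claim directly from the case definition of $\uplus$ in Definition~\ref{def:oplus}, treating the two conditional branches separately.

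For the first claim with $S_1\subseteq S_2$, I would proceed by a case analysis on whether $S_1$ intersects $S(U)$. The key observation is that if $S_1\cap S(U)\neq\emptyset$, then since $S_1\subseteq S_2$ we automatically get $S_2\cap S(U)\neq\emptyset$, so both sides fall into the ``intersecting'' branch and equal $S_1\cup S(U)$ and $S_2\cup S(U)$ respectively; monotonicity of union then gives $S_1\cup S(U)\subseteq S_2\cup S(U)$. If instead $S_1\cap S(U)=\emptyset$, then $U\uplus S_1=S_1$, and regardless of which branch $S_2$ falls into, the right-hand side contains $S_2\supseteq S_1$, so the inclusion holds. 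The only branch that cannot occur is $S_2$ non-intersecting while $S_1$ intersects, precisely because $S_1\subseteq S_2$.

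For the second claim with $S(U_1)\subseteq S(U_2)$ and fixed $S$, I would again split on whether $S\cap S(U_1)=\emptyset$. If $S\cap S(U_1)\neq\emptyset$, then $S\cap S(U_2)\neq\emptyset$ as well (since $S(U_1)\subseteq S(U_2)$), so the left side is $S\cup S(U_1)$ and the right side is $S\cup S(U_2)$, and the inclusion follows from $S(U_1)\subseteq S(U_2)$. If $S\cap S(U_1)=\emptyset$, then $U_1\uplus S=S$, and since the right-hand side always contains $S$ in either branch, the inclusion again holds trivially.

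I do not anticipate a genuine obstacle here, since the claim is essentially a bookkeeping exercise over the piecewise definition; the only point requiring care is ensuring the case split is exhaustive and that the ``bad'' combination of branches (where the smaller object lands in the union branch while the larger lands in the trivial branch) is ruled out by the hypothesis, which it is in both parts. Accordingly I would present the argument compactly as two short case analyses rather than invoking any external machinery.
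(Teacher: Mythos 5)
Your proof is correct: both case analyses are exhaustive, and the hypothesis indeed rules out the only problematic branch combination in each claim. The paper states this observation without proof, treating it as immediate from Definition~\ref{def:oplus}, and your direct case-by-case verification is exactly the routine argument the paper implicitly relies on.
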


The worst-case support propagation estimated by the operation $\uplus$ identifies those relevant unitaries out from irrelevant ones.
For example, suppose we want to sequentially implement unitary operations as $U_{\Gamma}\cdots U_2U_1OU^\dagger_1U^\dagger_2\cdots U^\dagger_{\Gamma}$.
If $S(U_\Gamma)\cap \biguplus_{\gamma=1}^{\Gamma-1}U_\gamma\uplus S(O)=\emptyset$, the unitary $U_\Gamma$ causes no effects and can be ignored during implementation.
Consequently, we can sift those effective unitary operations and vastly reduce the number of gates needed to execute a sequence of unitary evolution.

In this paper, we mainly consider worst-case supports and the possible reduction of product formula algorithms, which recruit sequential unitary operations.
Moreover, while different configurations cause little to no distinctions in error scaling, they result in varying support propagations.
In this sense, a careful design of the product formula can suppress the propagation to the slowest case, which makes the most of the reduction from irrelevant unitaries. 
In the following two sections, we introduce the design method of the optimally slow configurations in short-time simulation of local observables.

\section{Short-Time Simulation for Local Observables}\label{sec:append-short}
According to Heisenberg's picture, the measurement of evolved state $\Tr(\mathrm{e}^{-\ii Ht}\rho\mathrm{e}^{\ii Ht}O)$ equals measuring $\rho$ on the evolved operator $\mathrm{e}^{\ii Ht}O\mathrm{e}^{-\ii Ht}$.
In the remainder of this paper, we mainly illustrate our analyses from Heisenberg's picture.

To explore the advantage of the short-time simulation, we need to suppress the expansion of the observable in our simulation circuit.
To this end, we introduce the following configuration based on the interaction structure of Hamiltonian regarding the specific $O$.
This can help to organize a proper product formula for the short-time simulation.
\begin{definition}\label{def:step_edge}
Suppose we have an $n$-qubit Hamiltonian operator $H=\sum_{\alpha\in{\sf P}^n}s_\alpha P_\alpha$. 
Given a set $S$ of qubits, we can define the intrinsic Hamiltonian on $S$ to be $H_0^S\coloneqq \sum_{\alpha:S(P_\alpha)\subset S}s_\alpha P_\alpha$.
For simplicity, we use $E_0^S$ to denote $S$.
For a positive integer $k$, we can further define sub-Hamiltonians $H_k^S$ and edge sets $E_k^S$, respectively,
\begin{align}\label{eq:edgeset}
            H_k^S&\coloneqq \sum_{\substack{\alpha:P_\alpha\notin H_{k-1}^S\\S(P_\alpha)\cap E_{k-1}^S\neq\emptyset}}s_\alpha P_\alpha,\ \ \ 
    E_k^S\coloneqq S(H_k^S)-E_{k-1}^S,
\end{align}
where $P_\alpha\notin H_{k-1}^S$ means $P_\alpha$ is not a summand in the linear combination of $H_{k-1}^S$.
This induction stops when we cannot find any more terms of the Hamiltonian, and we assume a total of $\Gamma_0+1$ edge sets.
\end{definition}
We will show shortly that edge sets reveal the fundamental structures of interactions since they imply the tight lower bounds of the expanding support of $O(t)$.
Note that we can always decompose the system and Hamiltonian in this form regardless of the specific settings.

To complete the illustration of the configuration for product formulae, we adopt the even-odd permutation to be
\begin{align}\label{eq:evenodd}
    \pi^{\text{eo}}_\upsilon(0,1,2,3,4,5,\cdots)=
    \begin{cases}
        0,2,4,\cdots, 1,3,5,\cdots & \text{$\upsilon$ is odd}\\
         1,3,5,\cdots,0,2,4,\cdots & \text{$\upsilon$ is even}
    \end{cases}.
\end{align}
This even-odd sequence is symmetric, which implies the interactive decomposition and even-odd permutation is a valid configuration for the Suzuki-Trotter formula.
The following lemma indicates that the corresponding formula can suppress the propagation of support optimally.
\begin{lemma}\label{lm:ap-optimal}
Consider an operator $O$ with support $S$, an $n$-qubit Hamiltonian operator $H$, and a $\Upsilon$-stage product formula as in Eq.~\eqref{eq:ap-product} with an arbitrary decomposition and permutation. 
The largest possible support expands as 
\begin{gather}\label{eq:ap-worst-case-prod}
\left(\biguplus_{\upsilon=1}^{\Upsilon}\biguplus_{\gamma=1}^{\Gamma}\mathrm{e}^{\ii ta_{(\upsilon,\gamma)}H_{\pi_\upsilon(\gamma)}}\uplus S\right)\supseteq \bigcup_{k=0}^\Upsilon E_k^S,
\end{gather}
with equality holds with the decomposition in Eq.~\eqref{eq:edgeset} and permutation in Eq.~\eqref{eq:evenodd}.
\end{lemma}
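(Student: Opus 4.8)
The plan is to prove the two claims of Lemma~\ref{lm:ap-optimal} separately: first the lower bound $\bigcup_{k=0}^\Upsilon E_k^S$ on the worst-case support for \emph{any} configuration, and then the tightness (equality) for the specific interactive decomposition of Eq.~\eqref{eq:edgeset} together with the even-odd permutation of Eq.~\eqref{eq:evenodd}. Both halves are naturally handled by induction on the stage number $\Upsilon$, so I would set up the induction carefully and make sure the inductive hypothesis is strong enough to propagate.

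For the lower bound, I would argue that after $\Upsilon$ stages the support must contain $\bigcup_{k=0}^\Upsilon E_k^S$ regardless of how the Hamiltonian is decomposed or permuted. The base case $\Upsilon=0$ is trivial since the support is exactly $E_0^S=S$. For the inductive step, I would use the defining property of $E_k^S$ from Eq.~\eqref{eq:edgeset}: every qubit in $E_k^S$ lies in $S(P_\alpha)$ for some Pauli term $P_\alpha$ that overlaps $E_{k-1}^S$. The key observation is that whatever the decomposition, the Pauli term $P_\alpha$ responsible for a given qubit of $E_k^S$ must appear inside \emph{some} exponential factor $\mathrm{e}^{\ii t a_{(\upsilon,\gamma)}H_{\pi_\upsilon(\gamma)}}$, and since that factor overlaps the already-accumulated support (which by hypothesis contains $E_{k-1}^S$), the operation $\uplus$ must absorb all of $S(P_\alpha)$, hence that qubit. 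Running this through the $\Upsilon$ stages and invoking Observation~\ref{ob:monoton} (monotonicity of $\uplus$) to compare against a coarser decomposition gives the containment. The subtle point here is that a single stage may already grow the support by more than one edge-set layer, so the lower bound on what is \emph{guaranteed} to appear is exactly one new layer $E_\Upsilon^S$ per stage—I would verify this layer-per-stage accounting is correct.

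For the equality, the crux is to show that with the interactive decomposition and even-odd permutation the growth is \emph{exactly} one layer per stage, i.e.\ the support after $\Upsilon$ stages is \emph{precisely} $\bigcup_{k=0}^\Upsilon E_k^S$ and no more. The structural fact I would extract first is a \emph{locality-of-interaction} property of the interactive decomposition: $S(H_j^S)\cap S(H_k^S)\neq\emptyset$ only when $|j-k|\le 1$, because by construction $H_k^S$ lives on $E_{k-1}^S\cup E_k^S$ and consecutive edge sets are the only ones that touch. Given this, within a single stage the even-odd ordering groups the even-indexed sub-Hamiltonians together and the odd-indexed ones together; because same-parity sub-Hamiltonians are mutually disjoint, applying all of them can advance the frontier by at most one layer, and the parity ordering prevents any $H_k^S$ from acting on support created earlier \emph{in the same stage} by a higher-index term. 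I would make this precise by an inner induction showing $\biguplus_{\gamma}\mathrm{e}^{\ii t a_{(1,\gamma)}H^S_{\pi^{\text{eo}}_1(\gamma)}}\uplus S = E_1^S\cup S$ for the first stage, then propagate across stages using the back-and-forth (symmetric) structure of Eq.~\eqref{eq:evenodd} so that the frontier continues to advance by exactly one layer per stage.

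The main obstacle I anticipate is the equality direction, specifically controlling the \emph{order} in which exponentials act within a stage so that the frontier does not jump ahead by two layers. The operation $\uplus$ is order-sensitive—as the remark after Definition~\ref{def:oplus} stresses, $U_2\uplus(U_1\uplus S)\neq(U_2 U_1)\uplus S$—so the argument genuinely depends on the even-odd ordering and not merely on the decomposition. I would need to check that when processing, say, the odd-indexed block, each $H_k^S$ with $k$ odd only ever overlaps the current frontier through its lower neighbor $E_{k-1}^S$ (already present) and contributes $E_k^S$, while its upper neighbor $E_{k+1}^S$ is reached only in the next stage. Getting the bookkeeping of which layers are ``already present'' versus ``freshly added'' exactly right across the parity switch between consecutive stages is the delicate part; I would organize it as a clean induction whose hypothesis records precisely that after stage $\upsilon$ the accumulated support equals $\bigcup_{k=0}^\upsilon E_k^S$, and that the terminal layer $E_\upsilon^S$ is positioned to be the unique new contributor in stage $\upsilon+1$.
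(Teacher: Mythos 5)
Your proposal is correct and takes essentially the same approach as the paper: the containment is proved by induction over stages using the defining property of the edge sets together with Observation~\ref{ob:monoton}, and the equality uses the nearest-neighbor overlap structure of the interactive decomposition ($S(H_j^S)\cap S(H_k^S)\neq\emptyset$ only for $|j-k|\le 1$) combined with stage-by-stage tracking of the even-odd ordering. Your within-stage parity-block bookkeeping, showing the support advances by exactly one edge set per stage with no cascade past the frontier, is precisely the mechanism behind the paper's displayed computation.
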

\begin{proof}
We would like to first verify Eq.~\eqref{eq:ap-worst-case-prod} and then check the mentioned configuration. Consider the case that $H$ has interactions with qubits in $S$.
Otherwise, the statement is trivially true.
We start the proof from $\Upsilon=1$.
    Consider an arbitrary qubit $Q_j\in E^S_1$.
    According to the definition of $E_1^S$, there exists a Pauli constituent $P_\beta$ in $H$ such that $Q_j\in S(P_\beta)$ and $S(P_\beta)\cap S\neq\emptyset$.
    Without loss of generality, we assume that $H_1$ contains a non-zero decomposition coordinate on $P_\beta$ and $\pi_1(k_0)=1$.
    Since $\biguplus_{\gamma=1}^{k_0-1}\mathrm{e}^{\ii ta_{(1,\gamma)}H_{\pi_1(\gamma)}}\uplus S\supseteq S$, it is easy to check by Observation~\ref{ob:monoton} that
    \begin{gather*}
        \biguplus_{\gamma=1}^{k_0}\mathrm{e}^{ita_{(1,\gamma)}H_{\pi_1(\gamma)}}\uplus S\supseteq S\cup\{Q_j\}.
    \end{gather*}
    By enumerating all qubits in $E_1^S$, we proved the following for $\Upsilon=1$ $$\biguplus_{\gamma=1}^{\Gamma}\mathrm{e}^{\ii ta_{(1,\gamma)}H_{\pi_1(\gamma)}}\uplus S\supseteq S\cup E_1^S.$$

    For $\Upsilon>1$, we assume the statement to be true up to $\Upsilon-1$.
    The $\Upsilon$ case can be derived as
    \begin{align*}
        \left(\biguplus_{\upsilon=1}^{\Upsilon}\biguplus_{\gamma=1}^{\Gamma}\mathrm{e}^{\ii ta_{(\upsilon,\gamma)}H_{\pi_\upsilon(\gamma)}}\uplus S\right)=&\biguplus_{\gamma=1}^{\Gamma}\mathrm{e}^{\ii ta_{(\upsilon,\gamma)}H_{\pi_\upsilon(\gamma)}}\uplus\biguplus_{\upsilon=1}^{\Upsilon-1}\biguplus_{\gamma=1}^{\Gamma}\mathrm{e}^{\ii ta_{(\upsilon,\gamma)}H_{\pi_\upsilon(\gamma)}}\uplus S\\
        \supseteq& \biguplus_{\gamma=1}^{\Gamma}\mathrm{e}^{\ii ta_{(\upsilon,\gamma)}H_{\pi_\upsilon(\gamma)}}\uplus \left(\bigcup_{k=1}^{\Upsilon-1} E_k^S\cup S\right).
    \end{align*}
Based on Definition~\ref{def:step_edge} and a similar argument as for the case $\Upsilon=1$, we can complete this inductive proof of Eq.~\eqref{eq:ap-worst-case-prod}.

Let us track the propagation of support $S$ under the mentioned configuration by $\uplus$.
Note from Eq.~\eqref{eq:edgeset} that $S(H_0^S)=E_0^S$ and that $S(H_k^S)=E_{k-1}^S\cup E_k^S$ for all $k\in\mathbb{Z}^+$.
Therefore, every sub-Hamiltonian only interacts with its nearest neighbors.
According to the even-odd permutation, we have
\begin{align}\label{eq:even-odd ws}
    \biguplus_{\upsilon=1}^{\Upsilon}\biguplus_{\gamma=1}^{\Gamma_0}\mathrm{e}^{\ii ta_{(\upsilon,\gamma)}H^S_{\pi^{\text{eo}}_\upsilon(\gamma)}}\uplus S=&\biguplus_{\upsilon=2}^{\Upsilon}\biguplus_{\gamma=0}^{\Gamma_0}\mathrm{e}^{\ii ta_{(\upsilon,\gamma)}H^S_{\pi^{\text{eo}}_\upsilon(\gamma)}}\uplus(S(H_1^S)\cup S(H_0^S))\notag\\
    =&\biguplus_{\upsilon=3}^{\Upsilon}\biguplus_{\gamma=0}^{\Gamma_0}\mathrm{e}^{\ii ta_{(\upsilon,\gamma)}H^S_{\pi^{\text{eo}}_\upsilon(\gamma)}}\uplus( S(H^S_2)\cup S(H^S_1)\cup S(H_0^S))\notag\\
    =&\bigcup_{k=1}^\Upsilon E_k^S\cup E_0^S.
\end{align}
The first equation comes from the fact that only $H_0$ and $H_1$ terms interact with $O$.
The second is due to the intersection between $S(H_1)$ and $S(H_2)$.
Repeat this deduction, and we can discover that the even-odd permutation and interactive decomposition turn out to achieve the slowest expansion rate in Eq.~\eqref{eq:ap-worst-case-prod}.
\end{proof}

The efficiency of the product formula for simulation can be significantly enhanced by excluding unitaries outside the support, termed irrelevant unitaries, in accordance with the growth of the support.
This refinement does not impact the simulation outcomes because the irrelevant unitaries in the product formula must commute with the evolved $O(t)$, and their unitary conjugations will nullify them.
Since the proof of Lemma~\ref{lm:optimal} is independent of the stage number, the proposed configuration consistently maintains the optimally slow expansion of the support during the simulation.
Particularly, no further reduction can be made while ensuring the simulation results remain intact.

In Algorithm~\ref{alg:ap-rpf}, we elucidate the design of the reduced product formula.
The coefficients $a_{(\upsilon,\gamma)}$ and the number of stages $\Upsilon$ are determined by its order based on the standard Suzuki-Trotter in Eq.~\eqref{eq:suzuki}.
This algorithm also accounts for the effects of multiple Trotter steps.
To reconcile the even-odd permutation between successive steps, the algorithm employs $b+\upsilon$ to indicate the effective parity of the permutation.

\begin{algorithm}[t]
\caption{Reduced Product Formula}\label{alg:ap-rpf}
    \SetKwInOut{Input}{input}\SetKwInOut{Output}{output}
    \Input{Observable $O$ with support $S$; Hamiltonian $H$; Evolving time $t$; Step number $r$; Order $p$}
    \Output{Unitary $U$}
    Decompose $H$ into $H_0^S,\cdots,H_{\Gamma_0}^S$ according to Eq.~\eqref{eq:edgeset}\;
    Adopt Suzuki-Trotter coefficients $a_{(\upsilon,\gamma)}$ from~\cite{suzuki1991general}\;
    $\Upsilon\leftarrow 2\cdot5^{p/2-1}$,
    $\tau\leftarrow t/r$,
    $U\leftarrow I$\;
    \For{$j=1,\cdots,r$}{
        \For{$\upsilon=1,\cdots,\Upsilon$}{
            $U\leftarrow\prod_{\gamma=0}^{ \upsilon+(j-1)\Upsilon}\mathrm{e}^{\ii a_{(\upsilon,\gamma)}\tau H_{\pi^{\text{eo}}_{\upsilon}(\gamma)}^S}\cdot U$\;
            
        }
    }
\end{algorithm}
We provide a more concrete depiction of this concept in Figure~2 of the main text, illustrating the implementation of the second-order reduced product formula. 
As the operator evolves, its support expands gradually. 
The highlighted unitaries precisely correspond to those indicated in Algorithm~\ref{alg:ap-rpf}.

Implicitly, this algorithm necessitates that $\Gamma_0$ is sufficiently large to obviate the need for considering boundary cases. 
We establish the error bound based on this ideal scenario subsequently. 
However, even in cases where this condition is not fulfilled, the algorithm can halt the expanding range of unitaries once it has enumerated all terms in the Hamiltonian.
In such circumstances, the error bound reduces to the standard worst-case bound outlined in~\cite{childs2021theory}.

\begin{theorem}[Single-Observable Error]\label{thm:ap-single}
Consider a local observable $O$ with support $S$.
Suppose the $n$-qubit $H$ is $\ell$-local with a constant $\ell$ and 
has bounded interaction per qubit. 
With the light-cone width $\textstyle w_r\coloneqq \sum\nolimits_{k=0}^{r\Upsilon+1}\|H_k^S\|_1$, the simulation error of $O$ by an $r$-step $p$th-order $U$ from Alg.~\ref{alg:ap-rpf} is bounded by $$\|\mathrm{e}^{\ii Ht}O\mathrm{e}^{-\ii Ht}-U OU^\dagger\|=\order{\frac{\|O\|w_rt^{p+1}}{r^{p}}}.$$ 
\end{theorem}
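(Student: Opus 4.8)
The plan is to reduce the Heisenberg-picture simulation error to a worst-case product-formula error on a \emph{small} effective Hamiltonian, namely the light cone of $O$, and then apply the standard nested-commutator bound from~\cite{childs2021theory} to that restricted problem. First I would invoke Lemma~\ref{lm:ap-optimal}: because Alg.~\ref{alg:ap-rpf} uses the interactive decomposition and even-odd permutation, the support of $O(t)$ after an $r$-step, $\Upsilon$-stage evolution is contained in $\bigcup_{k=0}^{r\Upsilon+1}E_k^S$. This is the source of the $w_r$ in the statement, since $w_r=\sum_{k=0}^{r\Upsilon+1}\|H_k^S\|_1$ is exactly the $1$-norm of the Hamiltonian terms that can ever touch the light cone.

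The central device, as hinted in the Methods sketch, is the \emph{virtual product formula} $\mathscr{S}_{\mathrm{V}}(t)=\mathscr{S}_1(\tau)\cdots\mathscr{S}_r(\tau)$: for each step $j$ I fill the reduced circuit $U$ back out to a genuine standard $p$th-order Suzuki-Trotter formula of the \emph{full} $H$, by adjoining the tail term $H_{\mathrm{others},j}$ collecting all Pauli components not yet in the light cone. I would then split the error by the triangle inequality into two pieces:
\begin{gather*}
\|\mathrm{e}^{\ii Ht}O\mathrm{e}^{-\ii Ht}-UOU^\dagger\|\le \|\mathrm{e}^{\ii Ht}O\mathrm{e}^{-\ii Ht}-\mathscr{S}_{\mathrm{V}}OU^\dagger_{\mathrm{V}}\|+\|\mathscr{S}_{\mathrm{V}}OU^\dagger_{\mathrm{V}}-UOU^\dagger\|.
\end{gather*}
The key claim is that the second term vanishes: the gates $U$ and $\mathscr{S}_{\mathrm{V}}$ differ only by the exponentials of $H_{\mathrm{others},j}$, which by the optimality of Lemma~\ref{lm:ap-optimal} act outside the evolving support of $O$ and hence commute through, their conjugates cancelling. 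Thus the reduced circuit simulates $O$ \emph{exactly} as well as the virtual full formula, and only the first term survives.

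For the surviving term I would bound the standard $p$th-order Suzuki-Trotter error using the nested-commutator analysis (the explicit Prop.~10 of~\cite{childs2021theory}), but applied stepwise: the $j$th factor's error is governed by $(p+1)$-fold nested commutators of the sub-Hamiltonians $H_0^S,\dots,H_{j\Upsilon}^S$ participating in that step, whose total $1$-norm is at most $\sum_{k=0}^{r\Upsilon+1}\|H_k^S\|_1=w_r$. Summing the $r$ per-step contributions, each of size $\order{\|O\|\,w_r\,\tau^{p+1}}=\order{\|O\|\,w_r\,(t/r)^{p+1}}$, gives the advertised $\order{\|O\|w_rt^{p+1}/r^{p}}$. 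Here the $\ell$-locality and bounded interaction per qubit are what let me convert the abstract nested commutators into a clean multiple of $w_r$: they ensure each qubit couples to $\order{1}$ terms, so a $(p+1)$-fold commutator of the light-cone Hamiltonian has $1$-norm $\order{w_r}$ rather than $\order{w_r^{p+1}}$. The main obstacle I anticipate is precisely this last bookkeeping step—verifying that the telescoping virtual-formula argument remains valid across step boundaries (where the even-odd parity flips, handled by the $j$-dependent parity in Alg.~\ref{alg:ap-rpf}) and that the per-step nested commutator is genuinely linear in $w_r$ under the locality assumptions, rather than acquiring spurious factors of the light-cone width from repeated commutation.
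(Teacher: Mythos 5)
Your proposal follows essentially the same route as the paper's proof: the same virtual (filled-in) product formula, the same triangle-inequality split with the second term vanishing by the light-cone commutation argument of Lemma~\ref{lm:ap-optimal}, and the same stepwise nested-commutator bound made linear in $w_r$ via $\ell$-locality and bounded interaction per qubit. The one detail you gloss over --- that the tail term $H_{\mathrm{others},j}$, whose $1$-norm can be $\Theta(\|H\|_1)$, also appears in the per-step commutator sums --- is resolved in the paper exactly as your index range suggests: whenever the tail enters the innermost commutator it can be replaced by the single edge-set Hamiltonian $H^S_{j\Upsilon+1}$, which is precisely why $w_r$ runs up to $k=r\Upsilon+1$.
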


\begin{remark}
    \rm
    It is evident that the estimated error is lower than the operator norm bound in~\cite{childs2021theory} given $\sum_{k=0}^rh_k<\|H\|_1$.
    This discrepancy arises from the fact that the ``width" of the light cone originating from $S$ cannot fully encompass the entire system.
    The corresponding choices of $r$ for fixed error should be determined based on this bound and the geometric properties of the system, \emph{i.e}, the scaling of $\{h_k\}$.

\end{remark}

\begin{proof}
We first consider a family of $r$ different Hamiltonian divisions, where $r$ is the number of time steps we adopted in Algorithm~\ref{alg:ap-rpf}.
\begin{align}\label{eq:virtual decomp}
\begin{split}
    \text{Step 1:}\ H=&H_0^S+H_1^S+H^S_2+\cdots+H^S_\Upsilon+H_{\text{other},1}=\sum_{\gamma=0}^{\Upsilon+1}H_\gamma^{(1)},\\
    \text{Step 2:}\ H=&H_0^S+H^S_1+H^S_2+\cdots+H^S_{2\Upsilon}+H_{\text{other},2}=\sum_{\gamma=0}^{2\Upsilon+1}H_\gamma^{(2)},\\
    &\vdots\\
    \text{Step $r$:}\ H=&H_0^S+H^S_1+H^S_2+\cdots+H^S_{r\Upsilon}+H_{\text{other},r}=\sum_{\gamma=0}^{r\Upsilon+1}H_\gamma^{(r)}.
\end{split} 
\end{align}
Note that all decompositions in Eq.~\eqref{eq:virtual decomp} are equal to that in~\eqref{eq:edgeset} except for those tail terms.
We will simplify the superscript $S$ of sub-Hamiltonians in the remainder of this proof.
Therefore, $H_\gamma=H_\gamma^{(j)}$ for all $\gamma\in[1,j\Upsilon]$ and $H_{\text{other},j}=H_{j\Upsilon+1}^{(j)}$.
Based on these divisions, we introduce a virtual product formula consisting of $r$ successive Trotter formula steps.
$$\mathscr{S}_{\text{V}}(t) O\mathscr{S}_{\text{V}}^\dagger(t)\coloneqq\mathscr{S}_r(\tau)\cdots\mathscr{S}_1(\tau)O\mathscr{S}_1(\tau)^\dagger \cdots\mathscr{S}_r(\tau)^\dagger,$$
where $\tau=t/r$ is the length of each time step.
For each time step $\tau$.  the formula adaptively adopts different $p$th-order Suzuki Trotter formulae,
\begin{align}
    \forall j\in[r],\ \mathscr{S}_j(\tau)\coloneqq\prod_{\upsilon=1}^\Upsilon\prod_{\gamma=0}^{j\Upsilon+1}\mathrm{e}^{\ii\tau a_{(\upsilon,\gamma)}H^{(j)}_{\pi^{\text{eo}}_{(j-1)\Upsilon+\upsilon}(\gamma)}}.
\end{align} 
As hinted by its name, the virtual product formula is never truly implemented.
We introduce this product formula as an intermediate to employ previous nested-commutator error bound for Suzuki-Trotter formulas.
Based on the analysis in~\cite{childs2021theory}, the error of approximating $\mathrm{e}^{\ii H\tau}$ by each virtual formula is
\begin{align}\label{eq:rpf_trotter}
    \epsilon_j=\|\mathrm{e}^{\ii H\tau}-\mathscr{S}_j(\tau)\|=\order{\sum_{\gamma_1,\cdots,\gamma_{p+1}=0}^{j\Upsilon+1}\|[H^{(j)}_{\gamma_{p+1}},\cdots[H^{(j)}_{\gamma_{3}},[H^{(j)}_{\gamma_2},H^{(j)}_{\gamma_1}]]\cdots]\|\tau^{p+1}}.
\end{align}

Recall that the tail term consists of all other sub-Hamiltonians $H^{(j)}_{j\Upsilon+1}\coloneqq\sum_{k=j\Upsilon+1}^{\Gamma_0}H_k$, thus, the summation seems inevitably incur some dependence with the system size $n$ in the above bound.
An idea to remove this dependence is to limit the effective part of the tail Hamiltonian $H^{(j)}_{j\Upsilon+1}$ located at the innermost layer of the commutator.
When $H^{(j)}_{j\Upsilon+1}$ is involved, the other must be $H^{(j)}_{j\Upsilon}$ to make sure the commutator nontrivial.
According to Definition~\ref{def:step_edge}, we can substitute both tail Hamiltonians in the innermost layer for $H_{j\Upsilon+1}$ and keep the commutator the same.
\begin{gather}
    \epsilon_j=\order{\sum_{\gamma_1,\cdots,\gamma_{p+1}=0}^{j\Upsilon+1}\|[H^{(j)}_{\gamma_{p+1}},\cdots[H^{(j)}_{\gamma_{3}},[H_{\gamma_2},H_{\gamma_1}]]\cdots]\|\tau^{p+1}}.
\end{gather}

We shall consider the Pauli decomposition of every sub-Hamiltonian and the limitations thereof to quantify the norm.
The analysis starts from the innermost layer.
\begin{align}
    \sum_{\gamma_2=0}^{j\Upsilon+1}\|[H_{\gamma_2},H_{\gamma_1}]\|\leq&\sum_{\gamma_2=0}^{j\Upsilon+1}\sum_{\beta:P_\beta\in H_{\gamma_2}}\sum_{\alpha:P_\alpha\in H_{\gamma_1}}\| [s_\beta P_\beta,s_\alpha P_\alpha]\|\notag
    \\
    =&\sum_{\alpha:P_\alpha\in H_{\gamma_1}}\sum_{\gamma_2=0}^{j\Upsilon+1}\sum_{\beta:P_\beta\in H_{\gamma_2}}\mathbbm{1}[S(P_\beta)\cap S(P_\alpha)]\| [s_\beta P_\beta,s_\alpha P_\alpha]\|\\
    \leq&\sum_{\alpha:P_\alpha\in H_{\gamma_1}}\sum_{\gamma_2=0}^{j\Upsilon+1}\sum_{\beta:P_\beta\in H_{\gamma_2}}\mathbbm{1}[S(P_\beta)\cap S(P_\alpha)]2\|s_\beta P_\beta\|\|s_\alpha P_\alpha\|
    \leq\sum_{\alpha:P_\alpha\in H_{\gamma_1}}2\ell b\|s_\alpha P_\alpha\|,
\end{align}
where the equation comes from the fact that only terms that have overlaps can have a nonzero commutator, and $\ell$ denotes the possible position of intersection in $P_\alpha$.
Note that the underlying term of this commutator is supported on at most $\ell+(\ell-1)$ qubits.
We have a similar analysis for the first two layers:
\begin{align}
    \sum_{\gamma_2,\gamma_3=0}^{j\Upsilon+1}\|[H_{\gamma_3}^{(j)},[H_{\gamma_2},H_{\gamma_1}]]
    \leq&\sum_{\gamma_3=0}^{j\Upsilon+1}\sum_{\gamma:P_\gamma\in H_{\gamma_3}^{(j)}}\sum_{\gamma_2=0}^{j\Upsilon+1}\sum_{\beta:P_\beta\in H_{\gamma_2}}\sum_{\alpha:P_\alpha\in H_{\gamma_1}}\|[s_\gamma P_\gamma,[s_\beta P_\beta,s_\alpha P_\alpha]]\|\notag\\
    \leq& \sum_{\gamma_2=0}^{j\Upsilon+1}\sum_{\beta:P_\beta\in H_{\gamma_2}}\sum_{\alpha:P_\alpha\in H_{\gamma_1}}2(2l-1)b\|[s_\beta P_\beta,s_\alpha P_\alpha]\|\\
    \leq& \sum_{\alpha:P_\alpha\in H_{\gamma_1}}2^2b^2(2\ell-1)\ell\|s_\alpha P_\alpha\|.
\end{align}
Recursively, as each time the support of underlying terms increases by $\ell-1$, the overall nested commutator is bounded by
\begin{align}\label{eq:nc}
    \sum_{\gamma_1=0}^{j\Upsilon+1}\sum_{\gamma_2,\cdots,\gamma_{p+1}=0}^{j\Upsilon+1}\|[H^{(j)}_{\gamma_{p+1}},\cdots[H^{(j)}_{\gamma_{3}},[H_{\gamma_2},H_{\gamma_1}]]\cdots]\|\leq&(2b)^{p}\prod_{i=1}^{p}(i\ell-i+1)\sum_{\gamma_1=0}^{j\Upsilon+1}\sum_{\alpha:P_\alpha\in H_{\gamma_1}}\|s_\alpha P_\alpha\|\notag\\
    =&\order{\sum_{k=0}^{j\Upsilon+1}\sum_{\alpha:P_\alpha\in H_{k}}|s_\alpha|}
    =\order{\sum_{k=0}^{j\Upsilon+1}h_k},
\end{align}
where the order $p$, locality $\ell$ and single-site norm $b$ are chosen as constant numbers, and $h_k\coloneqq \|H_k\|_1$.

Furthermore, we will show that the unitary generated by Algorithm~\ref{alg:ap-rpf} has the identical effects as $\mathscr{S}_{\text{V}}(t)$.
According to Lemma~\ref{lm:ap-optimal}, consider the circumstance just before implementing the stage $\upsilon_0$ in step $j_0$ of $\mathscr{S}_{\text{V}}(t)$, the real-time support of $O$ is
\begin{align}
    S(O)_{j_0,\upsilon_0}\coloneqq&\biguplus_{\upsilon=1}^{\upsilon_0-1}\biguplus_{\gamma=0}^{j_0\Upsilon+1}\mathrm{e}^{\ii \tau a_{(\upsilon,\gamma)}H^{(j_0)}_{\pi^{\text{eo}}_{(j_0-1)\Upsilon+\upsilon}(\gamma)}}\uplus\biguplus_{j=1}^{j_0-1}\biguplus_{\upsilon=1}^{\Upsilon}\biguplus_{\gamma=0}^{j\Upsilon+1}\mathrm{e}^{\ii\tau a_{(\upsilon,\gamma)}H^{(j)}_{\pi^{\text{eo}}_{(j-1)\Upsilon+\upsilon}(\gamma)}}\uplus E_0^S\notag\\
    =&\biguplus_{\upsilon=1}^{\upsilon_0-1}\biguplus_{\gamma=0}^{j_0\Upsilon+1}\mathrm{e}^{\ii\tau a_{(\upsilon,\gamma)}H^{(j_0)}_{\pi^{\text{eo}}_{(j_0-1)\Upsilon+\upsilon}(\gamma)}}\uplus \bigcup_{k=0}^{(j_0-1)\Upsilon} E^S_{k}\notag\\
    =&\bigcup_{k=0}^{(j_0-1)\Upsilon+\upsilon_0-1}E^S_k.
\end{align}
Since the propagated support never reaches the range of tail sub-Hamiltonians, we can directly apply the same analysis as Eq.~\eqref{eq:even-odd ws} to get the above equations.
Therefore, $\{H^{(j)}_{\upsilon_0+1},H^{(j)}_{\upsilon_0+2},\cdots\}$ commutes with the evolved observable.
During stage $\upsilon_0$ of the step $j$, $H^{(j)}_{\upsilon_0+1}$ is always implemented before $H^{(j)}_{\upsilon_0}$.
Hence $H^{(j)}_{\upsilon_0+1}$ is still irrelevant for $O(t)$ during stage $\upsilon_0$.
Removing these irrelevant exponentials, we can transform $\mathscr{S}_{\text{V}}(t)$ to $U$.
The overall simulation error by Algorithm~\ref{alg:ap-rpf} is
\begin{align}
        \|\mathrm{e}^{\ii Ht}O\mathrm{e}^{-\ii Ht}-U OU^\dagger\|\leq&\|\mathrm{e}^{\ii Ht}O\mathrm{e}^{-\ii Ht}-\mathscr{S}_{\text{V}}(t) O\mathscr{S}_{\text{V}}(t)^\dagger\|+\|\mathscr{S}_{\text{V}}(t) O\mathscr{S}_{\text{V}}(t)^\dagger-U OU^\dagger\|\notag\\
        \leq&\sum_{j=1}^r\left\|\mathrm{e}^{\ii H\tau}\prod_{h=1}^{j-1}\mathscr{S}_{h}(\tau)O\prod_{h=1}^{j-1}\mathscr{S}_{j-h}(\tau)^\dagger \mathrm{e}^{-\ii H\tau}-\prod_{h=1}^{j}\mathscr{S}_{h}(\tau) O\prod_{h=1}^{j}\mathscr{S}_{j+1-h}(\tau)^\dagger\right\|\notag\\
        &+\|\mathscr{S}_{\text{V}}(t) O\mathscr{S}_{\text{V}}(t)^\dagger-U OU^\dagger\|\notag\\
        \leq& 2\|O\|\sum_{j=1}^r\epsilon_j+0\notag\\
        =&\order{\|O\|\frac{\sum_{k=0}^{r\Upsilon+1}h_kt^{p+1}}{r^{p}}}
\end{align}
\end{proof}

We also acknowledge the potential presence of multiple terms within the sub-Hamiltonians $\{H_k^S\}$, making it challenging to construct perfect matrix exponentials of sub-Hamiltonians as required in Algorithm.~\ref{alg:ap-rpf}.
To address this issue, we propose further decomposing each sub-Hamiltonian into Pauli terms and embedding an additional first-order Suzuki-Trotter formula to approximate matrix exponentials.
This simplified implementation does not introduce additional errors since the proof recruits the same analysis as if the embedded formula were from the first-order Trotter.
In calculating the norm in Eq.~\eqref{eq:nc}, we have already relaxed the case of individual sub-Hamiltonians to a series of Pauli terms therein. This relaxed treatment already encompasses errors incurred by further embedding first-order formulae for the matrix exponentials.

\section{Short-Time Simulation for Global Observables}\label{sec:append-multi}
In the realm of quantum physics, measuring a global property of a system holds significant importance for extracting global information.
Nevertheless, the support of this global observable might already saturate the whole space, rendering the preceding support analysis inapplicable.
In this section, we consider a special family of global observables that consist of summations of local observables, where the support of each individual summand remains local during brief temporal simulation.
Therefore, we can expect the short-time advantage from the support analysis to remain by viewing each summand separately.

In this section, instead of adopting the above-mentioned configuration which is only optimal for a specific observable, we employ a another construction of the product formula based on the intrinsic geometry of the Hamiltonian.
We show in the following that this configuration can universally suppress the expansion of an arbitrary local observable.
Initially, for an $n$-qubit Hamiltonian $H$, we adopt the elementary Pauli decomposition, $H=\sum_{\alpha\in{\sf P}^n}s_\alpha P_\alpha$ and subsequently regroup components therein according to their supports.
This regrouping is expressed as:
\begin{gather}\label{eq:support-de}
    H=\sum_{i=1}^LH_{S_i},\ \text{where }\forall i\in[L],\ H_{S_i}\coloneqq\sum_{P_\alpha:S(P_\alpha)\subseteq S_i} s_\alpha P_\alpha\neq0 \text{ with at least one Pauli having support $S_i$}.
\end{gather}
To make this regrouping nontrivial, it is further stipulated that $S_i\not\subseteq S_j$ for all $i,j$.
This regrouping yields a valid decomposition of $H$.
We then construct the interaction hypergraph.
\begin{definition}
    Regarding the Pauli decomposition $H=\sum_{\alpha\in{\sf P}^n}s_\alpha P_\alpha$, the sets $\{S_i\}_{i=1}^L$ are defined such that every nonzero Pauli term belongs to some $S_i$, with no $S_i$ belonging to any other $S_j$.
Additionally, at least one Pauli term has support equal to every $S_i$.
The interaction hypergraph $G$ consists of the set of $n$ qubits and $\{S_i\}_{i=1}^L$ as its vertex and hyperedge sets, respectively. 
\end{definition}
Next, we address the edge coloring problem of $G$, where edges covering same qubits are assigned different colors.
Suppose we have a coloring of interaction graph $G$ of $H$ using $\chi$ colors.
We represent the coloring as a map $\varphi:\{S_i\}_{i=1}^L\rightarrow [\chi]$, where $\varphi(S_i)$ denotes the color assigned to $S_i$.
We can rearrange the regrouped sub-Hamiltonians so that those with the same colors are listed together.
We elaborate on the product formula based on this configuration in Algorithm~\ref{alg:ap-mpf}.
This product formula achieves a  $\chi$-approximated optimal expansion of the support for an arbitrary local observable, as asserted in Lemma~\ref{lm:ap-global_support}.

\begin{algorithm}[t]
    \caption{Chromatic Product Formula}\label{alg:ap-mpf}
    \SetKwInOut{Input}{input}\SetKwInOut{Output}{output}
    \Input{Hamiltonian $H$; Evolving time $t$; Step number $r$; Order $p$; Coloring $\varphi$ and number of colors $\chi$}
    \Output{Unitary $U$ organized from product formula}
    Regroup $H$ into $H_{S_1},\cdots,H_{S_L}$\;
   Adopt Suzuki-Trotter coefficients $a_{(\upsilon,\gamma)}$ from~\cite{suzuki1991general}\;
    $\Upsilon\leftarrow2\cdot5^{p/2-1}$, $\tau\leftarrow t/r$, $U\leftarrow I$\;
    \For{$j=1,\cdots,r$}{
        \For{$\upsilon=1,\cdots,\Upsilon$}{
            \eIf{$\upsilon$ is odd}{
                $U\leftarrow\prod_{c=1}^{\chi}\prod_{\gamma=1\to L:\varphi(S_\gamma)=c}\mathrm{e}^{\ii a_{(\upsilon,\gamma)}\tau H_{S_\gamma}}\cdot U$
            }{
                $U\leftarrow\prod_{c=\chi}^{1}\prod_{\gamma=L\to 1:\varphi(S_\gamma)=c}\mathrm{e}^{\ii a_{(\upsilon,\gamma)}\tau H_{S_\gamma}}\cdot U$
            }
        }
    }
\end{algorithm}

\begin{lemma}[Chromatic Optimality]\label{lm:ap-global_support}
    For an arbitrary set of observables $\{O_m\}$ and a Hamiltonian $H$, an $r$-step product formula $U$ from Alg.~\ref{alg:ap-mpf} simultaneously 
 enlarges the support of every $UO_mU^\dagger$ to at most $\bigcup_{k=0}^{(\chi-1) r\Upsilon+1}E^{S(O_m)}_k$. Given that $\chi$ is a constant, the support expansion of all observables is optimally slow.
\end{lemma}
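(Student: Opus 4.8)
The plan is to fix a single summand $O_m$ with support $S \coloneqq S(O_m)$ and bound the growth of its support under the $\uplus$ operation, noting that the whole argument is independent of $m$: the chromatic configuration of Alg.~\ref{alg:ap-mpf} depends only on the interaction hypergraph of $H$ and not on any observable, so the same unitary $U$ controls every $S(O_m)$ at once. Two structural facts drive the proof. First, because $\varphi$ is a proper edge-coloring, all hyperedges—hence all sub-Hamiltonians $H_{S_\gamma}$—sharing a fixed color act on pairwise-disjoint qubit sets, so the block of exponentials of one color is order-independent and idempotent under $\uplus$ by Observation~\ref{ob:monoton}. Second, the edge-set recursion of Def.~\ref{def:step_edge} satisfies $S(H_k^S) = E_{k-1}^S \cup E_k^S$, which I would use to convert each colored block into a single layer of breadth-first growth.

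The core is an inductive claim: after $\mu$ \emph{effective} colored blocks have been applied, the support lies in $F_\mu \coloneqq \bigcup_{k=0}^{\mu} E_k^S$. The base case $\mu = 0$ is $S = E_0^S$. For the step, suppose the current support is contained in $F_\mu$ and a colored block is applied; by the disjointness fact the support grows only by those same-color hyperedges $S_i$ meeting $F_\mu$. Letting $k_0 \le \mu$ be the smallest layer index with $S_i \cap E_{k_0}^S \neq \emptyset$, the Pauli terms supported on $S_i$ first enter the interactive decomposition at stage $k_0+1$, so $S_i \subseteq S(H_{k_0+1}^S) = E_{k_0}^S \cup E_{k_0+1}^S \subseteq F_{\mu+1}$. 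Taking the union over all such $S_i$ places the new support in $F_{\mu+1}$, completing the induction. The essential point is that a single hyperedge cannot jump more than one layer once its first layer of contact is fixed.

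The remaining step is to count effective blocks across the $r\Upsilon$ stages. Each stage issues $\chi$ colored blocks, but the back-and-forth ordering in Alg.~\ref{alg:ap-mpf} repeats a color at every one of the $r\Upsilon - 1$ stage junctions (the $\ldots,\chi,\chi,\ldots$ and $\ldots,1,1,\ldots$ boundaries); by the idempotency above, each repeated block produces no further growth and is not counted as effective. Hence the number of effective blocks is $\chi r\Upsilon - (r\Upsilon - 1) = (\chi-1)r\Upsilon + 1$, and the induction yields $S(UO_mU^\dagger) \subseteq \bigcup_{k=0}^{(\chi-1)r\Upsilon+1} E_k^{S(O_m)}$, as claimed. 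For optimality I would invoke Lemma~\ref{lm:ap-optimal}, whose lower bound forces $\Omega(r\Upsilon)$ layers of growth for any configuration; since $\chi = O(1)$, the chromatic bound $(\chi-1)r\Upsilon+1 = O(r\Upsilon)$ matches this up to a constant, so the expansion is optimally slow.

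I expect the main obstacle to be the inductive step, specifically the bridge between the color-agnostic edge sets $\{E_k^S\}$ and the colored hyperedges: I must argue carefully that meeting the current frontier confines a hyperedge to exactly one additional layer, which rests entirely on the identity $S(H_k^S)=E_{k-1}^S\cup E_k^S$ and on each hyperedge being the support of a single Pauli interaction. A secondary care point is verifying the junction idempotency rigorously—that re-applying a color recruits no new same-color hyperedge—which follows from same-color disjointness but should be stated explicitly rather than assumed.
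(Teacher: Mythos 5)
Your proposal is correct and takes essentially the same route as the paper's proof: same-color hyperedges are disjoint, so each colored block advances the support by at most one edge-set layer; the back-and-forth ordering repeats a color at each of the $r\Upsilon-1$ stage junctions, where repetition is idempotent, giving $(\chi-1)r\Upsilon+1$ effective blocks; and optimality follows since $\chi$ is constant (the paper phrases the one-layer-per-block step as a contradiction argument—an activated hyperedge reaching $E_2^{S(O)}$ would have to be disjoint from the current support—whereas you run the contrapositive as a direct frontier induction, but the content is identical). One nit: your equality $S(H_{k_0+1}^S)=E_{k_0}^S\cup E_{k_0+1}^S$ should be a containment $S(H_{k_0+1}^S)\subseteq E_{k_0}^S\cup E_{k_0+1}^S$, which is all the argument needs.
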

\begin{proof}
We start from the first stage of step one.
    Recalling the notation $\uplus$ in Definition~\ref{def:oplus}, we first consider the effect of sub-Hamiltonian with the first color.
    Suppose $\biguplus_{\gamma=1\to L:\varphi(S_\gamma)=1}\mathrm{e}^{\ii a_{(1,\gamma)}\tau H_{S_\gamma}}\uplus S(O)$ contains some qubits in $E_2^{S(O)}$.
    There must be a sub-hamiltonian $H_{S_j}$ with $\varphi(S_j)=1$ such that $H_{S_j}$ firstly introduces the some qubit $Q_i\in E_2^{S(O)}$ to $\biguplus_{\gamma=1\to j-1:\varphi(S_\gamma)=1}\mathrm{e}^{\ii a_{(1,\gamma)}\tau H_{S_\gamma}}\uplus S(O)$.
    
    According to the definition of the interaction graph and coloring, every two sub-Hamiltonians with the same color are mutually disjoint.
    Since $S_j$ intersects with $E_2^{S(O)}$ and there exists a Pauli in the elementary decomposition of $H$ with support $S_j$, it is easy to verify that $S_j$ is disjoint with $E_0^{S(O)}$.
    Therefore, this sub-Hamiltonian cause no effects on the support expansion,
    \begin{gather}
        \biguplus_{\gamma=1\to j:\varphi(S_\gamma)=1}\mathrm{e}^{\ii a_{(1,\gamma)}\tau H_{S_\gamma}}\uplus S(O)=\biguplus_{\gamma=1\to j-1:\varphi(S_\gamma)=1}\mathrm{e}^{\ii a_{(1,\gamma)}\tau H_{S_\gamma}}\uplus S(O),
    \end{gather}
    which makes a contradiction.
    Thus, we have     $\biguplus_{\gamma=1\to L:\varphi(S_\gamma)=1}\mathrm{e}^{\ii a_{(1,\gamma)}\tau H_{S_\gamma}}\uplus S(O)\subseteq E_0^{S(O)}\cup E_1^{S(O)}$.

    It is noteworthy that the above claim does not depend on either the colors or the steps. 
    Hence, we can iteratively use the above analysis to prove similar statements for the subsequent colors, stages, and steps.
    
   Since Algorithm~\ref{alg:ap-mpf} is a standard $p$th-order Suzuki formula, there exist some repetitions of exponentials between two neighboring stages.
   For example, the first two stages of the formula is
   \begin{gather}
       \left(\prod_{c=\chi}^{1}\prod_{\gamma=L\to 1:\varphi(S_\gamma)=c}\mathrm{e}^{\ii a_{(\upsilon,\gamma)}\tau H_{S_\gamma}}\right)\cdot\left(\prod_{c=1}^{\chi}\prod_{\gamma=1\to L:\varphi(S_\gamma)=c}\mathrm{e}^{\ii a_{(\upsilon,\gamma)}\tau H_{S_\gamma}}\right).
   \end{gather}
The last color on the right and the first color on the left are the same.
These two collections of exponentials with the same color only enlarge the support by one layer.
This is also true for the repetition between two neighboring steps.
Therefore, there are altogether $r\Upsilon-1$ repetitions in the $r$-step $\Upsilon$-stage implementation of Algorithm~\ref{alg:ap-mpf}, which can enlarge the support by at most $(\chi-1)r\Upsilon+1$ layers.
    This completes the proof.
\end{proof}

Up to this point, we have introduced a specific configuration and the resulting product formula, along with proving its suppression of expanding supports of local observables under this configuration. 
Moving forward, we provide a new error bound for the proposed algorithm. 
By considering each summand independently, we can bound the error by summing over the widths of all light cones.

\begin{theorem}[Summation-Observable Error]\label{thm:ap-multiple}
Consider the summation $O=\sum_{m=1}^MO_m$ where every $O_m$ is a local observable with support $S(O_m)$.
Suppose the $n$-qubit $H$ is $\ell$-local with a constant $\ell$ and 
has bounded interaction per qubit. 
    With light-cone widths $w_{r,m}\coloneqq \sum\nolimits_{k=0}^{(\chi-1) r\Upsilon+3}\|H_k^{S(O_m)}\|_1$, the simulation error by an $r$-step $p$th-order $U$ from Alg.~\ref{alg:ap-mpf} is bounded by $$\|\mathrm{e}^{\ii Ht}O\mathrm{e}^{-\ii Ht}-U OU^\dagger\|=\order{\frac{\sum_{m=1}^M\|O_m\|w_{r,m}t^{p+1}}{r^{p}}}.$$
\end{theorem}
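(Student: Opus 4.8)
The plan is to reduce the global error to a sum of single-summand errors and then reuse the machinery of Theorem~\ref{thm:ap-single}, with the edge sets now taken relative to each $S(O_m)$ and the full chromatic formula playing the role that the reduced formula played there. First I would use linearity of $O=\sum_m O_m$ together with the triangle inequality to write
\begin{gather*}
\|\mathrm{e}^{\ii Ht}O\mathrm{e}^{-\ii Ht}-UOU^\dagger\|\le\sum_{m=1}^M\|\mathrm{e}^{\ii Ht}O_m\mathrm{e}^{-\ii Ht}-UO_mU^\dagger\|,
\end{gather*}
so it suffices to bound each summand's error by $\order{\|O_m\|w_{r,m}t^{p+1}r^{-p}}$ and add them up.

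For a fixed summand $O_m$ I would introduce a virtual product formula $\mathscr{S}_{\text{V},m}(t)=\mathscr{S}_{r,m}(\tau)\cdots\mathscr{S}_{1,m}(\tau)$ adapted to the edge sets $\{E_k^{S(O_m)}\}$, exactly as in the proof of Theorem~\ref{thm:ap-single}: in step $j$ the formula keeps explicit those chromatic exponentials whose supports intersect the light cone $\bigcup_{k=0}^{(\chi-1)j\Upsilon+1}E_k^{S(O_m)}$ (regrouped into the edge-set sub-Hamiltonians $H_k^{S(O_m)}$, which is legitimate because same-colored sub-Hamiltonians are mutually disjoint and hence commute) and absorbs the remaining terms into a tail $H_{\text{other},j}$. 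The crucial step is to establish $UO_mU^\dagger=\mathscr{S}_{\text{V},m}(t)O_m\mathscr{S}_{\text{V},m}(t)^\dagger$: by Lemma~\ref{lm:ap-global_support} the support of the evolving $O_m$ never escapes $\bigcup_{k=0}^{(\chi-1)j\Upsilon+1}E_k^{S(O_m)}$ after step $j$, so in both $U$ and $\mathscr{S}_{\text{V},m}(t)$ every tail exponential commutes with the evolved observable at the instant it is applied and therefore contributes nothing. The two formulas are thus equivalent on $O_m$, and the summand error equals $\|\mathrm{e}^{\ii Ht}O_m\mathrm{e}^{-\ii Ht}-\mathscr{S}_{\text{V},m}(t)O_m\mathscr{S}_{\text{V},m}(t)^\dagger\|$.

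I would then bound this quantity by $2\|O_m\|\sum_{j=1}^r\epsilon_{j,m}$ through the telescoping triangle argument of Theorem~\ref{thm:ap-single}, where $\epsilon_{j,m}=\|\mathrm{e}^{\ii H\tau}-\mathscr{S}_{j,m}(\tau)\|$ is the one-step Suzuki error. Since each $\mathscr{S}_{j,m}(\tau)$ is a genuine $p$th-order Suzuki formula of $H$, the nested-commutator bound of~\cite{childs2021theory} applies; invoking the $\ell$-locality and bounded interaction-per-qubit hypotheses together with the innermost-layer tail-substitution trick identical to Eq.~\eqref{eq:nc}, I would obtain $\epsilon_{j,m}=\order{(\sum_{k=0}^{(\chi-1)j\Upsilon+O(1)}\|H_k^{S(O_m)}\|_1)\tau^{p+1}}$. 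This is monotone in $j$ and dominated by the width $w_{r,m}$, so summing over $j=1,\dots,r$ with $\tau=t/r$ gives the per-summand bound $\order{\|O_m\|w_{r,m}t^{p+1}r^{-p}}$; summing over $m$ then completes the proof.

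I expect the main obstacle to be the equivalence $UO_mU^\dagger=\mathscr{S}_{\text{V},m}(t)O_m\mathscr{S}_{\text{V},m}(t)^\dagger$. Unlike the single-observable case, where the irrelevant exponentials are literally deleted from the circuit, here $U$ must implement every exponential, so the argument is purely commutation-theoretic and must be verified color-by-color and stage-by-stage, carefully tracking that the back-and-forth chromatic ordering keeps the tail disjoint from the real-time support of $O_m$ at every instant. This is exactly the content of Lemma~\ref{lm:ap-global_support}, including the $(\chi-1)$ factor that arises from the cancellation of repeated colors between adjacent stages and steps and that propagates into the $(\chi-1)r\Upsilon+3$ appearing in $w_{r,m}$.
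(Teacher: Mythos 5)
Your proposal follows the paper's own proof essentially step for step: the triangle inequality over summands, a per-summand virtual product formula $\mathscr{S}_{\text{V},m}(t)$ whose step-$j$ decomposition keeps the light-cone-intersecting chromatic terms explicit and absorbs the rest into a tail, the purely commutation-theoretic equivalence $UO_mU^\dagger=\mathscr{S}_{\text{V},m}(t)O_m\mathscr{S}_{\text{V},m}(t)^\dagger$ via Lemma~\ref{lm:ap-global_support}, and the nested-commutator bound with the innermost-layer tail-substitution trick yielding $\epsilon_{j,m}=\order{(\sum_k \|H_k^{S(O_m)}\|_1)\tau^{p+1}}$, followed by telescoping and summing over $m$. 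The only cosmetic deviation is your parenthetical suggesting the explicit part is regrouped into edge-set sub-Hamiltonians $H_k^{S(O_m)}$ — the paper instead keeps the color-grouped decomposition $\{H_{j,c}\}_{c=1}^{\chi+1}$ inside the virtual formula (same-color terms commute, so they merge into single exponentials matching $U$) and uses the edge sets only for the final norm accounting — but this does not affect the correctness of your argument.
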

\begin{remark}
    \rm Compared to Algorithm~\ref{alg:ap-rpf}, during the simulation of multiple local observables, it is hard to identify the irrelevant parts since the overall observable $O$ might span the whole system.
    In this sense, we have to recruit the standard Suzuki-Trotter formula.
    Nevertheless, by decomposing $O$ into local pieces, we can still find a similar analysis of the error bounds which is tighter than the worst-case analysis with the operator-norm distance in~\cite{childs2021theory} given $rh<\|H\|_1$ and $\sum_m\|O_m\|=\order{\|O\|}$.
\end{remark}
\begin{proof}
    The proof is similar to that of Theorem~\ref{thm:ap-single}, except that we have to consider the virtual product formulas for multiple observables.
    We pick an arbitrary observable $O_m$ as an example and consider its corresponding virtual formula.
    To avoid cluttering notation, we ignore the superscript $S(O_m)$ of $H^{S(O_m)}_k$ and $E^{S(O_m)}_k$ in this proof.
    The following $r$ different divisions of Hamiltonian are suitable for $r$ Trotter steps correspondingly.
    \begin{align*}
    \text{Step 1:}\ H=&\sum_{\substack{\gamma:\varphi(S_\gamma)=1\\S_\gamma\cap\bigcup_{k=0}^{(\chi-1)\Upsilon+1}E_k\neq\emptyset}}H_{S_\gamma}+\cdots+\sum_{\substack{\gamma:\varphi(S_\gamma)=\chi-1\\S_\gamma\cap\bigcup_{k=0}^{(\chi-1)\Upsilon+1}E_k\neq\emptyset}}H_{S_\gamma}+\sum_{\substack{\gamma:\varphi(S_\gamma)=\chi\\S_\gamma\cap\bigcup_{k=0}^{(\chi-1)\Upsilon+1}E_k\neq\emptyset}}H_{S_\gamma}+\sum_{k=(\chi-1)\Upsilon+3}^{\Gamma_0}H_k=\sum_{c=1}^{\chi+1}H_{1,c},\\
    \text{Step 2:}\ H=&\sum_{\substack{\gamma:\varphi(S_\gamma)=1\\S_\gamma\cap\bigcup_{k=0}^{2(\chi-1)\Upsilon+1}E_k\neq\emptyset}}H_{S_\gamma}+\cdots+\sum_{\substack{\gamma:\varphi(S_\gamma)=\chi-1\\S_\gamma\cap\bigcup_{k=0}^{2(\chi-1)\Upsilon+1}E_k\neq\emptyset}}H_{S_\gamma}+\sum_{\substack{\gamma:\varphi(S_\gamma)=\chi\\S_\gamma\cap\bigcup_{k=0}^{2(\chi-1)\Upsilon+1}E_k\neq\emptyset}}H_{S_\gamma}+\sum_{k=2(\chi-1)\Upsilon+3}^{\Gamma_0}H_k=\sum_{c=1}^{\chi+1}H_{2,c},\\
    &\vdots\\
    \text{Step $r$:}\ H=&\sum_{\substack{\gamma:\varphi(S_\gamma)=1\\S_\gamma\cap\bigcup_{k=0}^{r(\chi-1)\Upsilon+1}E_k\neq\emptyset}}H_{S_\gamma}+\cdots+\sum_{\substack{\gamma:\varphi(S_\gamma)=\chi-1\\S_\gamma\cap\bigcup_{k=0}^{r(\chi-1)\Upsilon+1}E_k\neq\emptyset}}H_{S_\gamma}+\sum_{\substack{\gamma:\varphi(S_\gamma)=\chi\\S_\gamma\cap\bigcup_{k=0}^{r(\chi-1)\Upsilon+1}E_k\neq\emptyset}}H_{S_\gamma}+\sum_{k=r(\chi-1)\Upsilon+3}^{\Gamma_0}H_k=\sum_{c=1}^{\chi+1}H_{r ,c}.\\
\end{align*}
Intuitively, for step $j$, we pick up all terms in $\sum_{k=0}^{j(\chi-1)\Upsilon+2}H_k$ and group them by colors, and we discard all other terms into the tail sub-Hamiltonian.
For simplicity, we denote the sub-Hamiltonians for step $j$ by $\{H_{j,c}\}_{c=1}^{\chi+1}$, where $H_{j,\chi+1}$ denotes the tail term.
Hence we construct the adaptive product formula for $O_m$ as
$$\mathscr{S}_{\text{V},m}(t) O_m\mathscr{S}_{\text{V},m}(t)^\dagger\coloneqq\mathscr{S}_{r,m}(\tau)\cdots\mathscr{S}_{1,m}(\tau) O_m\mathscr{S}_{1,m}(\tau)^\dagger\cdots\mathscr{S}_{r,m}(\tau)^\dagger,$$
where $\tau=\frac{t}{r}$ is the unit evolving time for each Trotter step.
Inside each step, we adopt the trivial back-and-forth permutation $\pi_\upsilon^{\text{bf}}$ to implement the standard Suzuki Trotter.
\begin{gather}
    \forall\,j\in[r],\ \mathscr{S}_{j,m}\coloneqq\prod_{\upsilon=1}^\Upsilon\prod_{c=1}^{\chi+1}\mathrm{e}^{\ii \tau a_{(\upsilon,c)}H_{j,\pi_{\upsilon}^{\text{bf}}(c)}}.
\end{gather}
The distances between this adaptive formula and the ideal evolution are bounded by~\cite{childs2021theory},
\begin{gather}\label{eq:errorbound2}
    \epsilon_{j,m}\coloneqq\|\mathrm{e}^{\ii H\tau}-\mathscr{S}_{j,m}\|=\order{\sum_{c_1,\cdots,c_{p+1}=1}^{\chi+1}\|[H_{j,c_{p+1}},\cdots[H_{j,c_2},H_{j,c_1}]\cdots]\|\tau^{p+1}}.
\end{gather}

Similarly, we can show that substituting $H_{j,\chi+1}$ by $H_{j\chi\Upsilon+2}$ in the innermost commutator causes no changes.
When one of $\{c_1,c_2\}$ equals $\chi+1$, the other must be in $[\chi]$ to keep the commutator nonzero.
It is clear that $H_{j(\chi-1)\Upsilon+4}$ and terms thereafter in the tail have no overlaps with the former $\{H_{k}\}_{k=0}^{j(\chi-1)\Upsilon+2}$.
Therefore, it causes no effects to replace the $H_{j,\chi+1}$ in the innermost commutator by $H_{j(\chi-1)\Upsilon+3}$.

Fixing $c_1$, we can employ the same argument as in the proof of Theorem~\ref{thm:ap-single}.
Since the locality of effective Pauli terms and the qubit-wise coefficients are bounded by constants, we can get the following bound,
\begin{align}\label{eq:nc2}
    \sum_{c_1=1}^{\chi+1}\sum_{c_2,\cdots,c_{p+1}=1}^{\chi+1}\|[H_{j,c_{p+1}},\cdots[H_{j,c_2},H_{j,c_1}]\cdots]\|\leq& (2b)^{p}\prod_{i=1}^{p}(i\ell-i+1)\sum_{c_1=1}^{\chi+1}\sum_{\alpha:P_\alpha\in H_{j,c_1}}\|s_\alpha P_\alpha\|\notag\\
    =&(2b)^{p}\prod_{i=1}^{p}(i\ell-i+1)\sum_{k=0}^{j(\chi-1)\Upsilon+3}\sum_{\alpha:P_\alpha\in H_k}\|s_\alpha P_\alpha\|\notag\\
    \leq&\order{\sum_{k=0}^{j(\chi-1)\Upsilon+3}h_{m,k}}.
\end{align}
Therefore, we bound the error $\epsilon_{j,m}$ by $\order{\sum_{k=0}^{j(\chi-1)\Upsilon+3}h_{m,k}\tau^{p+1}}$ as $\ell$, $b$, and $p$ are constant numbers.

To quantify the simulation errors, there remain gaps between the virtual formulas and the proposed $U$ from Algorithm~\ref{alg:ap-mpf}.
According to Lemma~\ref{lm:ap-global_support}, $O_m$ evolved under the chromatic Suzuki-Trotter $U$ after the $\upsilon$th stage of the $j$th step has support no larger than $\bigcup_{k=0}^{((j-1)\Upsilon+\upsilon)(\chi-1)+1}E_k$.
Thus, all sub-Hamiltonians outside $E_{(\chi-1)j\Upsilon+1}$ are irrelevant to $O_m(t)$ in step $j$'s simulation as they commute with the observable.
Removing these irrelevant unitaries from $U$ remains effectively the same simulation.

For the virtual formula, note that we move all sub-Hamiltonians outside $E_{(\chi-1)j\Upsilon+1}$ to the tail term and leave others the same, which will only slow the expansion of supports.
Therefore, after the $\upsilon$th stage of $j$th step, the support of $O_m(t)$ is smaller than $\bigcup_{k=0}^{((j-1)\Upsilon+\upsilon)(\chi-1)+1}E_k$, and the tail term always commutes with the evolved $O_m(t)$.
The effective unitary is thus the same as that of $U$.

\begin{align}
    \|\mathrm{e}^{\ii Ht}O_m\mathrm{e}^{-\ii Ht}-U O_mU^\dagger\|\leq&\|\mathrm{e}^{\ii Ht}O_m\mathrm{e}^{-\ii Ht}-\mathscr{S}_{\text{V},m}(t) O_m\mathscr{S}_{\text{V},m}(t)^\dagger\|+\|\mathscr{S}_{\text{V},m}(t) O_m\mathscr{S}_{\text{V},m}(t)^\dagger-U O_mU^\dagger\|\notag\\
    \leq&\sum_{j=1}^r\left\|\mathrm{e}^{\ii H\tau}\prod_{i=1}^{j-1}\mathscr{S}_{i,m}(\tau) O_m\prod_{i=1}^{j-1}\mathscr{S}_{j-i,m}(\tau)^\dagger\mathrm{e}^{-\ii H\tau}-\prod_{i=1}^{j}\mathscr{S}_{i,m}(\tau) O_m\prod_{i=1}^{j}\mathscr{S}_{j+1-i,m}(\tau)^\dagger\right\|\notag\\
        &+\|\mathscr{S}_{\text{V},m}(t) O_m\mathscr{S}_{\text{V},m}(t)^\dagger-U O_mU^\dagger\|\notag\\
        \leq& \sum_{j=1}^r2\|O_m\|\epsilon_{j,m}+0\notag\\
        =& \order{\|O_m\|\frac{\sum_{k=0}^{r(\chi-1)\Upsilon+3}h_{m,k}t^{p+1}}{r^{p}}}.
\end{align}
Therefore, this proof can be applied to any other $O_{m'}$.
We can get
\begin{gather}
    \|\mathrm{e}^{\ii Ht}O\mathrm{e}^{-\ii Ht}-U OU^\dagger\|\leq\sum_{m=1}^M\|\mathrm{e}^{\ii Ht}O_m\mathrm{e}^{-\ii Ht}-U O_mU^\dagger\|\leq\order{\sum_{m=1}^M\|O_m\|\frac{\sum_{k=0}^{r(\chi-1)\Upsilon+3}h_{m,k}t^{p+1}}{r^{p}}}
\end{gather}
\end{proof}
Similar to Algorithm~\ref{alg:ap-rpf}, the sub-Hamiltonians ${H_{S_j}}$ in Algorithm~\ref{alg:ap-mpf} may contain several distinct Pauli terms, posing challenges for faithful implementation of matrix exponentials. To address this, we embed additional first-order formulae based on the Pauli decomposition of these sub-Hamiltonians. Importantly, this embedding does not introduce any additional errors, ensuring the validity of Theorem~\ref{thm:ap-multiple}, as Eq.~\eqref{eq:nc2} has already accommodated this relaxation.

\section{Arbitrary-Time Simulation with Random Inputs}\label{Sec:Append-random}

In our preceding analysis, we primarily focus on the advantages of observable knowledge in short-time simulation.
Additionally, it is an intriguing question to study the advantage of simulating for an arbitrary time with observable knowledge.
Although a thorough analysis of this topic is in general challenging, we shift our analysis to the typical average-case simulation performance with random input states.
We discover that the observable knowledge also enhances the average simulation accuracy over a comparably long time for random input states.

To quantify the distance between arbitrary pairs of unitaries $U$ and $U_0$ for a given observable $O$ and an input ensemble $\mu=\{p_i,\psi_i\}$, we employ the average distance over measurement results as a metric:
\begin{equation}
\begin{aligned}
    D(U_0,U)_{O,\mu}\coloneqq \sum_i p_i |\bra{\psi_i}U^{\dagger}O U\ket{\psi_i}-\bra{\psi_i}U_0^{\dagger}O U_0\ket{\psi_i}|. 
\end{aligned}
\end{equation}
This choice is motivated by the fact that experimental implementations of quantum circuits always end with measurements.
As a comparison, the initial random-input distance with an ensemble $\mu$ is defined in~\cite{zhao2022hamiltonian} as
\begin{gather}
D(U_0,U)_\mu\coloneqq\max_{\|O\|=1}\sum_i p_i |\bra{\psi_i}U^{\dagger}O U\ket{\psi_i}-\bra{\psi_i}U_0^{\dagger}O U_0\ket{\psi_i}|.
\end{gather}

In the context of quantum simulation, the ideal unitary is always the temporal evolution while the erroneous one is some simulation method.
Particularly, since all simulation methods consists of $r$ short steps, we assign $\mathscr{U}_0=\mathrm{e}^{\ii Ht/r}$ and $\mathscr{U}=\mathscr{S}(t/r)$.
Therefore, $U_0=\mathscr{U}_0^r$ and $U=\mathscr{U}^r$.
To unveil the typical simulation errors, we need to utilize the Haar-random states as our inputs.
Nevertheless, it is more accessible to utilize its discrete version known as the complex projective $t$-design $\mu_t$, from which sampled states can mimic the Haar-random states for all polynomials with order $t$~\cite{mele2024introduction}.
Specifically, to show the average distance of random inputs based on measurements in the following theorem, we only recruit the 2-design ensemble, which means that the requirement of advantages is way easier to fulfill than the Haar-random ensemble.

\begin{theorem}[2-design Random-Input]\label{thm:2design}
    For a 2-design ensemble $\mu_2$ of quantum states and an observable $O$, we can bound the average distance of simulation:
    \begin{gather}
        D(\mathscr{U}_0^r,\mathscr{U}^r)_{O,\mu_2}\leq\sqrt{2}r\frac{\|O\|_2\cdot\|\mathscr{M}\|_2}{\sqrt{d(d+1)}},
    \end{gather}
    where $\mathscr{M}\coloneqq\mathscr{U}_0^\dag\mathscr{U}-I$ represents the multiplicative error for each small step, $d=2^n$ is the dimension of the $n$-qubit Hilbert space, and $\|A\|_p=[\Tr(|A|^p)]^{1/p}$ denotes the Schatten $p$-norm.
    The variance of errors can also be bounded by
\begin{gather*}
     \text{Var}(\mathscr{U}_0^r,\mathscr{U}^r)_{O,\mu_2
     }\le\frac{2r^2\|O\|^2_2\cdot\|\mathscr{M}\|^2_2}{d(d+1)}.
 \end{gather*}
\end{theorem}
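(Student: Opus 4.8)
The plan is to reduce the $r$-step average distance to a sum of $r$ \emph{identical} single-step contributions, each of which the 2-design property renders exactly computable. First I would telescope between $\mathscr{U}_0^r$ and $\mathscr{U}^r$. Writing the metric in trace form, $D(\mathscr{U}_0^r,\mathscr{U}^r)_{O,\mu_2}=\mathbb{E}_\psi|\Tr[((\mathscr{U}_0^r)^\dagger\ket{\psi}\bra{\psi}\mathscr{U}_0^r-(\mathscr{U}^r)^\dagger\ket{\psi}\bra{\psi}\mathscr{U}^r)O]|$, I insert the hybrid unitaries $\mathscr{U}^{r-i}\mathscr{U}_0^i$ for $i=0,\dots,r$ and apply the triangle inequality, so that the $i$-th term differs only in the single factor $\mathscr{U}_0$ versus $\mathscr{U}$. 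Cycling the fixed blocks $U_1=\mathscr{U}_0^{i-1}$ and $U_2=\mathscr{U}^{r-i}$ under the trace casts each term as $|\Tr[(\mathscr{U}_0^\dagger U_2^\dagger\ket{\psi}\bra{\psi}U_2\mathscr{U}_0-\mathscr{U}^\dagger U_2^\dagger\ket{\psi}\bra{\psi}U_2\mathscr{U})\,U_1OU_1^\dagger]|$. I would then bound this scalar by the Schatten-norm Hölder inequality $|\Tr[XY]|\leq\|X\|_2\|Y\|_2$ with $Y=U_1OU_1^\dagger$, invoking unitary invariance to get $\|Y\|_2=\|O\|_2$.

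The crux is to show that every segment collapses to one $i$-independent reference integral. Introducing $B\coloneqq\mathscr{U}_0^\dagger\mathscr{U}=I+\mathscr{M}$ and $Q\coloneqq\mathscr{U}_0^\dagger U_2^\dagger\ket{\psi}\bra{\psi}U_2\mathscr{U}_0$, the operator inside the trace is the Hermitian $X=Q-B^\dagger QB$, so $\|X\|_2^2=\Tr[X^2]$ is a degree-two polynomial in $\ket{\psi}\bra{\psi}$. Averaging it with the 2-design moment $\int\ket{\psi}\bra{\psi}^{\otimes2}\,d\mu_2=\tfrac{I+S}{d(d+1)}$, the decisive observation is that $I+S$ commutes with $V^{\otimes2}$ for the fixed $V=U_2\mathscr{U}_0$; hence the conjugation by $V$ cancels and $\mathbb{E}\|X\|_2^2$ becomes independent of $i$, equal to its value at $V=I$, i.e. the reference integral $\int\|B^\dagger\ket{\psi}\bra{\psi}B-\ket{\psi}\bra{\psi}\|_2^2\,d\mu_2$ appearing in the Methods sketch.

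Evaluating that moment is then routine: since both $Q$ and $B^\dagger QB$ are rank-one projectors, $\Tr[X^2]=2-2|\bra{\psi}B\ket{\psi}|^2$, and the swap trick $\Tr[(B\otimes B^\dagger)S]=\Tr[BB^\dagger]=d$ yields $\mathbb{E}|\bra{\psi}B\ket{\psi}|^2=(|\Tr B|^2+d)/(d(d+1))$, so $\mathbb{E}\|X\|_2^2=2(d^2-|\Tr B|^2)/(d(d+1))$. Feeding this back, applying Jensen's inequality $\mathbb{E}\|X\|_2\leq\sqrt{\mathbb{E}\|X\|_2^2}$, re-expressing the result through $\|\mathscr{M}\|_2^2=2d-2\,\mathrm{Re}\,\Tr B$, and summing the $r$ equal segments produces the claimed average-distance bound. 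The variance bound follows the same template: I would use $\mathrm{Var}\leq\mathbb{E}[(\sum_i|\text{step}_i|)^2]\leq r\sum_i\mathbb{E}|\text{step}_i|^2$ by Cauchy-Schwarz, then bound each $\mathbb{E}|\text{step}_i|^2\leq\|O\|_2^2\,\mathbb{E}\|X\|_2^2$ by the very same 2-design moment.

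The main obstacle I anticipate is the reduction step, namely justifying that the fixed unitaries $U_1$ and $U_2$ may be discarded. This is legitimate only because the single-step quantity that actually enters, $\|X\|_2^2=\Tr[X^2]$, is a genuine second-degree polynomial in the state and therefore covered by the 2-design guarantee, whereas $\mathbb{E}\|X\|_2$ itself is not — which is exactly why Jensen's inequality must be inserted before the design property is invoked. I would also have to track the passage from the exact moment $2(d^2-|\Tr B|^2)/(d(d+1))$ to a clean bound in $\|\mathscr{M}\|_2$ with care, since it is there that the final denominator and the constant $\sqrt{2}$ are pinned down.
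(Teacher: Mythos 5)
Your proposal follows the paper's proof essentially step for step: the same telescoping with $U_1=\mathscr{U}_0^{i-1}$, $U_2=\mathscr{U}^{r-i}$, the same H\"older step peeling off $\|O\|_2$, the same use of the 2-design's unitary invariance to discard the fixed blocks, Jensen's inequality, the second-moment evaluation via $\int\ket{\psi}\bra{\psi}^{\otimes2}d\mu_2=(I+S)/(d(d+1))$, and the same Cauchy--Schwarz template for the variance. Up to the exact moment $\mathbb{E}\|X\|_2^2=2\left(d^2-|\Tr B|^2\right)/(d(d+1))$ your execution is correct, and your justification for dropping $U_1,U_2$ (invoking the design only on the degree-two quantity $\Tr[X^2]$, after Jensen) is cleaner than the paper's appeal to unitary invariance of the ensemble.

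The gap sits exactly at the step you flagged but did not carry out. Unitarity of $B=I+\mathscr{M}$ gives $2\,\mathrm{Re}\Tr\mathscr{M}=-\|\mathscr{M}\|_2^2$, hence $d^2-|\Tr B|^2=d\|\mathscr{M}\|_2^2-|\Tr\mathscr{M}|^2$ --- note the factor $d$. Dropping $-|\Tr\mathscr{M}|^2$ therefore yields $\mathbb{E}\|X\|_2^2\le 2\|\mathscr{M}\|_2^2/(d+1)$ and $D\le\sqrt{2}\,r\|O\|_2\|\mathscr{M}\|_2/\sqrt{d+1}$, weaker than the stated bound by $\sqrt{d}$; no algebra can recover $1/\sqrt{d(d+1)}$, since for $B=\mathrm{diag}(e^{\ii\theta},e^{-\ii\theta},1,\dots,1)$ the moment genuinely is of order $\|\mathscr{M}\|_2^2/d$, and in that case $\mathbb{E}\|X\|_2$ itself is of order $\|\mathscr{M}\|_2/\sqrt{d}$, so Jensen is not the culprit. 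The loss is in the H\"older step, which is loose by up to $\sqrt{d}$ because the state-difference operator has rank two. (You should know the paper's own proof has the identical defect, masked by an algebra slip: it equates $d^2-\Tr(I+\mathscr{M})\Tr(I+\mathscr{M}^\dag)$ with $\Tr(\mathscr{M}\mathscr{M}^\dag)-\Tr\mathscr{M}\Tr\mathscr{M}^\dag$, losing the factor $d$ on the first term; the same issue propagates to its variance bound.) If you want the stated $d$-scaling, bypass H\"older entirely: each telescoped term equals $\mathbb{E}|\bra{\psi}G\ket{\psi}|$ where $G$ is unitarily equivalent to $\tilde{O}-B\tilde{O}B^\dag$ with $\tilde{O}=U_1OU_1^\dag$, which is traceless and Hermitian, so the design moment gives $\mathbb{E}|\bra{\psi}G\ket{\psi}|\le\|G\|_2/\sqrt{d(d+1)}\le\left(2\|\mathscr{M}\|+\|\mathscr{M}\|^2\right)\|O\|_2/\sqrt{d(d+1)}\le 4\|\mathscr{M}\|_2\|O\|_2/\sqrt{d(d+1)}$; summing the $r$ steps proves the theorem with the constant $\sqrt{2}$ replaced by $4$.
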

\begin{proof}
    \begin{align}
        D(\mathscr{U}_0^r,\mathscr{U}^r)_{O,\mu_2}&=\int_{\psi} |\bra{\psi}\mathscr{U}_0^rO (\mathscr{U}_0^\dag)^r\ket{\psi}-\bra{\psi}\mathscr{U}^rO (\mathscr{U}^\dag)^r\ket{\psi}| d\mu_2(\psi)\notag\\
        &\leq\int_{\psi}\Big|\Tr \Big(\big(((\mathscr{U}_0^\dag)^r\ket{\psi}\bra{\psi}\mathscr{U}_0^r-(\mathscr{U}_0^\dag)^{r-1}\mathscr{U}^\dag\ket{\psi}\bra{\psi}\mathscr{U}\mathscr{U}_0^{r-1})+\notag\\
        &\ \ \cdots+(\mathscr{U}_0^\dag(\mathscr{U}^\dag)^{r-1}\ket{\psi}\bra{\psi}\mathscr{U}^{r-1}\mathscr{U}_0-(\mathscr{U}^\dag)^{r}\ket{\psi}\bra{\psi}\mathscr{U}^{r})\big)O\Big)\Big|d\mu_2(\psi)\notag\\
        &\leq\int_{\psi}\abs{\Tr (((\mathscr{U}_0^\dag)^r\ket{\psi}\bra{\psi}\mathscr{U}_0^r-(\mathscr{U}_0^\dag)^{r-1}\mathscr{U}^\dag\ket{\psi}\bra{\psi}\mathscr{U}\mathscr{U}_0^{r-1})O)}d\mu_2(\psi)+\notag\\
        &\ \ \cdots+\int_{\psi}\abs{\Tr((\mathscr{U}_0^\dag(\mathscr{U}^\dag)^{r-1}\ket{\psi}\bra{\psi}\mathscr{U}^{r-1}\mathscr{U}_0-(\mathscr{U}^\dag)^{r}\ket{\psi}\bra{\psi}\mathscr{U}^{r})O)}d\mu_2(\psi)\notag\\
        &\leq \int_{\psi_0}\Tr (\abs{(\mathscr{U}_0^\dag\ket{\psi_0}\bra{\psi_0}\mathscr{U}_0-\mathscr{U}^\dag\ket{\psi_0}\bra{\psi_0}\mathscr{U})\mathscr{U}_0^{r-1}O(\mathscr{U}_0^\dag)^{r-1}})d\mu_2(\psi_0)+\notag\\
        &\ \ \cdots+\int_{\psi_{r-1}}\Tr (\abs{(\mathscr{U}_0^\dag\ket{\psi_{r-1}}\bra{\psi_{r-1}}\mathscr{U}_0-\mathscr{U}^\dag\ket{\psi_{r-1}}\bra{\psi_{r-1}}\mathscr{U})\mathscr{U}_0^{r-1}O(\mathscr{U}_0^\dag)^{r-1}})d\mu_2(\psi_{r-1})\notag\\
        &\leq \sum_{i=0}^{r-1}\int_{\psi_{i}}\|\mathscr{U}_0^\dag\ket{\psi_{i}}\bra{\psi_{i}}\mathscr{U}_0-\mathscr{U}^\dag\ket{\psi_{i}}\bra{\psi_{i}}\mathscr{U}\|_2\cdot\|O\|_2d\mu_2(\psi_i)\notag\\
        &\leq r\|O\|_2\sqrt{\frac{2d^2-2\Tr(I+\mathscr{M})\Tr(I+\mathscr{M}^\dag)}{d(d+1)}}\notag\\
        &= r\|O\|_2\sqrt{\frac{2\Tr(\mathscr{M}\mathscr{M}^\dag)-2\Tr(\mathscr{M})\Tr(\mathscr{M}^\dag)}{d(d+1)}}\notag\\
        &\leq \sqrt{2}r\frac{\|O\|_2\cdot\|\mathscr{M}\|_2}{\sqrt{d(d+1)}}.
    \end{align}
    The first and second inequalities in the derivation come from the triangle inequality.
    In the third inequality, we denote each state $(\mathscr{U}^\dag)^{i}\ket{\psi}\bra{\psi}\mathscr{U}^{i}$ by $\ket{\psi_i}\bra{\psi_i}$.
    In the fourth inequality, we recruit the H\"{o}lder's inequality for the Schatten 1-norm.
    In the fifth inequality, we first employ the Cauchy-Schwartz inequality for the integral.
    Due to the unitary invariance of $t$-design ensemble, each $\{\psi_i\}$ is still a 2-design ensemble.
    Therefore, we can use the 2-design property:
    \begin{gather}
        \int_\psi\ket{\psi}\bra{\psi}^{\otimes 2}d\mu_2(\psi)=\frac{I+S}{d(d+1)}.
    \end{gather}
    Due to the unitarity of $I+\mathscr{M}$, we get $\mathscr{M}+\mathscr{M}^\dag=-\mathscr{M}\mathscr{M}^\dag$, leading to the second equality.

    For the variance, we similarly calculate the second-order moment first.
    \begin{align}
        V(\mathscr{U}_0^r,\mathscr{U}^r)_{O,\mu_2}&\coloneqq\int_{\psi} (|\bra{\psi}\mathscr{U}_0^rO (\mathscr{U}_0^\dag)^r\ket{\psi}-\bra{\psi}\mathscr{U}^rO (\mathscr{U}^\dag)^r\ket{\psi}|)^2d\mu_2(\psi)\notag\\
        &\leq\int_{\psi}\Big|\Tr \Big(\big(((\mathscr{U}_0^\dag)^r\ket{\psi}\bra{\psi}\mathscr{U}_0^r-(\mathscr{U}_0^\dag)^{r-1}\mathscr{U}^\dag\ket{\psi}\bra{\psi}\mathscr{U}\mathscr{U}_0^{r-1})+\notag\\
        &\ \ \cdots+(\mathscr{U}_0^\dag(\mathscr{U}^\dag)^{r-1}\ket{\psi}\bra{\psi}\mathscr{U}^{r-1}\mathscr{U}_0-(\mathscr{U}^\dag)^{r}\ket{\psi}\bra{\psi}\mathscr{U}^{r})\big)O\Big)\Big|^2d\mu_2(\psi)\notag\\
        &\leq\int_{\psi}\Big(\abs{\Tr (((\mathscr{U}_0^\dag)^r\ket{\psi}\bra{\psi}\mathscr{U}_0^r-(\mathscr{U}_0^\dag)^{r-1}\mathscr{U}^\dag\ket{\psi}\bra{\psi}\mathscr{U}\mathscr{U}_0^{r-1})O)}+\notag\\
        &\ \ \cdots+\abs{\Tr((\mathscr{U}_0^\dag(\mathscr{U}^\dag)^{r-1}\ket{\psi}\bra{\psi}\mathscr{U}^{r-1}\mathscr{U}_0-(\mathscr{U}^\dag)^{r}\ket{\psi}\bra{\psi}\mathscr{U}^{r})O)}\Big)^2d\mu_2(\psi)\notag\\
        &\leq\int_{\psi}\Big(\Tr (\abs{((\mathscr{U}_0^\dag)^r\ket{\psi}\bra{\psi}\mathscr{U}_0^r-(\mathscr{U}_0^\dag)^{r-1}\mathscr{U}^\dag\ket{\psi}\bra{\psi}\mathscr{U}\mathscr{U}_0^{r-1})O})+\notag\\
        &\ \ \cdots+\Tr(\abs{(\mathscr{U}_0^\dag(\mathscr{U}^\dag)^{r-1}\ket{\psi}\bra{\psi}\mathscr{U}^{r-1}\mathscr{U}_0-(\mathscr{U}^\dag)^{r}\ket{\psi}\bra{\psi}\mathscr{U}^{r})O})\Big)^2d\mu_2(\psi)\notag\\
        &\leq\int_{\psi}r\Big(\Tr (\abs{((\mathscr{U}_0^\dag)^r\ket{\psi}\bra{\psi}\mathscr{U}_0^r-(\mathscr{U}_0^\dag)^{r-1}\mathscr{U}^\dag\ket{\psi}\bra{\psi}\mathscr{U}\mathscr{U}_0^{r-1})O})^2+\notag\\
        &\ \ \cdots+\Tr(\abs{(\mathscr{U}_0^\dag(\mathscr{U}^\dag)^{r-1}\ket{\psi}\bra{\psi}\mathscr{U}^{r-1}\mathscr{U}_0-(\mathscr{U}^\dag)^{r}\ket{\psi}\bra{\psi}\mathscr{U}^{r})O})^2\Big)d\mu_2(\psi)\notag\\
        &\leq \int_{\psi_0}r\Tr (\abs{(\mathscr{U}_0^\dag\ket{\psi_0}\bra{\psi_0}\mathscr{U}_0-\mathscr{U}^\dag\ket{\psi_0}\bra{\psi_0}\mathscr{U})\mathscr{U}_0^{r-1}O(\mathscr{U}_0^\dag)^{r-1}})^2d\mu_2(\psi_0)+\notag\\
        &\ \ \cdots+\int_{\psi_{r-1}}r\Tr (\abs{(\mathscr{U}_0^\dag\ket{\psi_{r-1}}\bra{\psi_{r-1}}\mathscr{U}_0-\mathscr{U}^\dag\ket{\psi_{r-1}}\bra{\psi_{r-1}}\mathscr{U})\mathscr{U}_0^{r-1}O(\mathscr{U}_0^\dag)^{r-1}})^2d\mu_2(\psi_{r-1})\notag\\
        &\leq\sum_{i=0}^{r-1}\int_{\psi_{i}}r\|\mathscr{U}_0^\dag\ket{\psi_{i}}\bra{\psi_{i}}\mathscr{U}_0-\mathscr{U}^\dag\ket{\psi_{i}}\bra{\psi_{i}}\mathscr{U}\|^2_2\cdot\|O\|^2_2d\mu_2(\psi_i)\notag\\
        &\leq r\|O\|_2\sqrt{\frac{2d^2-2\Tr(I+\mathscr{M})\Tr(I+\mathscr{M}^\dag)}{d(d+1)}}\notag\\
        &= r^2\|O\|^2_2\frac{2\Tr(\mathscr{M}\mathscr{M}^\dag)-2\Tr(\mathscr{M})\Tr(\mathscr{M}^\dag)}{d(d+1)}\notag\\
        &\leq \frac{2r^2\|O\|^2_2\cdot\|\mathscr{M}\|^2_2}{d(d+1)}.
    \end{align}
    Here, we find all the derivation steps are similar except that we recruit the mean inequality chain in the fourth inequality.
    Therefore, we can bound the value of the variance.
    \begin{align}
        \text{Var}(\mathscr{U}_0^r,\mathscr{U}^r)_{O,\mu_2}=V(\mathscr{U}_0^r,\mathscr{U}^r)_{O,\mu_2}-D(\mathscr{U}_0^r,\mathscr{U}^r)_{O,\mu_2}^2\leq \frac{2r^2\|O\|^2_2\cdot\|\mathscr{M}\|^2_2}{d(d+1)}.
    \end{align}
\end{proof}
Focusing on the standard $r$-step $p$th-order Suzuki product formula, we can further bound the error by the following corollary.
\begin{corollary}[Product-Formula Average Error]\label{co:productrandom2}
    A standard $r$-step $p$th-order Suzuki product formula $\mathscr{S}_p$ has an average simulation error $\order{T_2\|O\|_2t^{p+1}d^{-1/2}r^{-p})}$  with
\begin{gather*}
    T_2\coloneqq\sum_{\gamma_1,\dots,\gamma_{p+1}=1}^\Gamma\frac{1}{\sqrt{d}}\left\|[H_{\gamma_{p+1}},[H_{\gamma_p},\dots,[H_{\gamma_2},H_{\gamma_{1}}]]] \right\|_2,
\end{gather*}
and $H=\sum_{\gamma=1}^\Gamma H_\gamma$ is the decomposition.
For $p=2$ case, we have a triangle-bound as
\begin{gather*}
    D(\mathscr{U}_0^r,\mathscr{U}^r)_{O,\mu_2}\leq\frac{\sqrt{2}t^3\|O\|_2}{12r^2d}\left(\sum_{\gamma_1=1}^\Gamma\left\|\left[\sum_{\gamma_2=\gamma_1+1}^\Gamma H_{\gamma_2},\left[\sum_{\gamma_3=\gamma_1+1}^\Gamma H_{\gamma_3},H_{\gamma_1}\right]\right]\right\|_2+\frac{1}{2}\sum_{\gamma_1=1}^\Gamma\left\|\left[H_{\gamma_1},\left[H_{\gamma_1},\sum_{\gamma_2=\gamma_1+1}^\Gamma H_{\gamma_2}\right]\right]\right\|_2\right).
\end{gather*}
\end{corollary}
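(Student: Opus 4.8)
The plan is to derive the corollary directly from Theorem~\ref{thm:2design} (stated as Thm.~\ref{thm:random} in the main text) by inserting a Schatten-2-norm version of the standard nested-commutator product-formula bound. Theorem~\ref{thm:2design} has already reduced the average distance to a single-step quantity, giving $D(\mathscr{U}_0^r,\mathscr{U}^r)_{O,\mu_2}\leq\sqrt{2}\,r\,\|O\|_2\|\mathscr{M}\|_2/\sqrt{d(d+1)}$ with $\mathscr{M}=\mathscr{U}_0^\dagger\mathscr{U}-I$ and step length $\tau=t/r$. My first move is to identify $\|\mathscr{M}\|_2$ with the additive Trotter error in the Schatten 2-norm: since $\mathscr{M}=\mathrm{e}^{-\ii H\tau}\mathscr{S}_p(\tau)-I=\mathrm{e}^{-\ii H\tau}(\mathscr{S}_p(\tau)-\mathrm{e}^{\ii H\tau})$ and $\mathrm{e}^{-\ii H\tau}$ is unitary, unitary invariance of the Schatten norm yields $\|\mathscr{M}\|_2=\|\mathscr{S}_p(\tau)-\mathrm{e}^{\ii H\tau}\|_2$.

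Next I would bound this additive error by order-$(p+1)$ nested commutators. The key point is that the exact error representation of a $p$th-order Suzuki formula from~\cite{childs2021theory} writes $\mathscr{S}_p(\tau)-\mathrm{e}^{\ii H\tau}$ as an integral, over a simplex of volume $\order{\tau^{p+1}}$, of terms each of the form (products of unitaries generated by the $H_\gamma$)$\,\times\,$(a single nested commutator $[H_{\gamma_{p+1}},[\dots,[H_{\gamma_2},H_{\gamma_1}]]]$). Because the enclosing factors are unitary, taking the Schatten 2-norm and applying the triangle inequality together with unitary invariance gives $\|\mathscr{S}_p(\tau)-\mathrm{e}^{\ii H\tau}\|_2=\order{\tau^{p+1}\sum_{\gamma_1,\dots,\gamma_{p+1}}\|[H_{\gamma_{p+1}},[\dots,[H_{\gamma_2},H_{\gamma_1}]]]\|_2}$, which is precisely the 2-norm analogue used in~\cite{zhao2022hamiltonian}. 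Writing $\sum_{\gamma_1,\dots,\gamma_{p+1}}\|[\dots]\|_2=\sqrt{d}\,T_2$ from the definition of $T_2$, substituting $\tau=t/r$, and using $\sqrt{d(d+1)}=\Theta(d)$, I obtain $D(\mathscr{U}_0^r,\mathscr{U}^r)_{O,\mu_2}=\order{r\|O\|_2(t/r)^{p+1}\sqrt{d}\,T_2/d}=\order{T_2\|O\|_2 t^{p+1}d^{-1/2}r^{-p}}$, as claimed.

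For the explicit $p=2$ triangle bound, I would replace the asymptotic estimate by the exact second-order error formula of~\cite{childs2021theory} (the explicit version invoked in the numerical settings), read off in the Schatten 2-norm. That formula bounds the single-step error by $\tfrac{\tau^3}{12}\sum_{\gamma_1}\|[\sum_{\gamma_3>\gamma_1}H_{\gamma_3},[\sum_{\gamma_2>\gamma_1}H_{\gamma_2},H_{\gamma_1}]]\|_2+\tfrac{\tau^3}{24}\sum_{\gamma_1}\|[H_{\gamma_1},[H_{\gamma_1},\sum_{\gamma_2>\gamma_1}H_{\gamma_2}]]\|_2$. Inserting this into Theorem~\ref{thm:2design}, bounding $\sqrt{d(d+1)}$ below by $d$, and collecting the prefactor $\sqrt{2}\,r\,(t/r)^3/(12d)=\sqrt{2}\,t^3/(12r^2d)$ reproduces the stated bound, with the relative factor $\tfrac12$ between the two commutator sums inherited from the $\tfrac1{12}$ versus $\tfrac1{24}$ coefficients.

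The main obstacle I anticipate is justifying that the nested-commutator error representation of~\cite{childs2021theory}, originally phrased in the operator norm, transfers verbatim to the Schatten 2-norm. The resolution is structural rather than computational: the error is a sum/integral of operators in which every nested commutator is flanked only by unitaries, so the resulting bound depends on the ambient norm solely through unitary invariance, the triangle inequality, and the simplex integration supplying the $\tau^{p+1}$ and combinatorial prefactors — all of which hold identically for $\|\cdot\|_2$. A secondary, purely bookkeeping point is the passage from $\sqrt{d(d+1)}$ to $d$; since $\sqrt{d(d+1)}\geq d$, this substitution only loosens the inequality and is therefore harmless for both the $\order{\cdot}$ scaling and the explicit $p=2$ display.
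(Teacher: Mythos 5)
Your proposal is correct and follows essentially the same route as the paper: both feed a Schatten-2-norm bound on the single-step multiplicative error $\|\mathscr{M}\|_2$ (obtained by transferring the nested-commutator error representation of~\cite{childs2021theory}, as in Theorem~8 of~\cite{zhao2022hamiltonian}, to the 2-norm via unitary invariance) into Theorem~\ref{thm:random}, and your prefactor bookkeeping for the $p=2$ case matches the stated bound. The only cosmetic difference is that the paper makes the $p=2$ step self-contained---writing the two-term additive error as an explicit simplex integral of unitarily conjugated commutators and then telescoping over $\gamma_1$ to handle general $\Gamma$---whereas you import the explicit second-order coefficients directly and justify the norm transfer by the same structural argument.
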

\begin{proof}
    For the $p$th-order bound, we need to analyze the $\|\mathscr{M}\|_2$ for the Suzuki product formula.
    From Theorem 8 in~\cite{zhao2022hamiltonian}, we can bound 
    \begin{gather}
        \frac{\|\mathscr{M}\|_2}{\sqrt{d}}\leq\order{\frac{T_2t^{p+1}}{r^p}}.
    \end{gather}
    Therefore, we can prove the first statement according to Theorem~\ref{thm:2design}.

    In the case of a standard second-order Suzuki-Trotter formula, we can further prove the triangle bound which offers a tighter estimation of the error.
We first consider the case where we only have two sub-Hamiltonians $H=A+B$, and $\mathscr{U}=\mathscr{S}_2=\mathrm{e}^{\ii A\tau/2}\mathrm{e}^{\ii B\tau}\mathrm{e}^{\ii A\tau/2}$.
Therefore, the additive error is
\begin{align}
    \mathscr{A}=&\int_0^\tau \d\tau_1\int_0^{\tau_1}\d\tau_2\int_0^{\tau_2} \d\tau_3 e^{\ii(\tau-\tau_1)H}e^{\ii\tau_1A/2}\notag\\
    &\cdot \left( e^{\ii\tau_3B}\left[\ii B,\left[\ii B,\ii\frac{A}{2}\right]\right]e^{-\ii\tau_3B}+ e^{-\ii\tau_3A/2}\left[-\ii\frac{A}{2},\left[-\ii\frac{A}{2},-\ii B\right]\right]e^{\ii\tau_3A/2}\right) e^{\ii\tau_1B} e^{\ii\tau_1A/2}.
\end{align}
Noticing that all the adjunct matrix exponentials here are unitary matrices, we can calculate and bound the 2-norm of $\mathscr{M}$ by Cauchy-Schwarz inequality as
\begin{align}
    \Tr(\mathscr{M}\mathscr{M}^\dag)=\Tr(\mathscr{A}\mathscr{A}^\dag)\leq& \frac{\tau^{6}}{6^2}\left(\sqrt{\Tr(|[\ii B,[\ii B,\ii\frac{A}{2}]]|^2)}+\sqrt{\Tr(|[-\ii\frac{A}{2},[-\ii\frac{A}{2},-\ii B]]|^2)}\right)^2\notag\\
    =&\frac{\tau^{6}}{6^2}\left(\frac{1}{2}\sqrt{\Tr(|[B,[B,A]]|^2)}+\frac{1}{4}\sqrt{\Tr(|[A,[A,B]]|^2)}\right)^2.
\end{align}
Consequently, we have 
\begin{gather}
 \|\mathscr{M}\|_2\leq\frac{\tau^3}{12}\|[B,[B,A]]\|_2+ \frac{\tau^3}{24} \|[A,[A,B]]\|_2.
\end{gather}

As for a more general case where $H=\sum_{\gamma=1}^\Gamma H_\gamma$, we first define  
\begin{equation}
\mathscr{U}[\gamma_0]\coloneqq\mathrm{e}^{\ii H_{\gamma_0}\tau/2} \mathrm{e}^{\ii\sum_{\gamma=\gamma_0+1}^\Gamma H_{\gamma}\tau}\mathrm{e}^{\ii H_{\gamma_0}\tau/2},\quad \mathscr{U}_0[\gamma_0]\coloneqq\mathrm{e}^{\ii\sum_{\gamma=\gamma_0}^\Gamma H_{\gamma}\tau}.
\end{equation}
We then use the triangle inequality to decompose the additive error as
\begin{align}
    \|\mathscr{M}\|_2=\|\mathscr{A}\|_2=\|\mathscr{U}_0-\mathscr{U}\|_2\leq\sum_{\gamma=1}^\Gamma\|\mathscr{U}_0[\gamma]-\mathscr{U}[\gamma]\|_2.
\end{align}
Note that all $\{\mathscr{U}[\gamma]\}$ are second-order product formulas with two sub-Hamiltonians.
Therefore, we can bound the 2-norm according to the preceding calculation:
\begin{align}
    \|\mathscr{M}\|_2\leq\frac{\tau^3}{12}\sum_{\gamma_1=1}^\Gamma\left\|\left[\sum_{\gamma_2=\gamma_1+1}^\Gamma H_{\gamma_2},\left[\sum_{\gamma_3=\gamma_1+1}^\Gamma H_{\gamma_3},H_{\gamma_1}\right]\right]\right\|_2+\frac{\tau^3}{24}\sum_{\gamma_1=1}^\Gamma\left\|\left[H_{\gamma_1},\left[H_{\gamma_1},\sum_{\gamma_2=\gamma_1+1}^\Gamma H_{\gamma_2}\right]\right]\right\|_2.
\end{align}
The second statement is therefore proved according to Theorem~\ref{thm:2design}.
\end{proof}

Even when it might be challenging to sample states from a 2-design ensemble, we can get a weaker error analysis that still demonstrates error reductions compared to the previous analysis without observable knowledge.
In the following, we elaborate on this 1-design error bound.

\begin{theorem}[1-Design Error]
For a 1-design ensemble $\mu_1$ of quantum states and an observable $O$, we bound the average distance of simulation:
\begin{gather*}\label{eq:4norm}
     D(\mathscr{U}_0^r,\mathscr{U}^r)_{O,\mu_1}\le\frac{2r\|\mathscr{A}\|_4\|O\|_4}{\sqrt{d}},
     \end{gather*}
where $\mathscr{A}\coloneqq \mathscr{U}_0-\mathscr{U}$ is the additive error for each small step, $d=2^n$ is the dimension of the $n$-qubit Hilbert space, and $\|A\|_p=[\Tr(|A|^p)]^{1/p}$ denotes the Schatten $p$-norm. 
The variance of errors can also be bounded by
\begin{gather*}\label{eq:var}
     \text{Var}(\mathscr{U}_0^r,\mathscr{U}^r)_{O,\mu_1}\le\frac{4r^2\|\mathscr{A}\|^2_4\|O\|^2_4}{d}.
 \end{gather*}
\label{thm:ap-random}
\end{theorem}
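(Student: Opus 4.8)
The plan is to follow the same telescoping strategy as the proof of Theorem~\ref{thm:2design}, but to replace the second-moment evaluation (which genuinely requires a 2-design) by two elementary operator inequalities that only probe the first moment $\int_\psi\ket{\psi}\bra{\psi}\,d\mu_1=I/d$. First I would telescope the difference $\Tr[O((\mathscr{U}_0^\dag)^r\rho\mathscr{U}_0^r-(\mathscr{U}^\dag)^r\rho\mathscr{U}^r)]$, with $\rho=\ket{\psi}\bra{\psi}$, into $r$ single-step terms. After moving the surrounding ideal unitaries onto the observable by cyclicity of the trace and unitary invariance, the $k$-th term can be written as $\bra{\psi_k}W_k\ket{\psi_k}$, where $\ket{\psi_k}=(\mathscr{U}^\dag)^k\ket{\psi}$, the rotated observable is $\tilde O_k=\mathscr{U}_0^{r-1-k}O(\mathscr{U}_0^\dag)^{r-1-k}$, and $W_k=\mathscr{U}_0\tilde O_k\mathscr{U}_0^\dag-\mathscr{U}\tilde O_k\mathscr{U}^\dag$. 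Crucially, rotating a 1-design by a fixed unitary leaves it a 1-design, so each $\ket{\psi_k}$ is still effectively drawn from $\mu_1$; only $\|\tilde O_k\|_p=\|O\|_p$ will matter below.

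For the mean, the key step is the pointwise bound $|\bra{\psi}W_k\ket{\psi}|\le\bra{\psi}|W_k|\ket{\psi}$, which follows by diagonalizing the Hermitian $W_k$ and applying the triangle inequality to its eigenvalue expansion. This linearizes the integrand in $\rho$, so the 1-design identity gives $\int_\psi\bra{\psi_k}|W_k|\ket{\psi_k}\,d\mu_1=\Tr|W_k|/d=\|W_k\|_1/d$. I would then telescope once more, $W_k=\mathscr{A}\tilde O_k\mathscr{U}_0^\dag+\mathscr{U}\tilde O_k\mathscr{A}^\dag$ with $\mathscr{A}=\mathscr{U}_0-\mathscr{U}$, and bound each summand with the generalized Schatten--H\"older inequality using exponents $(4,4,2)$; for instance $\|\mathscr{A}\tilde O_k\mathscr{U}_0^\dag\|_1\le\|\mathscr{A}\|_4\,\|\tilde O_k\|_4\,\|\mathscr{U}_0^\dag\|_2=\sqrt d\,\|\mathscr{A}\|_4\|O\|_4$, since $\|\mathscr{U}_0^\dag\|_2=\sqrt d$ and $\|\tilde O_k\|_4=\|O\|_4$. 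This yields $\|W_k\|_1\le 2\sqrt d\,\|\mathscr{A}\|_4\|O\|_4$, hence a per-step bound $2\|\mathscr{A}\|_4\|O\|_4/\sqrt d$; summing over the $r$ steps gives the stated mean bound. Since $\mathscr{A}=-\mathscr{U}_0\mathscr{M}$, one may equivalently phrase the whole argument through $\mathscr{M}$.

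For the variance I would bound it by the raw second moment $V=\int_\psi(\sum_k|\bra{\psi_k}W_k\ket{\psi_k}|)^2\,d\mu_1$, apply $(\sum_{k=1}^r a_k)^2\le r\sum_k a_k^2$, and then invoke a second operator inequality: combining $|\bra{\psi}W_k\ket{\psi}|\le\bra{\psi}|W_k|\ket{\psi}$ with the operator Jensen (Cauchy--Schwarz) inequality $\bra{\psi}|W_k|\ket{\psi}^2\le\bra{\psi}W_k^\dag W_k\ket{\psi}$ again reduces the integrand to a quantity linear in $\rho$. The 1-design identity then gives $\int_\psi\bra{\psi_k}W_k^\dag W_k\ket{\psi_k}\,d\mu_1=\|W_k\|_2^2/d$, while the two-factor bound $\|\mathscr{A}\tilde O_k\|_2\le\|\mathscr{A}\|_4\|\tilde O_k\|_4$ gives $\|W_k\|_2\le 2\|\mathscr{A}\|_4\|O\|_4$. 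Assembling these, $V\le 4r^2\|\mathscr{A}\|_4^2\|O\|_4^2/d$, and $\mathrm{Var}=V-D^2\le V$ yields the variance bound.

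The main obstacle, and the reason the bound degrades from Schatten $2$-norms to $4$-norms, is precisely that a 1-design controls only $\mathbb{E}_\psi[\ket{\psi}\bra{\psi}]$ and not $\mathbb{E}_\psi[\ket{\psi}\bra{\psi}^{\otimes 2}]$. The entire proof therefore hinges on replacing every state-dependent absolute value or square by a quantity that is \emph{linear} in $\rho$ before integrating; the inequalities $|\bra{\psi}W\ket{\psi}|\le\bra{\psi}|W|\ket{\psi}$ and $\bra{\psi}|W|\ket{\psi}^2\le\bra{\psi}W^\dag W\ket{\psi}$ are exactly the devices that accomplish this, at the cost of an extra H\"older split that forces the $4$-norms. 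A secondary care point is verifying that each rotated ensemble $\{\ket{\psi_k}\}$ remains a 1-design, so that the single first-moment identity applies uniformly to all $r$ telescoped terms.
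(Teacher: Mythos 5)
Your proof is correct and achieves exactly the stated constants, but it follows a genuinely different route from the paper's. The paper telescopes into $2r$ \emph{one-sided} terms (replacing $\mathscr{U}_0^\dag$ by $\mathscr{U}^\dag$ one at a time on the right, then on the left), so that each term already contains a single factor of $\mathscr{A}$; it then bounds each term by the vector Cauchy--Schwarz inequality $|\braket{a|b}|\le\|a\|\,\|b\|$, exchanges the square root and the integral via Cauchy--Schwarz for integrals, applies the 1-design identity, and finally splits the resulting trace $\Tr(\mathscr{A}^\dag\mathscr{A}\,UO^\dag O U^\dag)$ with $|\Tr(AB)|\le\|A\|_2\|B\|_2$ to produce the 4-norms. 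You instead reuse the symmetric $r$-term telescoping of the 2-design proof (single-step conjugation differences $W_k=\mathscr{U}_0\tilde O_k\mathscr{U}_0^\dag-\mathscr{U}\tilde O_k\mathscr{U}^\dag$), and your linearization devices are the operator inequalities $|\bra{\psi}W_k\ket{\psi}|\le\bra{\psi}|W_k|\ket{\psi}$ and $\bra{\psi}|W_k|\ket{\psi}^2\le\bra{\psi}W_k^\dag W_k\ket{\psi}$, after which the 1-design identity applies directly; the 4-norms then emerge from the three-factor H\"older split $(4,4,2)$ of $\|W_k\|_1$ (mean) and the two-factor split of $\|W_k\|_2$ (variance). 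Your decomposition $W_k=\mathscr{A}\tilde O_k\mathscr{U}_0^\dag+\mathscr{U}\tilde O_k\mathscr{A}^\dag$ is algebraically exact, the rotated ensembles indeed remain 1-designs, and both bounds come out identical to the paper's ($2r\|\mathscr{A}\|_4\|O\|_4/\sqrt{d}$ and $4r^2\|\mathscr{A}\|_4^2\|O\|_4^2/d$). What your route buys is structural unity with the 2-design theorem (same telescoping, so the two proofs differ only in the moment identity available) and a transparent explanation of \emph{why} the norm degrades from Schatten 2 to Schatten 4 --- the need to make the integrand linear in $\ket{\psi}\bra{\psi}$ forces an extra H\"older factor; it also avoids the exchange of square root and integral. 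What the paper's route buys is that $\mathscr{A}$ appears explicitly in every telescoped term from the outset, so the mean and variance are handled by one uniform chain of Cauchy--Schwarz steps without introducing the intermediate operators $W_k$.
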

\noindent It is clear to see that $\mathscr{A}$ is analog to $\mathscr{M}$ up to a unitary.
Therefore, all Schatten norms on these two errors are the same.
This theorem claims a slightly weaker bound on the average error due to degradation of the randomness.
Nevertheless, we will see that the asymptotic scaling of this bound is nearly the same as Theorem~\ref{thm:2design}.
\begin{proof}

Following the definition of the average distance, we have
    \begin{align}
D(\mathscr{U}_0^r,\mathscr{U}^r)_{O,\mu_1}&=\int_{\psi} |\bra{\psi}\mathscr{U}_0^rO (\mathscr{U}_0^\dag)^r\ket{\psi}-\bra{\psi}\mathscr{U}^rO (\mathscr{U}^\dag)^r\ket{\psi}| d\mu_1(\psi)\notag\\
&\le  \int_{\psi} |\bra{\psi}\mathscr{U}_0^rO (\mathscr{U}_0^\dag)^r\ket{\psi}-\bra{\psi}\mathscr{U}_0^rO (\mathscr{U}^\dag)^r\ket{\psi}| d\mu_1(\psi)+
\int_{\psi} |\bra{\psi}\mathscr{U}_0^rO (\mathscr{U}^\dag)^r\ket{\psi}-\bra{\psi}\mathscr{U}^rO (\mathscr{U}^\dag)^r\ket{\psi} | d\mu_1(\psi)\notag\\
&\le \sum_{i=0}^{r-1} \int_{\psi} |\bra{\psi}\mathscr{U}_0^rO(\mathscr{U}_0^\dag)^{r-i}(\mathscr{U}^\dag)^i \ket{\psi}-\bra{\psi}\mathscr{U}_0^rO(\mathscr{U}_0^\dag)^{r-i-1}(\mathscr{U}^\dag)^{i+1} \ket{\psi}| d\mu_1(\psi)\notag\\
&\ \ \ +\sum_{i=0}^{r-1}\int_{\psi} |\bra{\psi}\mathscr{U}_0^{i+1}\mathscr{U}^{r-i-1}O(\mathscr{U}^\dag)^r \ket{\psi}-\bra{\psi}\mathscr{U}_0^{i}\mathscr{U}^{r-i}O(\mathscr{U}^\dag)^r \ket{\psi}| d\mu_1(\psi) \notag\\
&=  \sum_{i=0}^{r-1} \int_{\psi}  |\bra{\psi}\mathscr{U}_0^rO(\mathscr{U}_0^\dag)^{r-i-1}\mathscr{A}^\dag(\mathscr{U}^\dag)^i \ket{\psi}| +  |\bra{\psi}\mathscr{U}_0^i\mathscr{A}\mathscr{U}^{r-i-1} O(\mathscr{U}^\dag)^{r}\ket{\psi}|  d\mu_1(\psi) \notag\\
&\le \sum_{i=0}^{r-1}\int_{\psi} \|(\mathscr{U}_0^\dag)^r\ket{\psi}\|_2 \|O(\mathscr{U}_0^\dag)^{r-i-1}\mathscr{A}^\dag(\mathscr{U}^\dag)^i \ket{\psi}\|_2 + \|(\mathscr{U}^\dag)^{r}\ket{\psi}\|_2 \|O(\mathscr{U}^\dag)^{r-i-1}\mathscr{A}^\dag (\mathscr{U}_0^\dag)^{i} \ket{\psi}\|_2
 d\mu_1(\psi)\notag\\
&= \sum_{i=0}^{r-1} \int_{\psi} \sqrt{\bra{\psi}\mathscr{U}^i\mathscr{A} \mathscr{U}_0^{r-i-1}O^{\dagger}  O (\mathscr{U}_0^\dag)^{r-i-1}\mathscr{A}^\dag(\mathscr{U}^\dag)^i \ket{\psi}}\notag\\
&\ \ \ \ \ +\sqrt{\bra{\psi}\mathscr{U}_0^i\mathscr{A} \mathscr{U}^{r-i-1}O^{\dagger}  O (\mathscr{U}^\dag)^{r-i-1}\mathscr{A}^\dag(\mathscr{U}_0^\dag)^i \ket{\psi}}  d\mu_1(\psi)\notag \\
&\le \sum_{i=0}^{r-1} \sqrt{\int_{\psi} \bra{\psi}\mathscr{U}^i\mathscr{A} \mathscr{U}_0^{r-i-1}O^{\dagger}  O (\mathscr{U}_0^\dag)^{r-i-1}\mathscr{A}^\dag(\mathscr{U}^\dag)^i \ket{\psi} d\mu_1(\psi)}\notag\\
&\ \ \ \ \ +\sqrt{\int_{\psi} \bra{\psi}\mathscr{U}_0^i\mathscr{A} \mathscr{U}^{r-i-1}O^{\dagger}  O (\mathscr{U}^\dag)^{r-i-1}\mathscr{A}^\dag(\mathscr{U}_0^\dag)^i \ket{\psi} d\mu_1(\psi)}\notag\\
&=\sum_{i=0}^{r-1} \sqrt{\Tr(\mathscr{A}^{\dagger}\mathscr{A} \mathscr{U}_0^{r-i-1}O^{\dagger}  O (\mathscr{U}_0^\dag)^{r-i-1} )/d}+\sqrt{\Tr(\mathscr{A}^\dag \mathscr{A} \mathscr{U}^{r-i-1}O^{\dagger}  O (\mathscr{U}^\dag)^{r-i-1})/d}\notag\\
&\le 2r \qty(\Tr(\mathscr{A} \mathscr{A}^\dagger \mathscr{A} \mathscr{A}^\dagger )/d)^{\frac{1}{4}}
\qty(\Tr(O O^{\dagger} O O^{\dagger} )/d)^{\frac{1}{4}},
\end{align}
where $\mu_1(\psi)$ denotes the 1-design ensemble of states.
The fifth line is due to Cauchy-Schwarz inequality $$|\braket{a|b}|\le \sqrt{\braket{a|a}} \sqrt{\braket{b|b}}.$$
The seventh line is from the Cauchy-Schwarz inequality for integrals. 
The eighth line recruits the property of the 1-design for the single-copy integral
\begin{gather}
    \int_\psi \ket{\psi}\bra{\psi} d \mu_1(\psi)=\frac{I}{d}.
\end{gather}
The last line is because $|\Tr(AB)|\le \sqrt{\Tr(AA^{\dagger})}\sqrt{\Tr(BB^{\dagger})}.$
This completes our proof for the first statement.

As for the variance, we first calculate the second-order moment.
\begin{align}
    V(\mathscr{U}_0^r,\mathscr{U}^r)_{O,\mu_1}&=\int_{\psi} (|\bra{\psi}\mathscr{U}_0^rO (\mathscr{U}_0^\dag)^r\ket{\psi}-\bra{\psi}\mathscr{U}^rO (\mathscr{U}^\dag)^r\ket{\psi}|)^2d\mu_1(\psi)\notag\\
\le & \int_{\psi} (|\bra{\psi}\mathscr{U}_0^rO (\mathscr{U}_0^\dag)^r\ket{\psi}-\bra{\psi}\mathscr{U}_0^rO (\mathscr{U}^\dag)^r\ket{\psi}| + |\bra{\psi}\mathscr{U}_0^rO (\mathscr{U}^\dag)^r\ket{\psi}-\bra{\psi}\mathscr{U}^rO (\mathscr{U}^\dag)^r\ket{\psi} |)^2 d\mu_1(\psi)\notag\\
\le& \int_{\psi}\Big(\sum_{i=0}^{r-1}  |\bra{\psi}\mathscr{U}_0^rO(\mathscr{U}_0^\dag)^{r-i}(\mathscr{U}^\dag)^i \ket{\psi}-\bra{\psi}\mathscr{U}_0^rO(\mathscr{U}_0^\dag)^{r-i-1}(\mathscr{U}^\dag)^{i+1} \ket{\psi}| \notag\\
&\ \ \ +\sum_{i=0}^{r-1}|\bra{\psi}\mathscr{U}_0^{i+1}\mathscr{U}^{r-i-1}O(\mathscr{U}^\dag)^r \ket{\psi}-\bra{\psi}\mathscr{U}_0^{i}\mathscr{U}^{r-i}O(\mathscr{U}^\dag)^r \ket{\psi}|\Big)^2 d\mu_1(\psi) \notag\\
= & \int_{\psi}\left(\sum_{i=0}^{r-1} |\bra{\psi}\mathscr{U}_0^rO(\mathscr{U}_0^\dag)^{r-i-1}\mathscr{A}^\dag(\mathscr{U}^\dag)^i \ket{\psi}| +  |\bra{\psi}\mathscr{U}_0^i\mathscr{A}\mathscr{U}^{r-i-1} O(\mathscr{U}^\dag)^{r}\ket{\psi}| \right)^2 d\mu_1(\psi) \notag\\
\le& \int_{\psi} \left(\sum_{i=0}^{r-1}\|(\mathscr{U}_0^\dag)^r\ket{\psi}\|_2 \|O(\mathscr{U}_0^\dag)^{r-i-1}\mathscr{A}^\dag(\mathscr{U}^\dag)^i \ket{\psi}\|_2 + \|(\mathscr{U}^\dag)^{r}\ket{\psi}\|_2 \|O(\mathscr{U}^\dag)^{r-i-1}\mathscr{A}^\dag (\mathscr{U}_0^\dag)^{i} \ket{\psi}\|_2\right)^2
 d\mu_1(\psi)\notag\\
\le& \int_\psi 2r\cdot\left(\sum_{i=0}^{r-1} \|O(\mathscr{U}_0^\dag)^{r-i-1}\mathscr{A}^\dag(\mathscr{U}^\dag)^i \ket{\psi}\|_2^2 +\|O(\mathscr{U}^\dag)^{r-i-1}\mathscr{A}^\dag (\mathscr{U}_0^\dag)^{i} \ket{\psi}\|_2^2\right)d\mu_1(\psi)\notag\\
=& 2r\cdot \sum_{i=0}^{r-1} \int_{\psi} \bra{\psi}\mathscr{U}_0^i\mathscr{A} \mathscr{U}^{r-i-1}O^{\dagger}  O (\mathscr{U}^\dag)^{r-i-1}\mathscr{A}^\dag(\mathscr{U}_0^\dag)^i \ket{\psi}+\bra{\psi}\mathscr{U}_0^i\mathscr{A} \mathscr{U}^{r-i-1}O^{\dagger}  O (\mathscr{U}^\dag)^{r-i-1}\mathscr{A}^\dag(\mathscr{U}_0^\dag)^i \ket{\psi}  d\mu_1(\psi)\notag \\
=&2r\cdot \sum_{i=0}^{r-1} \Tr(\mathscr{A}^{\dagger}\mathscr{A} \mathscr{U}_0^{r-i-1}O^{\dagger}  O (\mathscr{U}_0^\dag)^{r-i-1} )/d+\Tr(\mathscr{A}^\dag \mathscr{A} \mathscr{U}^{r-i-1}O^{\dagger}  O (\mathscr{U}^\dag)^{r-i-1})/d\notag\\
\le& 4r^2 \qty(\Tr(\mathscr{A} \mathscr{A}^\dagger \mathscr{A} \mathscr{A}^\dagger )/d)^{\frac{1}{2}}
\qty(\Tr(O O^{\dagger} O O^{\dagger} )/d)^{\frac{1}{2}}.
\end{align}
This proof is very similar to that of the expected average distance.
The fifth line is due to Cauchy-Schwarz inequality.
The sixth line comes from the mean inequality chain. 
The eighth line recruits the 1-design property of the Haar integral.
The last line is because $|\Tr(AB)|\le \sqrt{\Tr(AA^{\dagger})}\sqrt{\Tr(BB^{\dagger})}.$

We then consider the variance of the distances:
\begin{gather}
    \text{Var}(\mathscr{U}_0^r,\mathscr{U}^r)_{O,\mu_1}=V(\mathscr{U}_0^r,\mathscr{U}^r)_{O,\mu_1}-D(\mathscr{U}_0^r,\mathscr{U}^r)_{O,\mu_1}^2\leq 4r^2 \qty(\Tr(\mathscr{A} \mathscr{A}^\dagger \mathscr{A} \mathscr{A}^\dagger )/d)^{\frac{1}{2}}
\qty(\Tr(O O^{\dagger} O O^{\dagger} )/d)^{\frac{1}{2}}.
\end{gather}

\end{proof}

\begin{corollary}[Product-Formula Average Error]\label{co:productrandom1}
    A standard $p$th-order Suzuki product formula $\mathscr{S}_p$ has an average distance $\order{T_4\|O\|_4t^{p+1}d^{-1/2}r^{-p})}$  with
\begin{gather*}
    T_4\coloneqq\sum_{\gamma_1,\dots,\gamma_{p+1}=1}^\Gamma \left\|[H_{\gamma_{p+1}},[H_{\gamma_p},\dots,[H_{\gamma_2},H_{\gamma_{1}}]]] \right\|_4,
\end{gather*}
and $H=\sum_{\gamma=1}^\Gamma H_\gamma$ is the decomposition.
For $p=2$ case, we have a triangle-bound as
\begin{gather}
    D(\mathscr{U}_0^r,\mathscr{U}^r)_{O,\mu_1}\leq\frac{t^3\|O\|_4}{6r^2\sqrt{d}}\sum_{\gamma_1=1}^\Gamma\left\|\left[\sum_{\gamma_2=\gamma_1+1}^\Gamma H_{\gamma_2},\left[\sum_{\gamma_3=\gamma_1+1}^\Gamma H_{\gamma_3},H_{\gamma_1}\right]\right]\right\|_4+\frac{t^3\|O\|_4}{12r^2\sqrt{d}}\sum_{\gamma_1=1}^\Gamma\left\|\left[H_{\gamma_1},\left[H_{\gamma_1},\sum_{\gamma_2=\gamma_1+1}^\Gamma H_{\gamma_2}\right]\right]\right\|_4.
\end{gather}
\end{corollary}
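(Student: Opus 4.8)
The plan is to mirror the proof of Corollary~\ref{co:productrandom2} for the 2-design case, but measure everything in the Schatten 4-norm and feed the result into the 1-design bound of Theorem~\ref{thm:ap-random} rather than the 2-design bound. The starting observation is that $\mathscr{A} = \mathscr{U}_0 - \mathscr{U} = -\mathscr{U}_0(\mathscr{U}_0^\dag\mathscr{U} - I) = -\mathscr{U}_0\mathscr{M}$, so $\|\mathscr{A}\|_4 = \|\mathscr{M}\|_4$ by unitary invariance of the Schatten norm. Hence it suffices to control the single-step additive error in the 4-norm and then invoke $D(\mathscr{U}_0^r,\mathscr{U}^r)_{O,\mu_1}\le 2r\|\mathscr{A}\|_4\|O\|_4/\sqrt{d}$ from Theorem~\ref{thm:ap-random}.

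For the general $p$th-order statement, I would establish the Schatten-4-norm analog of the nested-commutator error bound used in Corollary~\ref{co:productrandom2} (Theorem 8 of \cite{zhao2022hamiltonian}), namely $\|\mathscr{A}\|_4 = \order{T_4 (t/r)^{p+1}}$ with $T_4$ the commutator sum in the statement. The key point is that the integral (Taylor-remainder) representation of the additive error of a symmetric product formula expresses $\mathscr{A}$ as a finite sum of terms $W_1\,[H_{\gamma_{p+1}},[\dots,[H_{\gamma_2},H_{\gamma_1}]]]\,W_2$ integrated against bounded scalar kernels, where $W_1,W_2$ are unitary conjugating factors built from partial evolutions. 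Since the 4-norm is unitarily invariant and the triangle inequality passes through the integral, the conjugating unitaries drop out and each term contributes $\|[H_{\gamma_{p+1}},[\dots,[H_{\gamma_2},H_{\gamma_1}]]]\|_4$ times a kernel integral of order $(t/r)^{p+1}$. Summing over index tuples yields $T_4$, and substituting into the Theorem~\ref{thm:ap-random} bound (one factor of $r$ from the $2r$ prefactor absorbing one power of $t/r$, and the $\sqrt{d}$ in the denominator supplying the $d^{-1/2}$) produces the claimed $\order{T_4\|O\|_4 t^{p+1} d^{-1/2} r^{-p}}$.

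For the explicit $p=2$ triangle bound I would repeat the two-stage calculation from Corollary~\ref{co:productrandom2} verbatim but in the 4-norm. First, for the two-term case $H=A+B$ with $\mathscr{S}_2(\tau)=\mathrm{e}^{\ii A\tau/2}\mathrm{e}^{\ii B\tau}\mathrm{e}^{\ii A\tau/2}$, I use the same triple-integral representation of $\mathscr{A}$ whose integrand is a unitary conjugation of $[\ii B,[\ii B,\ii A/2]]+[-\ii A/2,[-\ii A/2,-\ii B]]$. Applying the triangle inequality for $\|\cdot\|_4$ to the integral and unitary invariance to discard the conjugating exponentials gives $\|\mathscr{A}\|_4\le \frac{\tau^3}{12}\|[B,[B,A]]\|_4+\frac{\tau^3}{24}\|[A,[A,B]]\|_4$. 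I then lift this to $H=\sum_{\gamma=1}^\Gamma H_\gamma$ through the telescoping estimate $\|\mathscr{A}\|_4\le\sum_\gamma\|\mathscr{U}_0[\gamma]-\mathscr{U}[\gamma]\|_4$, applying the two-term bound to each $\mathscr{U}[\gamma]$ with $A=H_\gamma$ and $B=\sum_{\gamma'>\gamma}H_{\gamma'}$. Finally, inserting $\tau=t/r$ and collecting the prefactors $2r(t/r)^3/12=t^3/(6r^2)$ and $2r(t/r)^3/24=t^3/(12r^2)$ into Theorem~\ref{thm:ap-random} reproduces exactly the stated triangle bound.

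The main obstacle is the general-$p$ step: verifying that the nested-commutator bound of \cite{zhao2022hamiltonian}, originally phrased for the 2-norm, transfers to the 4-norm. This is less a new calculation than a check that their derivation uses only unitary invariance, submultiplicativity, and the triangle inequality of the underlying norm—properties shared by every Schatten $p$-norm—so that the commutator-sum structure and the $\tau^{p+1}$ scaling carry over unchanged. Once this norm-agnosticism is confirmed, the remainder is a routine substitution into Theorem~\ref{thm:ap-random}, with the only bookkeeping being that $T_4$ carries no $1/\sqrt{d}$ factor (unlike $T_2$), which is precisely why the $d^{-1/2}$ appears explicitly in the final bound.
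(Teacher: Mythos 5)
Your proposal is correct and follows essentially the same route as the paper's proof: identify $\|\mathscr{A}\|_4=\|\mathscr{M}\|_4$ by unitary invariance, feed the single-step 4-norm into the 1-design bound $D(\mathscr{U}_0^r,\mathscr{U}^r)_{O,\mu_1}\le 2r\|\mathscr{A}\|_4\|O\|_4/\sqrt{d}$ of Theorem~\ref{thm:ap-random}, control $\|\mathscr{M}\|_4$ through the nested-commutator integral representation of~\cite{childs2021theory} (whose derivation indeed uses only unitary invariance and the triangle inequality, so it transfers verbatim to any Schatten norm), and for $p=2$ run the two-term triple-integral computation followed by the telescoping sum over $\gamma$ with the same prefactor bookkeeping $2r(t/r)^3/12=t^3/(6r^2)$ and $2r(t/r)^3/24=t^3/(12r^2)$. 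The only organizational difference is that the paper bounds $\Tr(\mathscr{M}\mathscr{M}^\dagger\mathscr{M}\mathscr{M}^\dagger)$ by applying Cauchy--Schwarz twice to the expanded quartic trace before taking the fourth root, whereas you apply the triangle inequality for $\|\cdot\|_4$ directly under the integral to discard the conjugating unitaries $R_{\vec{j}}$, $F_{\vec{j}}$ --- the same argument, organized slightly more directly.
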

\begin{proof}
    For the product-formula scenario, according to the definitions of additive errors $\mathscr{A}$ and multiplicative errors $\mathscr{M} $ of simulation, we have 
     \begin{gather}
         \|\mathscr{A}\|_4=\|\mathscr{M}\|_4.
     \end{gather}
Therefore, we can bound the distance by $\mathscr{M}$ instead.
\begin{gather}\label{eq:Mnorm}
    D(\mathscr{U}_0^r,\mathscr{U}^r)_{O,\mu_1}\leq\frac{2r\|\mathscr{M}\|_4\|O\|_4}{\sqrt{d}}
\end{gather}
We then follow a similar pathway to the proof of Theorem~8 in~\cite{zhao2022hamiltonian}:

We define the order of tuples $(\upsilon,\gamma)$, $\prec$ and $\preceq$, according to the lexicographic ordering.
According to Theorem 3 in Ref.~\cite{childs2021theory}, the multiplicative error $\mathscr{M}(\tau)$
can be expressed as
\begin{equation}
\mathscr{M}(\tau)=\mathrm{e}^{-\ii H\tau} \int_0^\tau \d\tau_1 \mathrm{e}^{\ii(\tau-\tau_1)H} \mathscr{S}_p(\tau_1) \mathscr{N}(\tau_1)=\int_0^\tau \d\tau_1 \mathrm{e}^{-\ii\tau_1H} \mathscr{S}_p(\tau_1) \mathscr{N}(\tau_1),
\end{equation}
where $\mathscr{S}_p$ is the $p$th-order Trotter formula we employed and
\begin{equation}
\begin{aligned}
 \mathscr{N}(\tau_1)= &\sum_{(\upsilon,\gamma)}
 \overrightarrow{\prod}_{(\upsilon',\gamma')\prec(\upsilon,\gamma)}
~ \mathrm{e}^{\tau_1 a_\upsilon H_{\pi_{\upsilon'}(\gamma')}} \left(a_\upsilon H_{\pi_\upsilon(\gamma)}\right)       \overleftarrow{\prod}_{(\upsilon',\gamma')\prec(\upsilon,\gamma)}
~ \mathrm{e}^{-\tau_1 a_\upsilon H_{\pi_{\upsilon'}(\gamma')}}  \\
&-  \overrightarrow{\prod}_{(\upsilon',\gamma')}
~ \mathrm{e}^{\tau_1 a_\upsilon H_{\pi_{\upsilon'}(\gamma')}} H \overleftarrow{\prod}_{(\upsilon',\gamma')}
~ \mathrm{e}^{-\tau_1 a_\upsilon H_{\pi_{\upsilon'}(\gamma')}},
\end{aligned}
\end{equation}
where $\mathscr{N}(\tau_1)=\mathcal O(\tau^p_1)$.
Here we define the vector $\vec{j}_{p+1}=(j_1,j_2,\dots,j_{p+1})$ with $p+1$ entries, $j_1,j_2,\dots,j_{p+1} \in \{(\upsilon,\gamma): \upsilon\in \{1,\dots,\Upsilon\},\gamma\in\{1,\dots,\Gamma\}\}$ and the corresponding nested commutators as
\begin{equation}
 N_{\vec{j}_{p+1}}=[H_{j_{p+1}},[H_{j_p},\dots,[H_{j_2},H_{j_{1}}]]].
\end{equation}
According to Theorem 5 in Ref.~\cite{childs2021theory}, we rewrite
 $\mathscr{N}(\tau_1)$ and $\mathscr{M}(\tau)$ as
\begin{equation}
\begin{aligned}
 \mathscr{N}(\tau_1)&= \sum_{i=1,2} \sum_{\vec{j}_{p+1}\in J_i} \int_0^{\tau_1}\d\tau_2       (\tau_1-\tau_2)^{q(\vec{j}_{p+1})-1}\tau_1^{p-q(\vec{j}_{p+1})} c_{\vec{j}_{p+1}} F_{\vec{j}_{p+1}}^{\dagger} N_{\vec{j}_{p+1}} F_{\vec{j}_{p+1}};\\
 \mathscr{M}(\tau)&=\int_0^\tau \d\tau_1 \int_0^{\tau_1}\d\tau_2
 \sum_{i=1,2} \sum_{\vec{j}_{p+1}\in J_i} (\tau_1-\tau_2)^{q(\vec{j}_{p+1})-1}\tau_1^{p-q(\vec{j}_{p+1})} c_{\vec{j}_{p+1}} R_{\vec{j}_{p+1}} N_{\vec{j}_{p+1}} F_{\vec{j}_{p+1}}.
 \end{aligned}
\end{equation}
Here $J_1$ and $J_2$ correspond to the first and second part in  $\mathscr{N}(\tau_1)$, respectively:
\begin{equation}
    \begin{aligned}
J_1&:=\{(j_1,j_2,\dots,j_{p+1}) : j_1\preceq j_2\preceq\dots \preceq j_{p+1}\},    \\
J_2&:=\{(j_1,j_2,\dots,j_{p+1}) : j_1\preceq j_2\dots \preceq j_{p}, \, j_{p+1}=(1,\gamma_{p+1})\} .
    \end{aligned}
\end{equation}
The $c_{\vec{j}_{p+1}}$ are real coefficients that are functions of $\vec{j}_{p+1}$ and $p$, satisfying $|c_{\vec{j}_{p+1}}|\le 1$. The function
$q(\vec{j}_{p+1})$
is the maximal number $q$ satisfying $j_1=j_2=\dots=j_q$. The $F_{\vec{j}_{p+1}}$ are unitary, constructed as products of terms of the form $\mathrm{e}^{\ii H_\gamma\tau_1}$, and $R_{\vec{j}_{p+1}}:= \mathrm{e}^{\ii(\tau-\tau_1)H} \mathscr{S}_p(\tau_1)F_{\vec{j}_{p+1}}^{\dagger}$.

Consequently, we find
\begin{align}
& \mathscr{M}(\tau)\mathscr{M}^{\dag}(\tau)\mathscr{M}(\tau)\mathscr{M}^{\dag}(\tau)\leq16\times\int_0^\tau \d\tau_{1,1} \d\tau_{2,1}\d\tau_{3,1}\d\tau_{4,1}\int_0^{\tau_{1,1}}\d\tau_{1,2}
\int_0^{\tau_{2,1}}\d\tau_{2,2}\int_0^{\tau_{3,1}}\d\tau_{3,2}\int_0^{\tau_{4,1}}\d\tau_{4,2}
   \notag\\
 &\ \sum_{\substack{\vec{j}_{1,p+1},\vec{j}_{2,p+1}\\\vec{j}_{3,p+1},\vec{j}_{4,p+1}}}
 \prod_{s=1}^4\left((\tau_{s,1}-\tau_{s,2})^{q(\vec{j}_{s,p+1})-1}\tau_{s,1}^{p-q(\vec{j}_{s,p+1})} c_{\vec{j}_{s,p+1}}\right)
 \notag\\
&\times R_{\vec{j}_{1,p+1}} N_{\vec{j}_{1,p+1}} F_{\vec{j}_{1,p+1}}
 F_{\vec{j}_{2,p+1}}^{\dagger} N_{\vec{j}_{2,p+1}}^{\dagger}R_{\vec{j}_{2,p+1}}^{\dagger}R_{\vec{j}_{3,p+1}} N_{\vec{j}_{3,p+1}} F_{\vec{j}_{3,p+1}}
 F_{\vec{j}_{4,p+1}}^{\dagger} N_{\vec{j}_{4,p+1}}^{\dagger}R_{\vec{j}_{4,p+1}}^{\dagger}.
 \end{align}
Iteratively utilizing the Cauchy-Schwarz inequality twice and noting that $R$ and $F$ are unitaries, we can bound the trace as
 \begin{align}\label{Eq:ENF}
    &\left |\Tr( R_{\vec{j}_{1,p+1}} N_{\vec{j}_{1,p+1}} F_{\vec{j}_{1,p+1}}
 F_{\vec{j}_{2,p+1}}^{\dagger} N_{\vec{j}_{2,p+1}}^{\dagger}R_{\vec{j}_{2,p+1}}^{\dagger}R_{\vec{j}_{3,p+1}} N_{\vec{j}_{3,p+1}} F_{\vec{j}_{3,p+1}}
 F_{\vec{j}_{4,p+1}}^{\dagger} N_{\vec{j}_{4,p+1}}^{\dagger}R_{\vec{j}_{4,p+1}}^{\dagger})\right|\notag\\
 &\ \le \|N_{\vec{j}_{1,p+1}}\|_4\|N_{\vec{j}_{2,p+1}}\|_4\|N_{\vec{j}_{3,p+1}}\|_4\|N_{\vec{j}_{4,p+1}}\|_4.
 \end{align}
We therefore have the upper bound
\begin{align}\label{Eq:mainENF}
&\Tr(\mathscr{M}(\tau)\mathscr{M}^{\dag}(\tau)\mathscr{M}(\tau)\mathscr{M}^{\dag}(\tau))\le 4\tau^{4p+4}\Upsilon^{4p+4}
\left[\sum_{\gamma_1,\dots,\gamma_{p+1}=1}^\Gamma\sqrt{\Tr(|[H_{\gamma_{p+1}},[H_{\gamma_p},\dots,[H_{\gamma_2},H_{\gamma_{1}}]]]|^4)}\right]^4
\end{align}
where $\Upsilon$ shows because each $H_\gamma$ could appear in $\Upsilon$ different stages, so there are $\Upsilon^{p+1}$ possibilities in total for each $[H_{\gamma_1},[H_{\gamma_2},\dots,[H_{\gamma_p},H_{\gamma_{p+1}}]]]$.
Our assumption that $p$ is a constant implies $\Upsilon^{4p+4}=\mathcal O(1)$.
Therefore, we can bound the average distance by adopting $\tau=t/r$,
\begin{equation}
 D(\mathscr{U}_0^r,\mathscr{U}^r)_{O,\mu_1}=\order{\frac{T_4\|O\|_4t^{p+1}}{\sqrt[4]{d}r^p}}.
\end{equation}

In the case of a standard second-order Suzuki-Trotter formula, we can further prove the triangle bound which offers a tighter estimation of the error.
We first consider the case where we only have two sub-Hamiltonians $H=A+B$, and $\mathscr{U}=\mathscr{S}_2=\mathrm{e}^{\ii A\tau/2}\mathrm{e}^{\ii B\tau}\mathrm{e}^{\ii A\tau/2}$.
Therefore, the additive error is
\begin{align}
    \mathscr{A}=&\int_0^\tau \d\tau_1\int_0^{\tau_1}\d\tau_2\int_0^{\tau_2} \d\tau_3 e^{\ii(\tau-\tau_1)H}e^{\ii\tau_1A/2}\notag\\
    &\cdot \left( e^{\ii\tau_3B}\left[\ii B,\left[\ii B,\ii\frac{A}{2}\right]\right]e^{-\ii\tau_3B}+ e^{-\ii\tau_3A/2}\left[-\ii\frac{A}{2},\left[-\ii\frac{A}{2},-\ii B\right]\right]e^{\ii\tau_3A/2}\right) e^{\ii\tau_1B} e^{\ii\tau_1A/2}.
\end{align}
Noticing that all the adjunct matrix exponentials here are unitary matrices, we can calculate and bound the 4-norm of $\mathscr{A}$ by Cauchy-Schwarz inequality as
\begin{align}
    \Tr(\mathscr{A}\mathscr{A}^\dag \mathscr{A}\mathscr{A}^\dag)\leq& \frac{\tau^{12}}{6^4}\left(\sqrt[4]{\Tr(|[\ii B,[\ii B,\ii\frac{A}{2}]]|^4)}+\sqrt[4]{\Tr(|[-\ii\frac{A}{2},[-\ii\frac{A}{2},-\ii B]]|^4)}\right)^4\notag\\
    =&\frac{\tau^{12}}{6^4}\left(\frac{1}{2}\sqrt[4]{\Tr(|[B,[B,A]]|^4)}+\frac{1}{4}\sqrt[4]{\Tr(|[A,[A,B]]|^4)}\right)^4.
\end{align}
Consequently, we have 
\begin{gather}
 \|\mathscr{A}\|_4\leq\frac{\tau^3}{12}\|[B,[B,A]]\|_4+ \frac{\tau^3}{24} \|[A,[A,B]]\|_4.
\end{gather}

As for a more general case where $H=\sum_{\gamma=1}^\Gamma H_\gamma$, we first define  
\begin{equation}
\mathscr{U}[\gamma_0]\coloneqq\mathrm{e}^{\ii H_{\gamma_0}\tau/2} \mathrm{e}^{\ii\sum_{\gamma=\gamma_0+1}^\Gamma H_{\gamma}\tau}\mathrm{e}^{\ii H_{\gamma_0}\tau/2},\quad \mathscr{U}_0[\gamma_0]\coloneqq\mathrm{e}^{\ii\sum_{\gamma=\gamma_0}^\Gamma H_{\gamma}\tau}.
\end{equation}
We then use the triangle inequality to decompose the additive error as
\begin{align}
    \|\mathscr{A}\|_4=\|\mathscr{U}_0-\mathscr{U}\|_4\leq\sum_{\gamma=1}^\Gamma\|\mathscr{U}_0[\gamma]-\mathscr{U}[\gamma]\|_4.
\end{align}
Note that all $\{\mathscr{U}[\gamma]\}$ are second-order product formulas with two sub-Hamiltonians.
Therefore, we can bound the 4-norm according to the preceding calculation:
\begin{align}
    \|\mathscr{A}\|_4\leq\frac{\tau^3}{12}\sum_{\gamma_1=1}^\Gamma\left\|\left[\sum_{\gamma_2=\gamma_1+1}^\Gamma H_{\gamma_2},\left[\sum_{\gamma_3=\gamma_1+1}^\Gamma H_{\gamma_3},H_{\gamma_1}\right]\right]\right\|_4+\frac{\tau^3}{24}\sum_{\gamma_1=1}^\Gamma\left\|\left[H_{\gamma_1},\left[H_{\gamma_1},\sum_{\gamma_2=\gamma_1+1}^\Gamma H_{\gamma_2}\right]\right]\right\|_4.
\end{align}
\end{proof}

We further show that the normalized Schatten 4-norms hold advantageous scaling compared to the operator norm for operators that consist of a summation over multiple Pauli operators.
\begin{proposition}\label{prop:summation}
    Consider a summation observable $O=\sum_{m=1}^MO_m$ where summands are Pauli operators with constant norms $\|O_m\|=\Theta(1)$ for all $m$.
    The normalized 4-norm of $O$ satisfies
    \begin{gather*}
        \Omega(\sqrt{M})\leq\frac{\|O\|_4}{\sqrt[4]{d}}\leq \order{\sqrt[4]{M^3}}.
    \end{gather*}
    By further assuming all summands $\{O_m\}$ are geometrically local with constant diameters in their supports.
    The normalized 4-norm can be bounded by
    \begin{gather*}
        \frac{\|O\|_4}{\sqrt[4]{d}}= \order{\sqrt{M}}.
    \end{gather*}
\end{proposition}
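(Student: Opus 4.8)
The plan is to reduce everything to the single quantity $\Tr(O^4)$. Since each summand is a Hermitian Pauli term $O_m=c_mP_m$ with $|c_m|=\Theta(1)$ and (as in any Pauli decomposition) the strings $P_m$ are distinct, the operator $O$ is Hermitian and $\|O\|_4^4=\Tr(|O|^4)=\Tr(O^4)$. Expanding gives $\Tr(O^4)=\sum_{m_1,m_2,m_3,m_4}c_{m_1}c_{m_2}c_{m_3}c_{m_4}\Tr(P_{m_1}P_{m_2}P_{m_3}P_{m_4})$. A product of Pauli strings is a phase times a Pauli string, so its trace vanishes unless $P_{m_1}P_{m_2}P_{m_3}P_{m_4}\propto I$, in which case $|\Tr(\cdots)|=d$ and the whole summand has magnitude $\Theta(d)$. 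Writing $v_m\in\mathbb{F}_2^{2n}$ for the symplectic vector of $P_m$, the condition $P_{m_1}P_{m_2}P_{m_3}P_{m_4}\propto I$ is equivalent to $v_{m_1}+v_{m_2}+v_{m_3}+v_{m_4}=0$. Thus, up to a $\Theta(d)$ factor, $\|O\|_4^4$ equals the number of index $4$-tuples whose symplectic vectors sum to zero, and the whole proposition becomes a counting problem.

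For the lower bound I would bypass the counting and instead control $\|O\|_2$. Because the $P_m$ are distinct, $\Tr(P_mP_{m'})=d\,\delta_{m,m'}$, so $\|O\|_2^2=\Tr(O^2)=d\sum_m c_m^2=\Theta(Md)$. A Cauchy-Schwarz estimate on the singular values, $\|O\|_2^2\le\sqrt{d}\,\|O\|_4^2$, then yields $\|O\|_4^4\ge\|O\|_2^4/d=\Theta(M^2d)$, hence $\|O\|_4/d^{1/4}\ge\Omega(\sqrt{M})$. For the general upper bound I would bound the number of valid $4$-tuples directly: once $m_1,m_2,m_3$ are chosen freely ($M^3$ ways), the constraint forces $v_{m_4}=v_{m_1}+v_{m_2}+v_{m_3}$, and distinctness leaves at most one admissible $m_4$. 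Hence there are at most $M^3$ valid tuples, each contributing $\order{d}$, so $\|O\|_4^4=\order{M^3d}$ and $\|O\|_4/d^{1/4}=\order{\sqrt[4]{M^3}}$.

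The geometrically local refinement is the crux, and the main obstacle is to sharpen the tuple count from $M^3$ to $\order{M^2}$. My plan is to classify the valid $4$-tuples by the support-overlap graph $\mathcal{G}$ on the four positions, placing an edge whenever two supports intersect. Along the connected components of $\mathcal{G}$ the supports are disjoint, so the vectors in each component must sum to zero separately; a singleton component is impossible (one nonzero $v_m$ cannot cancel) and a component of size three is impossible (it would force a leftover singleton), so $\mathcal{G}$ is either two components of size two or one connected component of size four. A size-two component summing to zero forces its two indices equal, so the first case is exactly the ``two equal pairs'' index pattern $(a,a,b,b)$ up to ordering, of which there are $\order{M^2}$. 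For the connected case, constant support diameter means overlapping supports have centers within a constant distance, so a spanning tree of $\mathcal{G}$ confines all four supports to a region of constant size around $P_{m_1}$; the number of distinct Pauli strings supported in any constant-size region is at most $4^{\order{1}}=\order{1}$, so after fixing $m_1$ ($M$ ways) each of $m_2,m_3,m_4$ has only $\order{1}$ choices, giving $\order{M}$ connected tuples.

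Summing the two cases gives $\order{M^2}$ valid tuples, hence $\|O\|_4^4=\order{M^2d}$ and $\|O\|_4/d^{1/4}=\order{\sqrt{M}}$. The two delicate points I would want to nail down are the parity/component argument that rules out odd-sized clusters, and the observation that distinctness together with constant support diameter automatically caps the number of Pauli strings in any bounded region — this is precisely what turns connectivity of $\mathcal{G}$ into a genuinely constant branching factor and drives the improvement from $M^3$ to $M^2$.
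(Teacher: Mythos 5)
Your proof is correct, and it reaches all three bounds through the same underlying Pauli-algebra counting as the paper, but organized differently enough that the comparison is worth recording. The paper never expands $\Tr(O^4)$ over 4-tuples: it writes the square $OO^\dagger=\sum_\alpha c_\alpha P_\alpha$, uses $\|O\|_4^4/d=\sum_\alpha c_\alpha^2$, and bounds the number of distinct Paulis appearing in the square (at most $M^2$) together with each coefficient ($\order{M}$ in general, and $\order{1}$ for non-identity $P_\alpha$ under geometric locality, via the same overlap argument that gives you a constant branching factor); its lower bound comes from the identity coefficient $c_I=\Tr(O^2)/d=\Theta(M)$. Your two departures both check out. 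The Cauchy--Schwarz route $\|O\|_2^2\le\sqrt{d}\,\|O\|_4^2$ is mathematically the same inequality as the paper's identity-coefficient step (both amount to $\Tr(O^4)\ge\Tr(O^2)^2/d$), but it has the virtue of making explicit why sign cancellations among tuples, which prevent the identification of $\Tr(O^4)$ with $\Theta(d)$ times the tuple count from being exact, cannot spoil the lower bound; the paper leaves this point silent. Your overlap-graph classification in the local case (pair components force the $(a,a,b,b)$ pattern, giving $\order{M^2}$ tuples; a connected 4-cluster is confined to a constant-size region, giving $\order{M}$ tuples) is a more transparent bookkeeping of the paper's claim that each non-identity Pauli in the square receives only $\order{1}$ contributing pairs, and it makes visible that locality is only needed to tame the connected tuples, since the $\order{M^2}$ term is forced by the paired tuples alone. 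One caveat applies equally to both proofs: the proposition implicitly requires the $M$ Pauli strings to be distinct and non-identity — your symplectic injectivity and your singleton-component exclusion use exactly this, and without distinctness $O=MZ_1$ violates the upper bound — so stating that hypothesis explicitly, as you do, is the right call.
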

\begin{proof}
    Note that all Pauli operators except the identity are traceless and only products of two identical Pauli operators equal to the identity.
    Therefore, the trace of quartic of $O$ is equal to the summation over squares of coefficients in $OO^\dag$.
    We then consider the square, $OO^\dag=\sum_{m,m'=1}^MO_mO_{m'}^\dag$, which contains $M^2$ terms. 
    There must be at least $M$ distinct Pauli operators in this square.
    For this case, the 4-norm achieves the upper bound as each distinct Pauli operators in the square has the norm $\order{M}$.
    Therefore, 
    \begin{align}
        \frac{\|O\|_4}{\sqrt[4]{d}}=\sqrt[4]{\Tr(OO^\dag OO^\dag)/d}=\sqrt[4]{\sum_{m=1}^M\order{M^2}\Tr(I)/d}=\order{\sqrt[4]{M^3}}.
    \end{align}
    The lower bound can be reached by assuming all the resulting Pauli operators in the square are distinct except for the identity term, and we get
    \begin{align}
        \frac{\|O\|_4}{\sqrt[4]{d}}=\sqrt[4]{\Tr(OO^\dag OO^\dag)/d}\geq\sqrt[4]{((M^2-M)\cdot\Omega(1)+\Omega(M^2))\Tr(I)/d}=\Omega(\sqrt{M}).
    \end{align}   

    For the geometrically local operator, we similarly start the analysis from the square $OO^\dag$.
    It is clear to see that an arbitrary Pauli operator $P_\alpha$ (except for the identity) in $OO^\dag$ is also local with a constant size support.
    Suppose there exists a pair $(i,j)$ such that $O_iO_j\propto P_\alpha$.
    At least one of $O_i$ and $O_j$ intersects with $P_\alpha$, and we assume that is $O_i$ without loss of generality.
    Consequently, $S(O_j)$ must either also intersect with $P_\alpha$ or be a subset of $S(O_i)$.
    In either case, there are only $\order{1}$ choices of $O_i$ and $O_j$.
    Therefore, the coefficient associated to this $P_\alpha$ is bounded constantly.
    The resulting normalized 4-norm is
    \begin{gather}
        \frac{\|O\|_4}{\sqrt[4]{d}}\leq\sqrt[4]{M^2\cdot\order{1}\Tr(I)/d}=\order{\sqrt{M}}.
    \end{gather}
\end{proof}

\section{Applications}\label{sec:append-appli}
In this section, we meticulously examine the proposed analyses based on observables in simulating some renowned Hamiltonian models in quantum many-body physics, such as the nearest-neighbor (NN) lattice Hamiltonians and the power-law Hamiltonians. 
We provide detailed algorithm settings for simulating either a single local observable or a summation of observables under both Hamiltonians. 
We demonstrate that our proposals generally offer superior error scalings for both cases compared to the worst-case estimations presented in~\cite{childs2019nearly,childs2021theory}.

\subsection{Nearest-Neighbor Hamiltonians}\label{sec:NNH}
The typical Hamiltonian for an $n$-qubit $D$-dimensional lattice $\Lambda$ with nearest-neighbor (NN) interactions can be written as
\begin{gather}
    H=\sum_{(i,j)\in\Lambda}H_{i,j},\ \text{with }\|H_{i,j}\|_1\leq1\ \forall\,(i,j)\in\Lambda,
\end{gather}
where $(i,j)\in\Lambda$ means that $(i,j)$ is an edge in lattice $\Lambda$.
For an arbitrary local observable $O$ with support $S$ that can be bounded by a constant size convex hull, the edge sets of $\Lambda$ regarding $S$ and $H$ are 
\begin{gather}
    E_0^S=S,\ \ E_{k}^S=\{j\in\Lambda\,|\,j\notin E_{k-1}^S,\ \exists\,(i,j)\in\Lambda\ \text{s.t.}\ i\in E_{k-1}^S\},\ \text{  }k=1,2,\cdots.
\end{gather}
The corresponding interactive decomposition is 
\begin{gather}
    H_0^S=\sum_{\substack{(i,j)\in \Lambda\\i,j\in S}}H_{i,j},\ H_1^S=\sum_{\substack{(i,j)\in\Lambda\\i\in S,\,j\in E_1^S}}H_{i,j},\  H_{k}^S=\sum_{\substack{(i,j)\in\Lambda\\i,j\in E_{k-1}^S}}H_{i,j}+\sum_{\substack{(i,j)\in\Lambda\\i\in E_{k-1}^S,\,j\in E_{k}^S}}H_{i,j}.
\end{gather}
Since every qubit in $\Lambda$ only interacts with its $D$ neighbors, the 1-norm of $H_k^S$  scales with the cardinality of $E_{k-1}^S$.
\begin{gather}
    \|H_k^S\|_1=\order{|E_{k-1}^S|}=\order{k^{D-1}}.
\end{gather}

For the scenario of simulating a single observable, we employ the interactive decomposition and the even-odd permutation of the labels. The resulting $p$th-order Algorithm~\ref{alg:ap-rpf} yields an error given by Theorem~\ref{thm:ap-single}
\begin{gather}\label{eq:NN_local}
\epsilon=\|\mathrm{e}^{\ii Ht}O\mathrm{e}^{-\ii Ht}-UOU^\dag\|=\order{\frac{\|O\|t^{p+1}}{r^{p-D}}}.
\end{gather}
The corresponding step number for a fixed $\epsilon$ is given by $r=\order{\|O\|^{1/(p-D)}t^{(p+1)/(p-D)}\epsilon^{-1/(p-D)}}$.
Algorithm~\ref{alg:ap-rpf} exclusively implements relevant unitaries within the light cone. Consequently, the gate complexity of this simulation amounts to $\order{r^{D}\cdot r}$.

In the case of a summation of multiple local observables $O=\sum_{m=1}^MO_m$, the configuration depends on the regrouping of $H$.
As described in Sec.~\ref{sec:append-multi}, the regrouping maintains the initial representation $\sum_{(i,j)\in\Lambda}H_{i,j}$ since all Pauli terms are two-local and not covered by one another.
Consequently, the graph corresponds to the lattice $\Lambda$.
To color all edges, we partition them into $D$ axes.
Edges along each axis can be assigned with two colors according to their parities.
For example, in a two-dimensional lattice, edges are either vertical or horizontal, allowing us to color them with two colors for each axis, totaling four colors for this lattice.
The edge sets and interactive decomposition for each summand follow the same approach as the preceding analysis.
According to Theorem~\ref{thm:ap-multiple}, $p$th-order implementation of Algorithm~\ref{alg:ap-mpf} based on this configuration can achieve the error bound:
\begin{gather}\label{eq:NN_global}
    \epsilon=\|\mathrm{e}^{\ii Ht}O\mathrm{e}^{-\ii Ht}-UOU^\dag\|=\order{\frac{\sum_{m=1}^M\|O_m\|t^{p+1}}{r^{p-D}}}.
\end{gather}
For a fixed $\epsilon$, this bound implies $r=\order{(\sum_{m=1}^M\|O_m\|)^{1/(p-D)}t^{(p+1)/(p-D)}\epsilon^{-1/(p-D)}}$.
Since we do not delete those irrelevant unitaries and keep a faithful Suzuki-Trotter formula, the gate count is $\order{nr}$.

Particularly, we compare this gate count with that of trivially implementing Algorithm~\ref{alg:ap-rpf} for each summand observable separately for the magnetization $\sum_{j=1}^nZ_j$.
Suppose we allow for an error of $\epsilon_j$ for simulating $Z_j$.
According to the symmetry among summands, we have all $\epsilon_j$ the same.
According to Eq.~\eqref{eq:NN_local}, we have
\begin{gather}
    r_j=\order{\frac{t^{(p+1)/(p-D)}}{\epsilon_j^{1/(p-D)}}}=\order{\frac{n^{1/(p-D)}t^{(p+1)/(p-D)}}{\epsilon^{1/(p-D)}}}, \ \forall j\in[n].
\end{gather}
Therefore, all $r_j$'s scale the same as $r$ from Eq.~\eqref{eq:NN_global}, and we uniformly denote them as $r_0$.
The overall gate count for simulating summands separately is $\order{nr_0^{D+1}}$, and Algorithm~\ref{alg:ap-mpf} incurs much milder overheads.

As for the random-input simulation, we consider the benefits of the product formula in simulating a summation of observables as in Corollary~\ref{co:productrandom2}.
We start from the advantages of the 2-norm of the nested commutator of this NN Hamiltonian.
Here we use the similar decomposition with the previous chromatic decomposition that relies on the edge coloring.
\begin{proposition}\label{prop:T4}
Consider a $D$-dimensional $n$-qubit nearest-neighbor Hamiltonian $H$. 
The nested commutator $T_2$ from Corollary~\ref{co:productrandom2} is scaling as $T_2=\order{\sqrt{n}}$.
\end{proposition}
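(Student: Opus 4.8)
The plan is to use the chromatic decomposition $H=\sum_{\gamma=1}^{\Gamma}H_\gamma$ of the preceding subsection, where for a $D$-dimensional NN lattice the number of color classes is the constant $\Gamma=2D=\order{1}$ and each $H_\gamma=\sum_{e}h_e$ is a sum of $\order{n}$ mutually commuting two-local terms $h_e=H_{i,j}$ with $\|h_e\|=\order{1}$ and $|S(h_e)|=2$. The computation hinges on the identity, following from the Hilbert--Schmidt orthogonality $\Tr(P_\alpha P_\beta)=d\,\delta_{\alpha\beta}$, that for any operator $A=\sum_\alpha c_\alpha P_\alpha$ one has $\tfrac1{\sqrt d}\|A\|_2=\big(\sum_\alpha|c_\alpha|^2\big)^{1/2}$; that is, the normalized Schatten $2$-norm is exactly the $\ell_2$-norm of the Pauli-coefficient vector of $A$. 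The feature that distinguishes this from the operator-norm analysis is that I will keep the $2$-norm intact and exploit this orthogonality rather than applying the triangle inequality term by term.

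First I would expand each nested commutator $C_{\vec\gamma}\coloneqq[H_{\gamma_{p+1}},[\cdots,[H_{\gamma_2},H_{\gamma_1}]]]$ into $\sum_{\vec e}[h_{e_{p+1}},[\cdots,[h_{e_2},h_{e_1}]]]$, a sum over tuples of edges with $e_k$ of color $\gamma_k$. A summand is nonzero only when the tuple is \emph{connected}: $e_2$ must share a qubit with $e_1$, $e_3$ with $S(h_{e_1})\cup S(h_{e_2})$, and so on, since disjoint local terms commute. Using the bounded degree of the lattice and that $p$ is a constant, there are $\order{1}$ ways to extend any fixed $e_1$ to a connected tuple, so the number of nonvanishing clusters is $\order{n}$ (one cluster anchored at each of the $\order{n}$ edges, with $\order{1}$ continuations). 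Each such cluster term is supported on $\order{1}$ qubits and, being a nested commutator of $p+1$ terms of constant norm, has operator norm $\order{1}$; hence it carries $\order{1}$ distinct Pauli operators each with coefficient $\order{1}$.

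Next I would read off the Pauli-coefficient vector of $C_{\vec\gamma}=\sum_\alpha c_\alpha P_\alpha$. Because every cluster term is local, each $P_\alpha$ occurring in $C_{\vec\gamma}$ is supported on $\order{1}$ qubits, and it can receive contributions only from the $\order{1}$ clusters whose bounded support meets $S(P_\alpha)$; therefore $|c_\alpha|=\order{1}$. Moreover the total number of Pauli operators generated by the $\order{n}$ clusters is $\order{n}$, so there are at most $\order{n}$ nonzero coefficients. Combining, $\sum_\alpha|c_\alpha|^2=\order{n}$, whence $\tfrac1{\sqrt d}\|C_{\vec\gamma}\|_2=\order{\sqrt n}$. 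Finally, summing over the $\Gamma^{p+1}=\order{1}$ color tuples $\vec\gamma$ in the definition of $T_2$ yields $T_2=\order{\sqrt n}$.

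The main obstacle is precisely the temptation to bound $T_2$ by the triangle inequality across the $\order{n}$ clusters, which would only give $T_2=\order{n}$ and lose the advertised speed-up. The essential point is that extensivity supplies $\order{n}$ local Pauli terms, each of $\order{1}$ weight, that are \emph{almost orthogonal} in the Hilbert--Schmidt sense (each Pauli label repeated only $\order{1}$ times); retaining the Schatten $2$-norm and invoking Pauli orthogonality converts this $\order{n}$ count into an $\order{\sqrt n}$ norm. The remaining bookkeeping is to verify the $\order{1}$-repetition bound for each Pauli label and the $\order{1}$-continuation bound for connected clusters, both of which follow from the bounded geometric degree of the lattice together with the constancy of $p$ and $\Gamma$.
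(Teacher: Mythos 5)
Your proposal is correct and follows essentially the same route as the paper's proof: the chromatic ($2D$-color) decomposition, expansion of each nested commutator into connected clusters anchored at the $\order{n}$ lattice edges with $\order{1}$ continuations per anchor, and Pauli--orthogonality converting $\order{n}$ local terms with $\order{1}$ coefficients into a normalized $2$-norm of $\order{\sqrt{n}}$, summed over the $\order{1}$ color tuples. Your explicit verification that each Pauli label is hit by only $\order{1}$ clusters is a slightly more careful rendering of the paper's statement that every resulting local operator covers its own anchor edge, but it is the same argument.
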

\begin{proof}
In this $D$-dimensional NN Hamiltonian $H$, there are altogether $\order{n}$ elementary interactions.
According to the edge coloring, we can decompose them into $2D$ sub-Hamiltonians, each of which consists of $\order{n}$ elementary interactions.

Consider a $p$-layer nested commutator,
$[H_{\gamma_{p+1}},[H_{\gamma_p},\dots,[H_{\gamma_2},H_{\gamma_{1}}]]]$.
We first narrow our focus on one nearest-neighbor interaction $H_{i,j}$ in $H_{\gamma_1}$ with $(i,j)\in\Lambda$.
To make the first commutator nonzero, only those interactions overlapping with $\{i,j\}$ in $H_{\gamma_2}$ can contribute to this commutator, and there are only $\order{1}$ choices since $D$ is some constant.
Moreover, the first commutator generates an operator with support on at most three neighboring qubits.
Repeating this analysis for all the nested commutators, we can find that there are only $\order{1}$ choices for each of $\{H_{\gamma_2},\cdots,H_{\gamma_{p+1}}\}$ to make commutator nonzero and that all nonzero commutators are some geometrically local operators of which the support covers $\{i,j\}$.

Enumerating $(i,j)$ in $H_{\gamma_1}$, the nested commutator generates $\order{n}$ geometrically local operators as $p$ and $D$ are constants.
Since all resulting operators must contain their own corresponding $\{i,j\}$'s, the overall Pauli decomposition of this nested commutator consists of $\order{n}$ Pauli terms each of which has coefficient $\order{1}$.
Consequently, the 2-norm is smaller than $\order{\sqrt{n}}$
Summing over all possible $\{\gamma_1,\cdots,\gamma_{p+1}\}$, $T_2$ scales as $\order{\sqrt{n}}$.

\end{proof}

Based on this, we consider the average error of simulating observables consisting of a summation of Pauli operators with even coefficients, $O=\sum_{m=1}^MO_m$.
We can easily calculate that $\|O\|_2/\sqrt{d}=\order{\sqrt{M}\max_m\|O_m\|}$.
According to Corollary~\ref{co:productrandom2} and~\ref{prop:T4}, the average distance of simulating by the $p$th product formula of the NN Hamiltonian is 
\begin{gather}
    D(\mathscr{U}_0^r,\mathscr{U}^r)_{O,\mu_1}=\order{\frac{\max_m\|O_m\|\sqrt{Mn}t^{p+1}}{r^p}}.
\end{gather}
For instance, given the observable is specifically chosen as the correlation function $O=\frac{1}{n-1}\sum_{j=1}^{n-1}P_jP_{j+1}$, the average distance is size-independent as
\begin{gather}
    D(\mathscr{U}_0^r,\mathscr{U}^r)_{O,\mu_1}=\order{\frac{t^{p+1}}{r^p}}.
\end{gather}

\subsection{Power-Law Hamiltonians}\label{sec:powerlaw}
A commonly studied family of Hamiltonians is the power-law model, where interaction intensities decay polynomially with distance.
Mathematically, it is represented as follows:
\begin{gather}
    H=\sum_{i,j\in\Lambda}H_{i,j},\text{  with }\|H_{i,j}\|\leq
    \begin{cases}
        1& i=j\\
        \order{1/\text{d}(i,j)^\alpha}& i\neq j
    \end{cases},
\end{gather}
where $\text{d}(i,j)$ denotes the geometric distance between qubits $i$ and $j$ on the lattice $\Lambda$.
Nevertheless, this Hamiltonian permits all-to-all interactions, leading to unlimited rapid expansion of light cones from arbitrary supports. 
To apply the support analysis for the short-time simulation discussed in previous sections, truncation of this Hamiltonian to short-range interactions is necessary.
To validate truncation, we further impose that $\alpha>2D$.

For the simulation of a single local observable $O$ with support $S$, the lattice is divided into inner and outer parts based on a parameter $d_0>0$.
Specifically,
\begin{gather}
    \Lambda_{in}\coloneqq\{j\in\Lambda\,|\,\text{d}(j,S)\leq(r\Upsilon+1)d_0\},\ \text{and }\Lambda_{out}\coloneqq\Lambda\backslash\Lambda_{in},
\end{gather}
where $r$ and $\Upsilon$ are steps and stages in the product formula, and $\text{d}(j,S)=\min_{i\in S}d(j,i)$.
We explicitly describe the truncated Hamiltonian as follows,
\begin{gather}
    H_{\text{lc}}\coloneqq H-\sum_{\substack{i\in\Lambda_{in},j\in\Lambda\\ \text{d}(i,j)>d_0}}H_{i,j}.
\end{gather}
The truncation error is bounded by accounting for the number of qubits in $\Lambda_{in}$ and all long-range effects, given by
\begin{gather}\label{eq:ap-lc}
    \|\mathrm{e}^{\ii Ht}-\mathrm{e}^{\ii H_{\text{lc}}t}\|\leq\|H-H_{\text{lc}}\|t\leq\order{\frac{(r\Upsilon+1)^Dd_0^Dt}{d_0^{\alpha-D}}}=\order{\frac{r^Dt}{d_0^{\alpha-2D}}}.
\end{gather}
This is based on the premise that the overall long-range interactions on a single qubit for any positive distance $x$ can be bounded as
\begin{gather}
    \sum_{j:\text{d}(i,j)\geq x}\|H_{i,j}\|=\order{\frac{1}{x^{\alpha-D}}}.
\end{gather}
The edge-set division in this scenario is 
\begin{gather}
    E_0^S=S,\ \ E_{r\Upsilon+2}^S=\Lambda_{out},\ \ \text{For }k=1,\cdots,r\Upsilon+1: E_k^S=\{j\in\Lambda\,|\,(k-1)d_0<\text{d}(S,j)\leq kd_0\}.
\end{gather}
The interactive decomposition is determined correspondingly.
In most cases, the 1-norm of each sub-Hamiltonian, $\|H_{i,j}\|_1$, scales similarly to its operator norm.
Therefore, we calculate the scaling by counting the number of qubits in each edge set, and $ h_k=\order{k^{D-1}d_0^D}$.
The overall errors from truncation and simulation by a $p$th-order formula in Algorithm~\ref{alg:ap-rpf} are bounded by
\begin{align}\label{eq:psb}
    \epsilon=\|\mathrm{e}^{\ii Ht}O\mathrm{e}^{-\ii Ht}-U_{\text{lc}}OU_{\text{lc}}^\dag\|&\leq\|\mathrm{e}^{\ii Ht}O\mathrm{e}^{-\ii Ht}-\mathrm{e}^{\ii H_{\text{lc}}t}O\mathrm{e}^{-\ii H_{\text{lc}}t}\|+\|\mathrm{e}^{\ii H_{\text{lc}}t}O\mathrm{e}^{-\ii H_{\text{lc}}t}-U_{\text{lc}}OU_{\text{lc}}^\dag\|\notag\\
    &\leq\order{\frac{r^D\|O\|t}{d_0^{\alpha-2D}}+\frac{\|O\|d_0^Dt^{p+1}}{r^{p-D}}},
\end{align}
where the second inequality comes from Theorem~\ref{thm:ap-single}.
To balance the two terms in Eq.~\eqref{eq:psb}, we find the minimizer $d_0=\order{\left(\frac{r}{t}\right)^{\frac{p}{\alpha-D}}}$, with minimum error given by
\begin{gather}
    \epsilon_{\text{min}}=\order{\|O\|\frac{t^{p+1-\frac{pD}{\alpha-D}}}{r^{p-D-\frac{pD}{\alpha-D}}}}.
\end{gather}

When simulating a summation of multiple local observables, $O=\sum_{m=1}^MO_m$, we adopt a general truncation that subtracts all long-range interactions beyond a certain distant $d_0$ from $H$,
\begin{gather}
    H_{\text{trc}}\coloneqq H-\sum_{\substack{i,j\in\Lambda\\\text{d}(i,j)>d_0}}H_{i,j}.
\end{gather}

To bound the truncation error, we prove a lemma for an arbitrary local observable.
\begin{lemma}
    For an arbitrary local observable $O_m$, the truncation error can be bounded by
    \begin{gather}
        \left\|\mathrm{e}^{\ii Ht}O_m\mathrm{e}^{-\ii Ht}-\mathrm{e}^{\ii H_{\text{trc}}t}O_m\mathrm{e}^{-\ii H_{\text{trc}}t}\right\|=\order{\frac{(t^{D+1}+t\log^D{n})\|O_m\|}{d_0^{\alpha-D}}},
    \end{gather}
    for short-time $t=o(\sqrt[D]{n})$.
\end{lemma}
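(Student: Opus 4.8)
The plan is to reduce the Heisenberg-picture truncation error to a time integral of the commutator between the removed long-range tail $V := H - H_{\text{trc}} = \sum_{(i,j):\,\mathrm{d}(i,j)>d_0}H_{i,j}$ and the evolved observable, and then to control that commutator using the approximate locality of $O_m(\tau)$. Concretely, I would introduce the interpolating operator $G(s) := \mathrm{e}^{\ii H_{\text{trc}}s}\,\mathrm{e}^{\ii H(t-s)}O_m\,\mathrm{e}^{-\ii H(t-s)}\,\mathrm{e}^{-\ii H_{\text{trc}}s}$, which satisfies $G(0)=\mathrm{e}^{\ii Ht}O_m\mathrm{e}^{-\ii Ht}$ and $G(t)=\mathrm{e}^{\ii H_{\text{trc}}t}O_m\mathrm{e}^{-\ii H_{\text{trc}}t}$. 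Since $\mathrm{e}^{\ii H_{\text{trc}}s}$ commutes with $H_{\text{trc}}$, differentiating gives $G'(s)=-\ii\,\mathrm{e}^{\ii H_{\text{trc}}s}[V,O_m(t-s)]\mathrm{e}^{-\ii H_{\text{trc}}s}$ with $O_m(\tau):=\mathrm{e}^{\ii H\tau}O_m\mathrm{e}^{-\ii H\tau}$, so that
\[
\mathrm{e}^{\ii Ht}O_m\mathrm{e}^{-\ii Ht}-\mathrm{e}^{\ii H_{\text{trc}}t}O_m\mathrm{e}^{-\ii H_{\text{trc}}t}=\ii\int_0^t \mathrm{e}^{\ii H_{\text{trc}}s}\,[V,O_m(t-s)]\,\mathrm{e}^{-\ii H_{\text{trc}}s}\,\d s.
\]
Because the conjugating factors are unitary, taking operator norms yields $\|\mathrm{e}^{\ii Ht}O_m\mathrm{e}^{-\ii Ht}-\mathrm{e}^{\ii H_{\text{trc}}t}O_m\mathrm{e}^{-\ii H_{\text{trc}}t}\|\le\int_0^t\|[V,O_m(\tau)]\|\,\d\tau$, and then $\|[V,O_m(\tau)]\|\le\sum_{(i,j):\,\mathrm{d}(i,j)>d_0}\|[H_{i,j},O_m(\tau)]\|$ by the triangle inequality. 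This isolates the entire problem as a light-cone estimate for each removed term.

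Next I would split the removed terms according to how close their endpoints lie to $S(O_m)$ relative to the effective light cone of $O_m(\tau)$. For a term $H_{i,j}$ with an endpoint inside the (approximately linear) light cone of radius $\order{\tau}$, I would bound the commutator crudely by $2\|H_{i,j}\|\|O_m\|$; summing the far endpoint gives $\sum_{j:\,\mathrm{d}(i,j)>d_0}\|H_{i,j}\|=\order{d_0^{-(\alpha-D)}}$ per in-cone anchor site, while the number of anchor sites at time $\tau$ is $\order{\tau^{D}}$. Hence this near-field piece contributes $\int_0^t\order{\tau^D}\,\order{d_0^{-(\alpha-D)}}\,\|O_m\|\,\d\tau=\order{t^{D+1}\|O_m\|/d_0^{\alpha-D}}$, the first term of the claim. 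For terms with both endpoints outside the light cone I would invoke a power-law Lieb--Robinson bound (of the locality type used in the cited truncation analyses) to control $\|[H_{i,j},O_m(\tau)]\|$ by the small algebraic tail of $O_m(\tau)$ at distance $\mathrm{d}(\{i,j\},S(O_m))$; carrying out the resulting $D$-dimensional spatial sum over the $\order{n}$ lattice sites produces only logarithmic corrections, giving the $\order{t\log^D n\,\|O_m\|/d_0^{\alpha-D}}$ contribution. Adding the two regions yields the stated bound.

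The main obstacle is the far-field estimate: one must quantify how little of $O_m(\tau)$ has leaked outside its light cone under genuinely long-range dynamics and then perform the spatial sum tightly enough to extract exactly a $\log^D n$ factor rather than a polynomial-in-$n$ factor. This is precisely where the hypotheses are used: $\alpha>2D$ makes the nested power-law hop sums convergent up to logarithmic boundary corrections, and the short-time window $t=o(n^{1/D})$ keeps the light-cone volume $\order{\tau^D}$ strictly sublinear in $n$ and guarantees the light cone has not saturated the lattice, so neither the near-field counting nor the far-field cutoff degenerates. Once the Lieb--Robinson input is fixed, the in-cone volume count and the final time integrals are routine.
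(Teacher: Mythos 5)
Your interpolation identity is algebraically correct, but it differs from the paper's in a way that breaks the rest of the argument. The paper interpolates via $\mathrm{e}^{\ii H(t-s)}\mathrm{e}^{\ii H_{\text{trc}}s}O_m\mathrm{e}^{-\ii H_{\text{trc}}s}\mathrm{e}^{-\ii H(t-s)}$, so the commutator it must bound is $[H-H_{\text{trc}},\,\mathrm{e}^{\ii H_{\text{trc}}s}O_m\mathrm{e}^{-\ii H_{\text{trc}}s}]$: the observable is evolved under the \emph{truncated} Hamiltonian. That choice is the whole point of the proof. Since $H_{\text{trc}}$ has range at most $d_0$ and satisfies $\sum_{j:\mathrm{d}(i,j)\le d_0}\|H_{i,j}\|\,\mathrm{e}^{\mu\,\mathrm{d}(i,j)}\le C$ with $\mu=1/d_0$ and $C=\order{1}$, the standard short-range (Hastings) Lieb--Robinson bound with \emph{exponential} tails applies, and the tail $\exp(-\mu d_1+Cs+\log n)$ is exactly what produces the $\log^D n$ correction once one chooses $d_1=\order{d_0t+d_0\log n}$. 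Your interpolant $G(s)=\mathrm{e}^{\ii H_{\text{trc}}s}\mathrm{e}^{\ii H(t-s)}O_m\mathrm{e}^{-\ii H(t-s)}\mathrm{e}^{-\ii H_{\text{trc}}s}$ instead leaves the \emph{full} long-range evolution inside the commutator, $[V,O_m(t-s)]$, so everything hinges on Lieb--Robinson bounds for the power-law Hamiltonian itself. Under the only standing hypothesis $\alpha>2D$, this is where the argument fails: (i) your near-field count assumes a linear light cone of radius $\order{\tau}$ and volume $\order{\tau^D}$ for the untruncated dynamics, but linear light cones for power-law interactions are only available for $\alpha>2D+1$; for $2D<\alpha\le 2D+1$ the known light cones are super-linear, so the anchor-site count is unjustified on part of the allowed parameter range; (ii) your far-field step asserts that summing power-law Lieb--Robinson tails over $\order{n}$ sites yields only a $\log^D n$ factor, but polynomially decaying tails do not produce logarithms --- the logarithm is a signature of exponential tails. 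Indeed, with a tail decaying like $R^{-(\alpha-2D)}$ the shell sum $\sum_R R^{D-1}\cdot R^{-(\alpha-2D)}$ diverges unless $\alpha>3D$. You correctly identify the far-field estimate as the crux, but you assert it rather than prove it, and with the tools you invoke it is not true as stated.

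A secondary point: the exponent you aim for, $d_0^{\alpha-D}$, is what the lemma as printed says, but the paper's own proof (and its use in Eq.~\eqref{eq:trunc} and in the summation bound immediately after the lemma) actually yields $d_0^{\alpha-2D}$. The reason is that the Lieb--Robinson velocity of the truncated Hamiltonian scales as $C/\mu=\order{d_0}$, so the light-cone volume is $d_1^D=\order{d_0^D(t^D+\log^D n)}$, and this extra $d_0^D$ converts the per-site tail $d_0^{-(\alpha-D)}$ into $d_0^{-(\alpha-2D)}$; the $\alpha-D$ in the statement appears to be a typo. Your bookkeeping reproduces $\alpha-D$ only because you implicitly assumed a velocity-$\order{1}$ light cone for the full long-range dynamics, which, as noted above, is not available under $\alpha>2D$ alone.
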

\begin{proof}
    \begin{align}\label{eq:Lieb}
        \|\mathrm{e}^{\ii Ht}O_m\mathrm{e}^{-\ii Ht}-\mathrm{e}^{\ii H_{\text{trc}}t}O_m\mathrm{e}^{-\ii H_{\text{trc}}t}\|=&\left\|\int_{0}^tds\partial_s\left(\mathrm{e}^{\ii H(t-s)}\mathrm{e}^{\ii H_{\text{trc}}s}O_m\mathrm{e}^{-\ii H_{\text{trc}}s}\mathrm{e}^{-\ii H(t-s)}\right)\right\|\notag\\
        \leq&\int_0^tds\|[H-H_{\text{trc}},\mathrm{e}^{\ii H_{\text{trc}}s}O_m\mathrm{e}^{-\ii H_{\text{trc}}s}]\|.
    \end{align}
    
    Note that $H_{\text{trc}}$ is a two-local short-range Hamiltonian satisfying
    \begin{gather}
        \sum_{j:\text{d}(j,i)\leq d_0}\|H_{i,j}\|\exp{\mu \text{d}(j,i)}\leq C
    \end{gather}
    for all qubit $i$ in the lattice $\Lambda$ with $\mu=1/d_0$ and $C=\order{1}$.
    Therefore, the Lieb-Robinson bound in~\cite{hastings2010locality} is applicable for the truncated interaction $H_{\text{trc}}$.

    To bound the commutator in Eq.~\eqref{eq:Lieb}, we divided the truncated terms into two parts: the internal part of terms within $d_1$-distance to $S(O_m)$, and the outside part of all other terms. 
    According to the locality of $O_m$, we have
    \begin{align}
        \|[H-H_{\text{trc}},\mathrm{e}^{\ii H_{\text{trc}}s}O_m\mathrm{e}^{-\ii H_{\text{trc}}s}]\|=\order{\frac{\|O_m\|d_1^D}{d_0^{\alpha-D}}+\exp{-\mu d_1+Cs+\log{n}}}.
    \end{align}
    Choosing $d_1=\order{d_0t+d_0\log{n}}$, the first term is dominant, and we can bound Eq.~\eqref{eq:Lieb} by
    \begin{gather*}
        \|\mathrm{e}^{\ii Ht}O_m\mathrm{e}^{-\ii Ht}-\mathrm{e}^{\ii H_{\text{trc}}t}O_m\mathrm{e}^{-\ii H_{\text{trc}}t}\|=\order{\frac{(t^{D+1}+t\log^D{n})\|O_m\|}{d_0^{\alpha-2D}}}.
    \end{gather*}
\end{proof}
Consequently, we can bound the overall truncation error from all summands.
\begin{align}
    \|\mathrm{e}^{\ii Ht}O\mathrm{e}^{-\ii Ht}-\mathrm{e}^{\ii H_{\text{trc}}t}O\mathrm{e}^{-\ii H_{\text{trc}}t}\|\leq\sum_{m=1}^M\|\mathrm{e}^{\ii Ht}O_m\mathrm{e}^{-\ii Ht}-\mathrm{e}^{\ii H_{\text{trc}}t}O_m\mathrm{e}^{-\ii H_{\text{trc}}t}\|=\order{\frac{(t^{D+1}+t\log^D{n})\sum_{m=1}^M\|O_m\|}{d_0^{\alpha-2D}}}.
\end{align}

To handle the summation case, we attempt to regroup $H_{\text{trc}}$ as 
in Sec.~\ref{sec:append-multi}, which maintains the two-local representation.
However, the regrouping requires a significant number of colors for edge coloring in its corresponding interaction graph.
To address this challenge, we propose a method to drastically reduce the number of colors while maintaining the mild scaling of sizes of edge sets.

We begin by partitioning $\Lambda$ into disjoint $D$-dimensional cubes $\{C_i\}$ with a side length $d_0$.
Each cube has $3^D-1$ neighbors within a distance less than $d_0$.
Our new regrouping strategy for $H_{\text{trc}}$ ensures that each sub-Hamiltonian supports on $C_i\cup C_j$, where $C_i$ and $C_j$ are neighboring cubes.
With this regrouping, the corresponding hyperedges can be colored by $3^D-1$ colors.

While the cube-based regrouping is inconsistent with the standard approach outlined in Sec.~\ref{sec:append-multi} as no Pauli terms in $H_{\text{trc}}$ locate exactly on supports of sub-Hamiltonians, we address this discrepancy by introducing "illusory" Pauli terms on those supports with vanishing coefficients. 
Consequently, Lemma~\ref{lm:ap-global_support} remains applicable as an upper bound on the expanding support of an arbitrary $O_m$ during simulation by Algorithm~\ref{alg:ap-mpf} with the new regrouping and coloring permutation.
As a result, we continue to consider the first $r\Upsilon(3^D-2)+2$ edge sets for $O_m$, albeit with edge sets differing from those under Pauli regrouping.
For any constant-size support $S$, it can be demonstrated that the cube-based edge set $E_k^S$ covers $\order{k^{D-1}d_0^D}$ qubits, a scale identical to that of the initial edge sets.
Correspondingly, the 1-norm of $H_k^S$ under the cube-based regrouping is linear with the qubit numbers, $h_k=\order{k^{D-1}d_0^D}$.
Therefore, we obtain the total simulation errors of a $p$th-order implementation of Algorithm~\ref{alg:ap-mpf} as follows:
\begin{align}
    \epsilon=\|\mathrm{e}^{\ii Ht}O\mathrm{e}^{-\ii Ht}-U_{\text{trc}}OU_{\text{trc}}^\dag\|&\leq\|\mathrm{e}^{\ii Ht}O\mathrm{e}^{-\ii Ht}-\mathrm{e}^{\ii H_{\text{trc}}t}O\mathrm{e}^{-\ii H_{\text{trc}}t}\|+\|\mathrm{e}^{\ii H_{\text{trc}}t}O\mathrm{e}^{-\ii H_{\text{trc}}t}-U_{\text{trc}}OU_{\text{trc}}^\dag\|\notag\\
    &\leq\order{\frac{(t^{D+1}+t\log^D{n})\sum_{m=1}^M\|O_m\|}{d_0^{\alpha-2D}}+\frac{\sum_{m=1}^M\|O_m\|d_0^Dt^{p+1}}{r^{p-D}}}.
\end{align}
We have the minimizer $d_0=\order{r^{(p-D)/(\alpha-D)}t^{(D-p)/(\alpha-D)}+\text{polylog}(n)}$.
The minimum error is
\begin{gather}
    \epsilon_{\text{min}}=\tilde{\mathcal{O}}\left(\sum_{m=1}^M\|O_m\|\frac{t^{\frac{p(\alpha-2D)+\alpha-D+D^2}{\alpha-D}}}{r^{\frac{(p-D)(\alpha-2D)}{\alpha-D}}}\right).
\end{gather}
The $\tilde{\mathcal{O}}$ omits the poly-log terms since in most cases we have $t\ll\text{polylog}(n)$.

\begin{figure*}[tb]
    \centering
    \subfloat[]{\includegraphics[width=.48\linewidth]{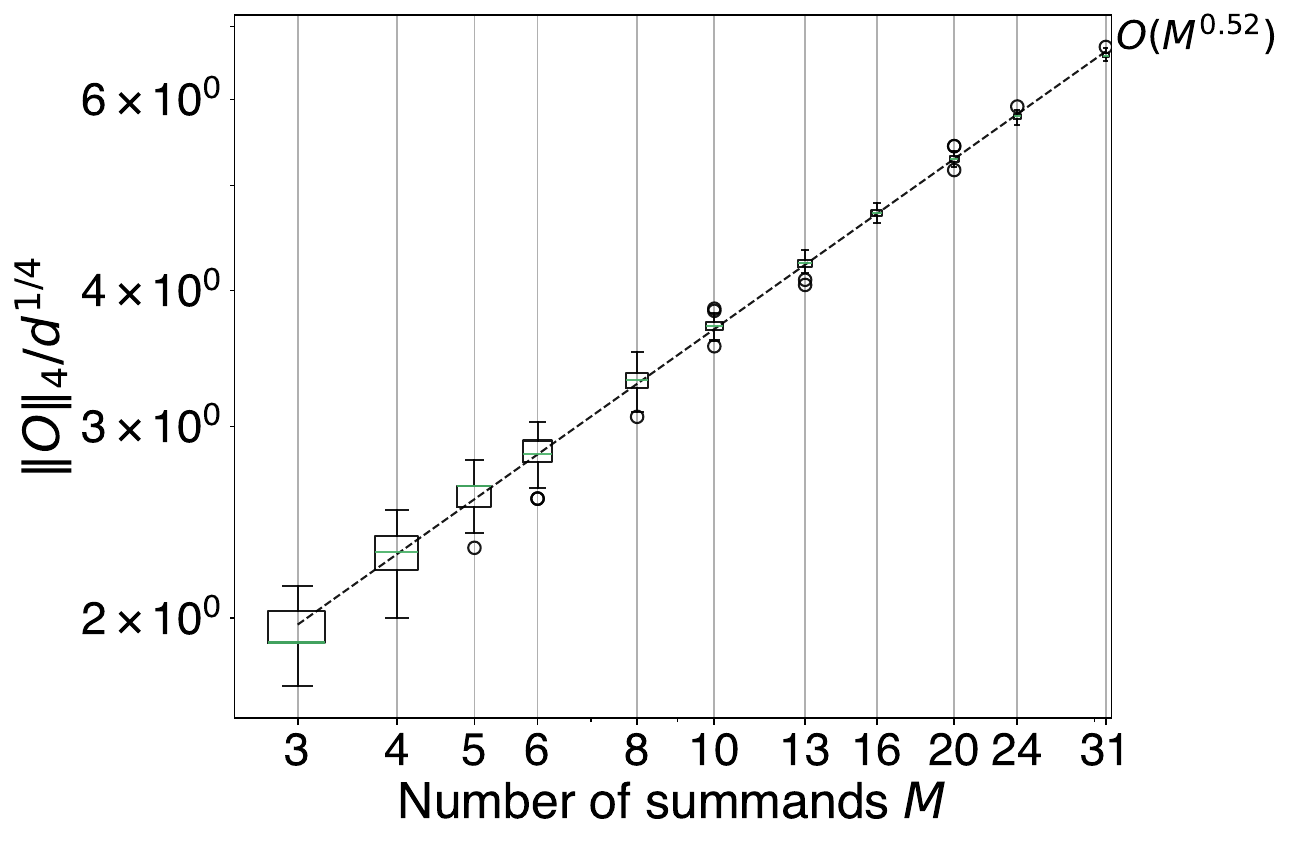}}
    \subfloat[]{\includegraphics[width=.48\linewidth]{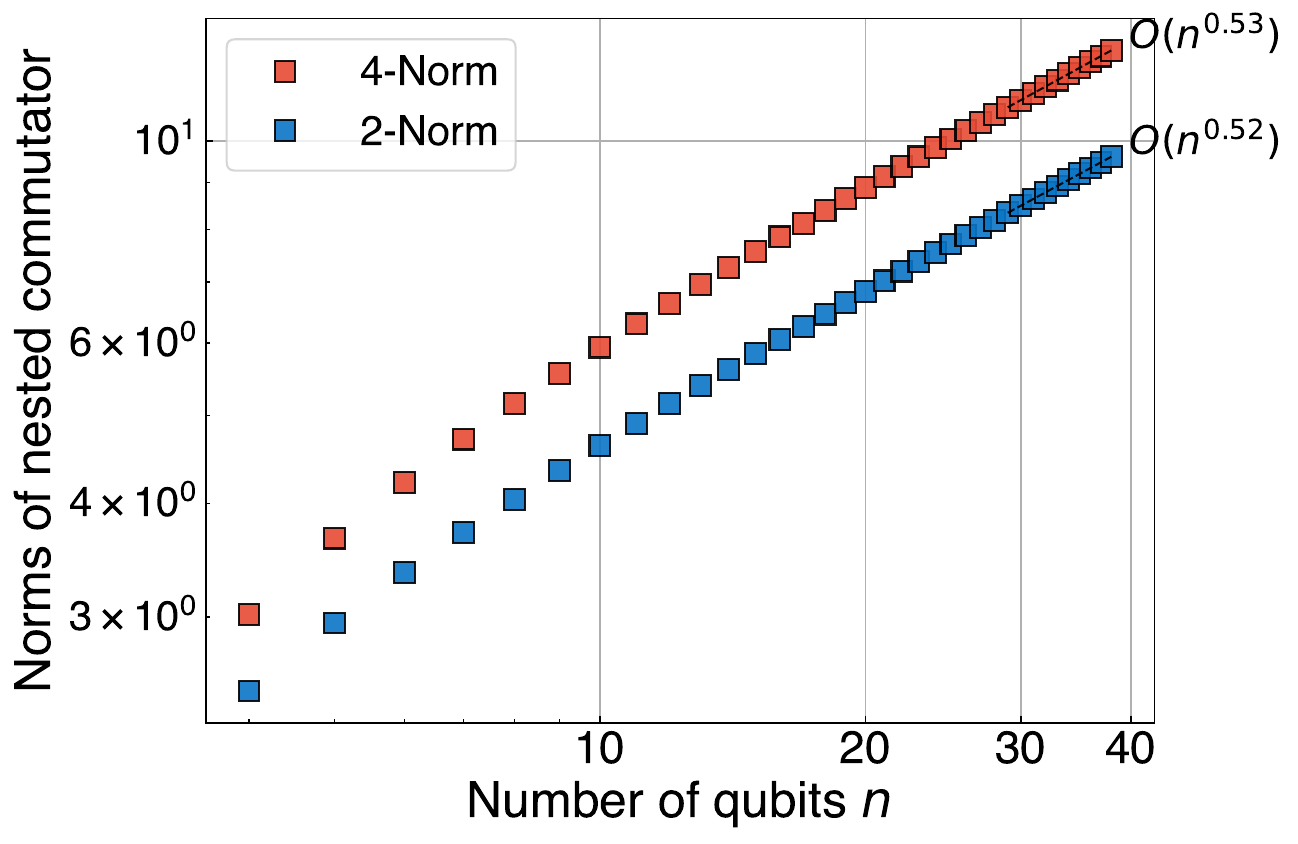}}
    \caption{The numerical evidence for scaling properties of 4-norms. \textbf{(a)} 4-norms of the $50$-qubit summation observables $O=\sum_{m=1}^MO_m$.
    For each number of summands $M$, we sample 100 independent sets of $\{O_m\}$
    of $m$ randomly sampled Pauli operators. \textbf{(b)} Different norms of the nested commutators of power-law Hamiltonian in Eq.~\eqref{eq:powerlaw}. 
    We calculate the 2-norms and 4-norms of the nested commutator comes from the product-formula errors.} 
    \label{fig:4norm}
\end{figure*}

For the random-input simulation, we use the inequalities among different norms for an arbitrary matrix $A$
\begin{gather}
    \frac{\|A\|_2}{\sqrt{d}}\leq\|A\|.
\end{gather}
Therefore, we bound the nested commutator $T_2=\order{n}$ according to the analysis in \cite{childs2021theory}.
It is still noteworthy that empirically this $T_2$ can be quadratically better than this bound, as shown in the next section.
The major advantage then comes from the observable $\|O\|_2/\sqrt{d}$.
According to the same analysis, we can bound the Schattern 2-norm for an evenly distributed Pauli-summation observable by $\order{\max_m\|O_m\|\sqrt{M}}$.

\section{Numerical Details}\label{sec:append-num}
In our numerical results, we often compare results among different error analyses.
Here, we specify the bounds that are used in our numerical evaluations.
We start with the worst-case bound.
\begin{proposition}[Restatement of Prop.~10 in~\cite{childs2021theory}]\label{prop:tightbound}
    Let $H=\sum_{\gamma=1}^\Gamma H_{\gamma}$ be the decomposition.
    The additive error of a second-order Suzuki-Trotter formula to simulate $O$ for time $t$ with step $r$ can be bounded as
    \begin{align}
        \frac{t^3\|O\|}{6r^2} \Bigg(\sum_{\gamma_1=1}^{\Gamma-1}
        \norm{\qty[\sum_{\gamma_3=\gamma_1+1}^\Gamma H_{\gamma_3},\qty[\sum_{\gamma_2=\gamma_1+1}^\Gamma H_{\gamma_2},H_{\gamma_1}]]}+
        \frac{1}{2} \sum_{\gamma_1=1}^{\Gamma-1}
        \norm{\qty[H_{\gamma_1},\qty[H_{\gamma_1}, \sum_{\gamma_2=\gamma_1+1}^\Gamma H_{\gamma_2}]]}\Bigg).
        \label{eq:tight_bound}
    \end{align}
\end{proposition}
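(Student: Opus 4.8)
The plan is to prove this as the operator-norm specialization of the additive-error analysis already carried out for Corollary~\ref{co:productrandom2}, combined with a conversion from the unitary error to the Heisenberg-picture observable error. First I would reduce the $r$-step error to a single-step error: by telescoping, $\|\mathscr{U}_0^r - \mathscr{U}^r\| \le r\|\mathscr{U}_0 - \mathscr{U}\|$ with $\mathscr{U}_0 = \mathrm{e}^{\ii H\tau}$, $\mathscr{U} = \mathscr{S}_2(\tau)$, and $\tau = t/r$; then unitary invariance and submultiplicativity of the operator norm give $\|\mathscr{U}_0^r O (\mathscr{U}_0^\dagger)^r - \mathscr{U}^r O (\mathscr{U}^\dagger)^r\| \le 2\|O\|\, r\, \|\mathscr{U}_0 - \mathscr{U}\|$. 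This isolates the single-step additive error $\mathscr{A} \coloneqq \mathscr{U}_0 - \mathscr{U}$ as the only quantity left to bound, and it is the prefactor $2r$ here that will later convert the per-step coefficients $\tfrac{1}{12},\tfrac{1}{24}$ into the stated $\tfrac{1}{6},\tfrac{1}{12}$.

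Next I would bound $\|\mathscr{A}\|$ for a single step, first treating the two-term case $H = A + B$ with $\mathscr{U} = \mathrm{e}^{\ii A\tau/2}\mathrm{e}^{\ii B\tau}\mathrm{e}^{\ii A\tau/2}$, using exactly the variation-of-parameters integral representation displayed in the proof of Corollary~\ref{co:productrandom2}: $\mathscr{A}$ is a triple time integral over $0\le\tau_3\le\tau_2\le\tau_1\le\tau$ of unitarily conjugated copies of the double commutators $[\ii B,[\ii B,\ii A/2]]$ and $[-\ii A/2,[-\ii A/2,-\ii B]]$. Since every matrix exponential appearing as a conjugation is unitary (norm one) and the nested integral contributes a volume factor $\tau^3/6$, the triangle inequality yields $\|\mathscr{A}\| \le \tfrac{\tau^3}{12}\|[B,[B,A]]\| + \tfrac{\tau^3}{24}\|[A,[A,B]]\|$. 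This is the operator-norm transcription of the Schatten-norm bound already established there; nothing changes except the choice of unitarily invariant norm.

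To handle the general decomposition $H = \sum_{\gamma=1}^\Gamma H_\gamma$, I would telescope over the summands as in that same proof: defining $\mathscr{U}[\gamma_0] = \mathrm{e}^{\ii H_{\gamma_0}\tau/2}\mathrm{e}^{\ii\sum_{\gamma>\gamma_0}H_\gamma\,\tau}\mathrm{e}^{\ii H_{\gamma_0}\tau/2}$ and $\mathscr{U}_0[\gamma_0] = \mathrm{e}^{\ii\sum_{\gamma\ge\gamma_0}H_\gamma\,\tau}$, the triangle inequality gives $\|\mathscr{A}\| \le \sum_{\gamma_1}\|\mathscr{U}_0[\gamma_1]-\mathscr{U}[\gamma_1]\|$, and each summand is a two-term second-order error with $A = H_{\gamma_1}$ and $B = \sum_{\gamma>\gamma_1}H_\gamma$. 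Applying the two-term bound term-by-term produces precisely the two commutator sums in the statement. Finally I would substitute $\tau = t/r$ and fold in $2\|O\|r$: the first sum acquires coefficient $2r\cdot\tfrac{t^3}{12r^3} = \tfrac{t^3}{6r^2}$ and the second acquires $2r\cdot\tfrac{t^3}{24r^3} = \tfrac{t^3}{12r^2}$, giving exactly $\tfrac{t^3\|O\|}{6r^2}\big(\text{(first sum)} + \tfrac12\,\text{(second sum)}\big)$.

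The main obstacle is the single-step integral representation itself — correctly deriving the variation-of-parameters expansion of the symmetric second-order formula and verifying that precisely the two double commutators $[B,[B,A]]$ and $[A,[A,B]]$ appear, with coefficients $\tfrac12$ and $\tfrac14$ after extracting the factors of $\ii$ and $A/2$. Once this identity is in hand, the remaining steps (telescoping over $\gamma_1$, passing to the Heisenberg picture, and collecting numerical prefactors) are routine. Since this integral identity is already established for the Schatten $2$- and $4$-norms in the proofs of Corollaries~\ref{co:productrandom2} and~\ref{co:productrandom1}, the present proof amounts to re-running that argument with the operator norm in place of $\|\cdot\|_2$, so the obstacle is essentially bookkeeping rather than a genuinely new difficulty.
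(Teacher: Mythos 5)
Your proposal is correct, and the constants all check out: the per\nobreakdash-step two\nobreakdash-term bound $\tfrac{\tau^3}{12}\|[B,[B,A]]\|+\tfrac{\tau^3}{24}\|[A,[A,B]]\|$, telescoped over $\gamma_1$ and multiplied by the step count $r$ and the Heisenberg-picture factor $2\|O\|$, reproduces exactly $\tfrac{t^3\|O\|}{6r^2}\bigl(\text{first sum}+\tfrac12\,\text{second sum}\bigr)$. One caveat on comparison: the paper does not prove this proposition at all — it is imported verbatim as Prop.~10 of~\cite{childs2021theory} and used as a black box in the numerics — so what you have written is a genuine reconstruction rather than a parallel of an in-paper argument. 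Your reconstruction is, however, essentially the same argument the paper runs for Cor.~\ref{co:productrandom2} (and that the cited source runs in the operator norm): the variation-of-parameters integral representation of the single-step additive error, the decomposition over $\gamma_1$ into two-term errors with $A=H_{\gamma_1}$, $B=\sum_{\gamma>\gamma_1}H_\gamma$, and norm bounds using unitary invariance. The only step stated loosely is the claim that ``the triangle inequality gives'' $\|\mathscr{A}\|\le\sum_{\gamma_1}\|\mathscr{U}_0[\gamma_1]-\mathscr{U}[\gamma_1]\|$: this is not a one-shot triangle inequality, since $\mathscr{U}_0-\mathscr{U}\neq\sum_{\gamma_1}(\mathscr{U}_0[\gamma_1]-\mathscr{U}[\gamma_1])$; it requires recursively peeling off the outer factors $\mathrm{e}^{\ii H_{\gamma_1}\tau/2}$ of the symmetric formula and invoking unitary invariance of the norm at each level. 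The paper's own corollary proof elides this in the same way, so it is a presentational gloss rather than a gap, but spelling out that recursion would make your proof self-contained.
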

\noindent For our short-time bounds, we can also derive the explicit versions based on Proposition~\ref{prop:tightbound} as:
\begin{proposition}[Explicit Version of Thm.~\ref{thm:ap-single}]
    Consider a local observable $O$ with support $S$.
Suppose the $n$-qubit $H$ is $\ell$-local with a constant $\ell$ and 
has bounded interaction per qubit. 
The simulation error of $O$ by an $r$-step second-order $U$ from Alg.~\ref{alg:ap-rpf} is bounded by 
    \begin{align}
        \|\mathrm{e}^{\ii Ht}O\mathrm{e}^{-\ii Ht}-U OU^\dagger\|\leq\frac{t^3\|O\|}{6r^2}\Bigg(\sum_{k_1=1}^{r\Upsilon}
        \norm{\qty[\sum_{k_3=k_1+1}^{r\Upsilon+1} H^S_{k_3},\qty[\sum_{k_2=k_1+1}^{r\Upsilon+1} H^S_{k_2},H^S_{k_1}]]}+
        \frac{1}{2} \sum_{k_1=1}^{r\Upsilon}\norm{\qty[H^S_{k_1},\qty[H^S_{k_1}, \sum_{k_2=k_1+1}^{r\Upsilon+1} H^S_{k_2}]]}\Bigg),
        \label{eq:tight_bound_single}
    \end{align}
    where we adopt the decomposition in Def.~\ref{def:step_edge} and $H^S_{r\Upsilon+1}$ consists all other terms in $H$.
\end{proposition}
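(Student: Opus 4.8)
The plan is to deduce the explicit second-order bound directly from the restated tight estimate, Prop.~\ref{prop:tightbound}, applied to the standard Suzuki--Trotter formula that underlies the reduced circuit of Alg.~\ref{alg:ap-rpf}, with the decomposition fixed to the interactive decomposition $\{H^S_k\}$ of Def.~\ref{def:step_edge}. The only genuinely new ingredient relative to the asymptotic Thm.~\ref{thm:ap-single} is that we specialize to $p=2$ and keep the nested commutators intact instead of coarsely relaxing them to $1$-norms; the structural reduction is inherited verbatim.

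First I would set up the fixed decomposition $H=\sum_{k=0}^{r\Upsilon+1}H^S_k$, where $H^S_0,\dots,H^S_{r\Upsilon}$ are as in Def.~\ref{def:step_edge} and $H^S_{r\Upsilon+1}$ absorbs every remaining Pauli term of $H$ as a single tail block. Let $\mathscr{S}$ denote the standard $r$-step second-order Suzuki--Trotter formula built from this decomposition with the even-odd permutation~\eqref{eq:evenodd}. The crucial claim, established exactly as in the proof of Thm.~\ref{thm:ap-single}, is the operator identity $U O U^\dagger=\mathscr{S}\,O\,\mathscr{S}^\dagger$: by Lemma~\ref{lm:ap-optimal}, the even-odd permutation together with the interactive decomposition confines the Heisenberg-evolved support of $O$ to $\bigcup_{k=0}^{r\Upsilon}E^S_k$ throughout all $r$ steps, so every exponential that Alg.~\ref{alg:ap-rpf} omits is, at the instant it would be applied, disjoint from the current support and hence commutes with the evolved observable. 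Removing it therefore leaves the conjugation of $O$ unchanged, and $U$ reproduces $\mathscr{S}$ on $O$.

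With this identity in hand, the error $\|\mathrm{e}^{\ii Ht}O\mathrm{e}^{-\ii Ht}-U O U^\dagger\|$ equals the Heisenberg-picture simulation error of the standard formula $\mathscr{S}$ for the observable $O$, which is precisely what Prop.~\ref{prop:tightbound} bounds (the $\|O\|$ factor is already packaged there). Substituting $\Gamma=r\Upsilon+2$ and relabeling the summation index so that the decomposition runs over $k\in\{0,\dots,r\Upsilon+1\}$ reproduces the two commutator sums in the statement, with the inner sums $\sum_{k_2=k_1+1}^{r\Upsilon+1}$ and $\sum_{k_3=k_1+1}^{r\Upsilon+1}$ ranging over all edge-set blocks that the ordering places after $H^S_{k_1}$, including the tail.

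The main obstacle is the bookkeeping that reconciles the even-odd permutation with the natural left-to-right ordering assumed by Prop.~\ref{prop:tightbound}. Here the nearest-neighbor structure of the interactive decomposition --- namely $[H^S_{k_2},H^S_{k_1}]=0$ whenever $|k_2-k_1|>1$, which follows from $S(H^S_k)=E^S_{k-1}\cup E^S_k$ --- does the heavy lifting: it guarantees that the edge-set-indexed sums $\sum_{k_2>k_1}$ upper-bound the permuted commutator sums, and it confines every nonzero commutator involving the tail $H^S_{r\Upsilon+1}$ to the boundary layer $E^S_{r\Upsilon}$. This is the explicit counterpart of the ``substitute the innermost tail block'' step used around Eq.~\eqref{eq:nc} in the asymptotic proof, and it is what ultimately renders the resulting bound independent of the system size.
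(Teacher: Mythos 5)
Your proposal is correct and rests on the same two pillars as the paper's proof: the light-cone/commutation argument descending from Lemma~\ref{lm:ap-optimal}, and the explicit second-order bound of Prop.~\ref{prop:tightbound}. But you organize the reduction differently. The paper's (one-line) proof recycles the virtual-product-formula machinery from the proof of Thm.~\ref{thm:ap-single}: every step $j$ carries its own adaptive decomposition $H=H^S_0+\cdots+H^S_{j\Upsilon}+H_{\text{other},j}$, the single-step explicit bound replaces the asymptotic Eq.~\eqref{eq:rpf_trotter} inside each step, and the $r$ contributions are combined by the triangle inequality, using that earlier steps' commutator sums are dominated by the $r$th step's. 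You instead freeze one global decomposition $\{H^S_0,\dots,H^S_{r\Upsilon},H^S_{r\Upsilon+1}\}$ for all $r$ steps, prove the identity $UOU^\dagger=\mathscr{S}O\mathscr{S}^\dagger$ for the corresponding standard formula $\mathscr{S}$, and invoke the $r$-step bound once. Your identity does hold, for exactly the reason you give: under the even-odd ordering the tail block (and every block beyond the frontier) lies in the opposite parity group from the frontier block of that stage, so it is always applied while still disjoint from the current support---the same stage-by-stage argument the paper runs for its per-step tails. What the paper's route buys is literal reuse of an already-proven theorem; what yours buys is elimination of the per-step bookkeeping and of the ``earlier steps are dominated by the last step'' comparison, producing the stated bound in a single application.

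One caveat, shared equally by both routes: Prop.~\ref{prop:tightbound} indexes its sums by the order of application in the formula, whereas Alg.~\ref{alg:ap-rpf} uses the even-odd order, not the sequential order $0,1,2,\dots$. For an even block $k_1$, all odd blocks---including $H^S_{k_1-1}$---come after it in that ordering, so the literal bound contains terms of the form $\norm{[H^S_{k_1-1}+H^S_{k_1+1}+H^S_{k_1+2},[H^S_{k_1-1}+H^S_{k_1+1},H^S_{k_1}]]}$ rather than the stated $\norm{[H^S_{k_1+1}+H^S_{k_1+2},[H^S_{k_1+1},H^S_{k_1}]]}$. Your nearest-neighbor observation ($[H^S_{k_2},H^S_{k_1}]=0$ whenever $|k_2-k_1|>1$) reconciles the two only after regrouping by the triangle inequality, i.e., up to constant factors, not term by term; the paper's one-line proof leaves exactly the same step unaddressed (as does the stated lower limit $k_1=1$, which drops a generically nonzero $k_1=0$ term that both derivations produce). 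This is a bookkeeping defect of the statement itself, not a gap in your argument relative to the paper's.
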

\begin{proof}
    The proof is simply from replacing Eq.~\eqref{eq:rpf_trotter} by Proposition~10 in~\cite{childs2021theory}.
\end{proof}

\begin{proposition}[Explicit Version of Thm.~\ref{thm:ap-multiple}]
Consider the summation $O=\sum_{m=1}^MO_m$ where every $O_m$ is a local observable with support $S(O_m)$.
Suppose the $n$-qubit $H$ is $\ell$-local with a constant $\ell$ and 
has bounded interaction per qubit. 
The simulation error of $O$ by an $r$-step second-order $U$ from Alg.~\ref{alg:ap-mpf} is bounded by 
    \begin{align}
        \|\mathrm{e}^{\ii Ht}O\mathrm{e}^{-\ii Ht}-U OU^\dagger\|\leq\sum_{m=1}^M\frac{t^3\|O_m\|}{6r^2}&\Bigg(\sum_{c_1=1}^{\chi}
        \norm{\qty[\sum_{c_3=c_1+1}^{\chi+1} H_{c_3,m},\qty[\sum_{c_2=c_1+1}^{\chi+1} H_{c_2,m},H_{c_1,m}]]}\notag\\
        &+
        \frac{1}{2} \sum_{c_1=1}^{\chi}\norm{\qty[H_{c_1,m},\qty[H_{c_1,m}, \sum_{c_2=c_1+1}^{\chi+1} H_{c_2,m}]]}\Bigg),
        \label{eq:tight_bound_multi}
    \end{align}
    where for each $m$ $\{H_{c,m}\}_c$ is a decomposition with $H_{c,m}=\sum_{\substack{\gamma:\varphi(S_\gamma)=c\\S_\gamma\cap\bigcup_{k=0}^{r(\chi-1)\Upsilon+1}E_k^{S(O_m)}\neq\emptyset}}H_{S_\gamma}$ for $c\in[1,\chi]$ and $H_{\chi+1,m}$ consists all other terms in $H$.
\end{proposition}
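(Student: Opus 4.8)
The plan is to mirror the proof of Theorem~\ref{thm:ap-multiple} but to replace its asymptotic nested-commutator estimate in Eq.~\eqref{eq:errorbound2} with the explicit second-order bound of Proposition~\ref{prop:tightbound}, exactly as the explicit version of Thm.~\ref{thm:ap-single} was derived by substituting into Eq.~\eqref{eq:rpf_trotter}. First I would split the global error by the triangle inequality across the summands, writing $\|\mathrm{e}^{\ii Ht}O\mathrm{e}^{-\ii Ht}-UOU^\dagger\|\le\sum_{m=1}^M\|\mathrm{e}^{\ii Ht}O_m\mathrm{e}^{-\ii Ht}-UO_mU^\dagger\|$, so that it suffices to bound each summand separately and reassemble.

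For a fixed $O_m$, I would invoke Lemma~\ref{lm:ap-global_support} to guarantee that the support of $O_m(t)$ under $U$ never leaves $\bigcup_{k=0}^{(\chi-1)r\Upsilon+1}E_k^{S(O_m)}$ at any intermediate time. By construction every sub-Hamiltonian collected in the tail $H_{\chi+1,m}$ is supported outside this region, hence commutes with $O_m(t)$ throughout the circuit. The standard light-cone argument used elsewhere in the paper then lets me relocate all tail exponentials to the innermost ($\gamma=\chi+1$) position of each second-order step without altering the evolution of $O_m$; that is, $UO_mU^\dagger=\mathscr{S}_{V,m}O_m\mathscr{S}_{V,m}^\dagger$, where $\mathscr{S}_{V,m}$ is genuinely an $r$-step standard second-order Suzuki-Trotter formula for $H$ with the ordered $(\chi+1)$-piece decomposition $H=\sum_{c=1}^{\chi+1}H_{c,m}$ (colors $1,\dots,\chi$ restricted to the light cone, with $H_{\chi+1,m}$ placed last).

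With this reduction in hand the remaining step is a direct application of Proposition~\ref{prop:tightbound} to $\mathscr{S}_{V,m}$ with $\Gamma=\chi+1$; since that proposition already incorporates the $r$-step prefactor $t^3/(6r^2)$, no separate per-step summation is needed. Setting $H_{\gamma}\mapsto H_{\gamma,m}$ reproduces precisely the two triangle-structured nested-commutator sums of Eq.~\eqref{eq:tight_bound_multi}, with the innermost index $c_1$ ranging over $1,\dots,\chi=\Gamma-1$ and the tail $H_{\chi+1,m}$ appearing only in the outer commutator slots. Summing the per-summand bounds over $m$ then yields the stated estimate.

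The main obstacle I expect is justifying that $\mathscr{S}_{V,m}$ is legitimately a standard symmetric second-order formula to which Proposition~\ref{prop:tightbound} applies verbatim. Two points need care: first, that the final (size-$r$) light cone, rather than the growing per-step light cones of the asymptotic argument, may be used uniformly across all $r$ steps, which holds precisely because Lemma~\ref{lm:ap-global_support} bounds the support at \emph{every} intermediate time and so the tail commutes with $O_m(t)$ at all stages; and second, that moving the tail terms to the innermost slot preserves the back-and-forth color ordering demanded by the symmetric second-order formula, so that the decomposition order $1,\dots,\chi,\chi+1$ matches the hypothesis of Proposition~\ref{prop:tightbound}. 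As in the explicit version of Thm.~\ref{thm:ap-single}, once these structural checks are settled the inequality follows with no further commutator estimation.
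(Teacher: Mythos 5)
Your proof is correct and follows the same skeleton as the paper's: a triangle inequality over the summands $O_m$, reduction to a virtual second-order formula via the light-cone/commutation argument backed by Lemma~\ref{lm:ap-global_support}, and then the explicit commutator bound of Prop.~\ref{prop:tightbound} in place of the asymptotic Eq.~\eqref{eq:errorbound2}. The one place you genuinely deviate is in how the $r$ steps are handled. The paper keeps the step-dependent virtual formulas from the proof of Thm.~\ref{thm:ap-multiple} (step $j$ uses the step-$j$ light cone), applies the explicit bound stepwise, and then collects the factor of $r$ by asserting that ``all the errors in former steps are smaller than that of the $r$th step'' --- a monotonicity claim that is not automatic for operator norms of commutators (enlarging a sub-Hamiltonian by extra Pauli terms can in principle shrink a commutator norm through cancellation; it becomes transparent only after the 1-norm relaxation used in the numerics). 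You instead freeze the decomposition at the final, step-$r$ light cone $\bigcup_{k=0}^{r(\chi-1)\Upsilon+1}E_k^{S(O_m)}$ uniformly across all $r$ steps, verify by the same commutation argument that this uniform virtual formula still reproduces $UO_mU^\dagger$ (the tail is disjoint from the final light cone, which by Lemma~\ref{lm:ap-global_support} dominates the support of the evolved $O_m$ at every intermediate time, so the tail exponentials never act in either circuit), and then invoke Prop.~\ref{prop:tightbound} once as a genuine $r$-step bound with $\Gamma=\chi+1$. This sidesteps the monotonicity step entirely and is, if anything, cleaner bookkeeping; the structural checks you flag --- that merging the out-of-light-cone terms into the innermost slot preserves the symmetric second-order ordering, and that the uniform light cone bounds all intermediate supports --- are exactly the right ones, and your argument handles both correctly.
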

\begin{proof}
    The proof is similar.
    Replace the asymptotic Eq.~\eqref{eq:errorbound2} by Proposition~10 in~\cite{childs2021theory}, and notice that all the errors in former steps are smaller than that of the $r$th step.
    We can complete the proof.
\end{proof}
For the case of large $n$, it is hard to calculate the operator norms $\|\cdot\|$ as in these propositions.
To figure this out, we replace all the operator norms with the 1-norms $\|\cdot\|_1$, which calculate the sum of absolute values of Pauli coefficients of the operator.
This calculation is scalable according to the Pauli algebra.

For Figures 3(a) and 3(b) in the main text, we adopt the mixed-field Ising Hamiltonian and transverse-field Ising Hamiltonian, respectively.
The worst-case bounds in these figures are calculated from Proposition~\ref{prop:tightbound} by decomposing the Hamiltonians according to Pauli stabilizer groups of their interactions ($X$-$Y$-$Z$ decomposition).
This is a commonly used decomposition in the standard implementation of the product formulas.
Our bounds employ the edge-set and chromatic decompositions as specified in the requirements of Theorem~\ref{thm:ap-single} and~\ref{thm:ap-multiple}.

For Figure 4 in the main text, we compare the simulation distances between the worst-case bound in Proposition~\ref{prop:tightbound}, the following Proposition~\ref{prop:randomwo} without observable knowledge, and our average-error bound in Corollary~\ref{co:productrandom2}.
\begin{proposition}[Restatement of Thm.~3 in~\cite{zhao2022hamiltonian}]\label{prop:randomwo}
    For the second-order product formula, we get the average simulation distance by 
\begin{gather*}
    D(\mathscr{U}_0^r,\mathscr{U}^r)_{O,\mu_1}\leq\frac{t^3\|O\|}{6r^2\sqrt{d}}\left(\sum_{\gamma_1=1}^\Gamma\left\|\left[\sum_{\gamma_2=\gamma_1+1}^\Gamma H_{\gamma_2},\left[\sum_{\gamma_3=\gamma_1+1}^\Gamma H_{\gamma_3},H_{\gamma_1}\right]\right]\right\|_2+\frac{1}{2}\sum_{\gamma_1=1}^\Gamma\left\|\left[H_{\gamma_1},\left[H_{\gamma_1},\sum_{\gamma_2=\gamma_1+1}^\Gamma H_{\gamma_2}\right]\right]\right\|_2\right),
\end{gather*}
with $H=\sum_{\gamma=1}^\Gamma H_\gamma$ is the decomposition.
\end{proposition}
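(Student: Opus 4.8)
The plan is to obtain this bound in two stages, reusing the machinery already assembled for Theorem~\ref{thm:ap-random} and Corollary~\ref{co:productrandom2}. First I would prove a general single-step bound valid for \emph{any} $r$-step approximation $\mathscr{U}$, namely $D(\mathscr{U}_0^r,\mathscr{U}^r)_{O,\mu_1}\le 2r\|\mathscr{M}\|_2\|O\|/\sqrt{d}$, and only afterwards specialize to the second-order Suzuki formula by inserting the commutator bound on $\|\mathscr{M}\|_2$. The final inequality then follows by a short arithmetic collection of the powers of $t/r$.

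For the first stage I would follow the telescoping argument in the proof of Theorem~\ref{thm:ap-random} essentially verbatim: the difference of the two measurement expectations is written as a sum of $r$ terms, each of which swaps a single factor $\mathscr{U}_0$ for $\mathscr{U}$ and is therefore exactly linear in the additive error $\mathscr{A}=\mathscr{U}_0-\mathscr{U}$. Applying the Cauchy--Schwarz inequality $|\langle a|b\rangle|\le\sqrt{\langle a|a\rangle}\,\sqrt{\langle b|b\rangle}$ to each term, followed by the Cauchy--Schwarz inequality for the $\mu_1$-integral, reduces every summand to a single-copy moment $\int_\psi\ket{\psi}\bra{\psi}\,d\mu_1=I/d$, which is exactly what a 1-design guarantees. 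The one place I would deviate from the proof of Theorem~\ref{thm:ap-random} is the final trace estimate: instead of splitting $\Tr(\mathscr{A}^\dagger\mathscr{A}\,\tilde{O}^\dagger\tilde{O})$ by a 4-norm Cauchy--Schwarz, I would use the H\"older-type bound $\Tr(\mathscr{A}^\dagger\mathscr{A}\,\tilde{O}^\dagger\tilde{O})\le\|\tilde{O}\|^2\,\Tr(\mathscr{A}^\dagger\mathscr{A})=\|O\|^2\|\mathscr{A}\|_2^2$, where $\tilde{O}$ is $O$ conjugated by intermediate unitaries and hence has the same operator norm. Summing the $2r$ contributions and using $\|\mathscr{A}\|_2=\|\mathscr{M}\|_2$ (since $\mathscr{A}=-\mathscr{U}_0\mathscr{M}$) yields the general bound; this is precisely Theorem~3 of~\cite{zhao2022hamiltonian}, so this stage may simply be invoked.

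In the second stage I would set $\mathscr{U}=\mathscr{S}_2(t/r)$ and import the explicit triangle bound on the per-step multiplicative error already derived inside the proof of Corollary~\ref{co:productrandom2}, namely $\|\mathscr{M}\|_2\le\frac{\tau^3}{12}\sum_{\gamma_1=1}^{\Gamma}\|[\sum_{\gamma_2=\gamma_1+1}^{\Gamma}H_{\gamma_2},[\sum_{\gamma_3=\gamma_1+1}^{\Gamma}H_{\gamma_3},H_{\gamma_1}]]\|_2+\frac{\tau^3}{24}\sum_{\gamma_1=1}^{\Gamma}\|[H_{\gamma_1},[H_{\gamma_1},\sum_{\gamma_2=\gamma_1+1}^{\Gamma}H_{\gamma_2}]]\|_2$ with $\tau=t/r$, which itself comes from the integral representation of the second-order additive error and a termwise triangle inequality over the $\Gamma$ partial sums. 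Substituting this into $2r\|\mathscr{M}\|_2\|O\|/\sqrt{d}$ and collecting the powers gives the prefactor $2r\cdot(t/r)^3=2t^3/r^2$, so the $\tfrac{1}{12}$ term becomes $\tfrac{t^3}{6r^2}$ and the $\tfrac{1}{24}$ term becomes $\tfrac{t^3}{12r^2}$, reproducing the stated inequality together with its relative factor of $\tfrac12$.

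The only genuinely delicate point, and the step I would check most carefully, is the first stage: one must order the two Cauchy--Schwarz applications so that \emph{only} single-copy moments survive, since $\mu_1$ is merely a 1-design and cannot evaluate the second moment $\int_\psi\ket{\psi}\bra{\psi}^{\otimes2}$. Once that is arranged, everything downstream — the replacement of the 4-norm split by the operator-norm H\"older bound, the identity $\|\mathscr{A}\|_2=\|\mathscr{M}\|_2$, the unitary-invariance of $\|O\|$, and the second-order commutator evaluation — is routine bookkeeping.
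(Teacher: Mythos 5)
Your proposal is correct, but it is worth noting that the paper itself never proves this proposition: it is imported verbatim as Theorem~3 of~\cite{zhao2022hamiltonian} and used only as a comparison baseline in the numerics, so there is no in-paper proof to match against. What you have done is reconstruct that external result self-containedly from the paper's own machinery, and the reconstruction is sound. Your stage one is the telescoping argument of Theorem~\ref{thm:ap-random} with exactly one substitution: where that proof closes with the Cauchy--Schwarz split $|\Tr(AB)|\le\sqrt{\Tr(AA^\dagger)}\sqrt{\Tr(BB^\dagger)}$ (yielding $\|\mathscr{A}\|_4\|O\|_4$), you instead apply the H\"older-type bound $\Tr(\mathscr{A}^\dagger\mathscr{A}\,\tilde O^\dagger\tilde O)\le\|\tilde O\|^2\Tr(\mathscr{A}^\dagger\mathscr{A})$, valid since both factors are positive semidefinite and $\|\tilde O\|=\|O\|$ by unitary invariance; together with $\|\mathscr{A}\|_2=\|\mathscr{M}\|_2$ (from $\mathscr{A}=-\mathscr{U}_0\mathscr{M}$) this gives the general bound $2r\|\mathscr{M}\|_2\|O\|/\sqrt{d}$, which is exactly the inequality the paper quotes from~\cite{zhao2022hamiltonian} in its main text. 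Your stage two correctly imports the per-step bound $\|\mathscr{M}\|_2\le\frac{\tau^3}{12}\sum_{\gamma_1}\|[\sum_{\gamma_2>\gamma_1}H_{\gamma_2},[\sum_{\gamma_3>\gamma_1}H_{\gamma_3},H_{\gamma_1}]]\|_2+\frac{\tau^3}{24}\sum_{\gamma_1}\|[H_{\gamma_1},[H_{\gamma_1},\sum_{\gamma_2>\gamma_1}H_{\gamma_2}]]\|_2$, which is established inside the proof of Corollary~\ref{co:productrandom2}, and the arithmetic $2r\cdot(t/r)^3/12=t^3/(6r^2)$ reproduces the stated prefactor and the relative factor of $\tfrac12$. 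You also correctly identify the one genuinely delicate point: with a 1-design only single-copy moments $\int_\psi\op{\psi}\,d\mu_1=I/d$ are available, so the two Cauchy--Schwarz applications must be ordered so that each telescoped term reduces to a single trace before the ensemble average is taken --- precisely the ordering used in the proof of Theorem~\ref{thm:ap-random}. The net benefit of your route is that it makes the cited proposition a corollary of results already proved in the paper rather than an external import, and it clarifies how the operator-norm/2-norm bound of~\cite{zhao2022hamiltonian} and the paper's 4-norm bound (Theorem~\ref{thm:ap-random}) differ only in the final trace estimate.
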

For Figure~4(a), we choose the $X$-$Y$-$Z$ decomposition of the power-law Hamiltonian for all these three bounds.
For Figure~4(b), since the Hamiltonian is a complicated molecular one consisting of some anti-commutative Pauli components, we first execute the greedy method to group the Pauli decomposition into different commutative sets.
This results in a decomposition of the Hamiltonian, which is employed for implementing both the product formulas and the measurements.

\section{Additional Numerical Results}\label{sec:append-additional}
In Proposition~\ref{prop:summation}, we have shown that the 4-norm of summation observables $O=\sum_{m=1}^MO_m$ with uniform coefficients can be upper and lower bounded by $\order{\sqrt[4]{M^3}}$ and $\Omega(\sqrt{M})$.
Nevertheless, we find that 4-norms in this case are most likely scaling similarly to the lower bound $\Theta(\sqrt{M})$, as shown in Fig.~\ref{fig:4norm}(a).
Numerically, we randomly sample 100 sets of $M$ 50-qubit Pauli operators for each $M$ as the summands and uniformly assign coefficients to be 1.
By calculating the normalized 4-norms of these constructed observables, we can form a box to represent the concentration of these 4-norms for every $M$.
To study the scaling properties of the 4-norm of a randomly chosen $O$, we fit medians of all boxes along the increasing number $M$.
This suggests the potential improvements for practical cases from the upper bound.

We further explore the numerical behavior of the 2-norms and 4-norms of nested commutators as implied from Corollary~\ref{co:productrandom2} and~\ref{co:productrandom1}.
As illustrated in the preceding section, we find the scaling of the 2- and 4-norms of the nested commutators for simulating power-law Hamiltonians challenging to study theoretically. 
Nevertheless, we can get some intuition by viewing a specific case through numerical results.
Here we adopt a typical one-dimensional power-law Hamiltonian $H$ as
\begin{gather}\label{eq:powerlaw}
    H=\sum_{i=1}^n\sum_{j=i+1}^n\frac{J}{(j-i)^\alpha}X_iX_j+h\sum_{i=1}^nZ_i,
\end{gather}
with $J=1$, $h=0.2$, and $\alpha=4$.
We decompose this Hamiltonian according to the Pauli stabilizers as
\begin{gather}
    H_1=\sum_{i=1}^n\sum_{j=i+1}^n\frac{J}{(j-i)^\alpha}X_iX_j,\ H_2=h\sum_{i=1}^nZ_i.
\end{gather}
We then calculate the norms of the nested commutator terms as 
\begin{gather}
\|[H_2,[H_2,H_1]]\|_2+\frac{1}{2}\|[H_1,[H_2,H_1]]\|_2,\\
    \|[H_2,[H_2,H_1]]\|_4+\frac{1}{2}\|[H_1,[H_2,H_1]]\|_4.
\end{gather}
As shown in Fig.~\ref{fig:4norm}(b), the numerical result reveals that the 2-norm and 4-norm of the nested commutator are scaling closely to $\Theta(\sqrt{n})$, which sheds light on the potential empirical improvements from the trivial $\order{n}$ bound.

\end{document}